\def\showdraftbox{1}
\def\showauthornotes{1}
\newcommand{\defeq}{\stackrel{\textup{def}}{=}}
\newtheorem{theorem}{Theorem}[section]
\newtheorem{lemma}[theorem]{Lemma}
\newtheorem{definition}[theorem]{Definition} 
\newtheorem{fact}[theorem]{Fact}
\newtheorem{remark}[theorem]{Remark}
\newcommand{\diag}[1]{{\bf Diag}\left({#1}\right)}
\newcommand{\nfrac}[2]{\nicefrac{#1}{#2}}
\def\abs#1{\left| #1 \right|}
\renewcommand{\norm}[1]{\ensuremath{\left\lVert #1 \right\rVert}}
\newcommand{\ceil}[1]{\left\lceil\, {#1}\,\right\rceil}
\newcommand\rea{\mathbb R}
\newcommand{\marginlabel}[1]%
{\mbox{}\marginpar{\it{\raggedleft\hspace{0pt}#1}}}
\newcommand{\poly}{\mathrm{poly}}
\newcommand{\polylog}{{\mathrm{\mbox{polylog}}}}
\newcommand\calC{\mathcal{C}}
\definecolor{Mygray}{gray}{0.8}
\let\csname ifcommentflag\expandafter\endcsname
\newcommand{\Authornote}[2]{{\sf\color{red}{[#1: #2]}}}
\newcommand{\Authoredit}[2]{{\sf\color{red}{[#1]}\color{blue}{#2}}}
\newcommand{\Authorcomment}[2]{{\sf \color{gray}{[#1: #2]}}}
\newcommand{\Authorfnote}[2]{\footnote{\color{red}{#1: #2}}}
\newcommand{\Authorfixme}[1]{\Authornote{#1}{\textbf{??}}}
\newcommand{\Authormarginmark}[1]{\marginpar{\textcolor{red}{\fbox{
#1:!}}}}
\newcommand{\Authornote}[2]{}
\newcommand{\Authoredit}[2]{}
\newcommand{\Authorcomment}[2]{}
\newcommand{\Authorfnote}[2]{}
\newcommand{\Authorfixme}[1]{}
\newcommand{\Authormarginmark}[1]{}
\newcommand{\paren}[1]{\left({#1}\right)}
\newenvironment{fminipage}%
  {\begin{Sbox}\begin{minipage}}%
  {\end{minipage}\end{Sbox}\fbox{\TheSbox}}
\newlength{\pgmtab}  
\newcommand {\ELSE}{{\bf else\ }}
\newcommand {\IF}{{\bf if\ }}
\newcommand {\FOR}{{\bf for\ }}
\newcommand {\RETURN}{\mbox{\bf return\ }}
\def\qedsketch{\ifmmode\Box\else{\unskip\nobreak\hfil
\penalty50\hskip1em\null\nobreak\hfil$\Box$
\parfillskip=0pt\finalhyphendemerits=0\endgraf}\fi}
\newenvironment{proof}{\begin{trivlist} \item {\bf Proof:~~}}
   {\hfill $\Box$ \end{trivlist}}
\newenvironment{proofof}[1]{\begin{trivlist} \item {\bf Proof
#1:~~}}
  { \hfill $\Box$ \end{trivlist}}
\newlength{\tpush}
\newcommand{\handout}[5]{
   \noindent
   \begin{center}
   \framebox{ \vbox{ \hbox to \textwidth { {\bf \coursenum\ :\  \coursename} \hfill #5 }
       \vspace{3mm}
       \hbox to \textwidth { {\Large \hfill #2  \hfill} }
       \vspace{1mm}
       \hbox to \textwidth { {\it #3 \hfill #4} }
     }
   }
   \end{center}
   \vspace*{4mm}
   \newcommand{\lecturenum}{#1}
   \addcontentsline{toc}{chapter}{Lecture #1 -- #2}
}
\newtheorem{claim}[theorem]{Claim}
\def\prob#1#2{{\mathbb{P}}_{#1}\left[ #2 \right]}
\def\defeq{\stackrel{\mathrm{def}}{=}}
\def\setof#1{\left\{#1  \right\}}
\def\ceil#1{\left\lceil #1 \right\rceil}
\DeclareMathOperator*{\trop}{Tr}
\def\abs#1{\left|#1  \right|}
\def\norm#1{\left\| #1 \right\|}
\def\calC{\mathcal{C}}
\def\aa{\pmb{\mathit{a}}}
\newcommand\bb{\boldsymbol{\mathit{b}}}
\newcommand\cc{\boldsymbol{\mathit{c}}}
\newcommand\ee{\boldsymbol{\mathit{e}}}
\newcommand{\AND}{\quad \text{and} \quad}
\title{
Approximate
Gaussian Elimination for Laplacians \\
 -- Fast,
  Sparse, and Simple}
\author{
  Rasmus Kyng\thanks{Supported by 
   NSF grant CCF-1111257 and ONR Award N00014-16-1-2374.} \\
Yale University\\
rasmus.kyng@yale.edu
  \and
  Sushant Sachdeva\thanks{Supported by a Simons Investigator Award to Daniel A. Spielman.}\\
Yale University\\
sachdeva@cs.yale.edu
}
\newcommand{\eps}{\epsilon}
\newcommand{\lapid}{\Pi}
\DeclareMathOperator*{\expt}{\mathbb{E}}
\newcommand{\curlbr}[1]{\left\{#1  \right\}}
\newcommand{\sqbr}[1]{\left[#1  \right]}
\newcommand{\vecone}{\bm{1}}
\newcommand{\veczero}{\bm{0}}
\newcommand{\matzero}{\bm{0}}
\newcommand{\ind}[1]{\mathbbm{1}_{#1}}
\newcommand{\trunc}[1]{\widetilde{#1}}
\newcommand{\samp}[1]{\widehat{#1}}
\newcommand{\lnorm}[1]{\ensuremath{\overline{#1}}}
\newcommand{\vstar}[2]{\ensuremath{\left(#1\right)_{#2}}}
\newcommand{\mult}[2]{\ensuremath{\deg_{#1}(#2)}}
\newcommand{\seqexpt}[1]{\ensuremath{ 
\expt_{(#1)}}}
\newcommand{\truncSum}{truncated martingale}
\newcommand{\truncZ}{\trunc{Z}}
\newcommand{\seqsum}{bags-of-dice martingale}
\newcommand{\SEQSUM}{Bags-of-Dice Martingale}
\newcommand{\OR}{\text{ or }}
\newcommand{\goodUntil}{\mathcal{A}}
\newcommand{\assign}{\leftarrow}
\newcommand{\csamp}{\textsc{CliqueSample}}
\newcommand{\cholesky}{\textsc{SparseCholesky}}
\newcommand{\concChol}{\textsc{LowDegreeSparseCholesky}}
\newlength{\algtopspace}
\newlength{\algpostcaptionspace}
\newenvironment{tight_enumerate}{
\begin{enumerate}[itemsep=1.5pt,parsep=0pt,leftmargin=15pt,rightmargin=0pt,topsep=3pt]
\setlength{\parskip}{0pt}
}{\end{enumerate}}
\newenvironment{shiftedflalign*}{%
    \start@align\tw@\st@rredtrue\m@ne
    \hspace{1.5 em}
}{%
    \endalign
}
\begin{document}

\maketitle
\thispagestyle{empty}

\begin{abstract}

  We show how to perform sparse approximate Gaussian elimination for
  Laplacian matrices.  We present a simple, nearly linear time
  algorithm that approximates a Laplacian by a matrix with a sparse
  Cholesky factorization -- the version of Gaussian elimination for
  symmetric matrices.
  This is the first nearly linear time solver for Laplacian systems
  that is based purely on random sampling, and does not use any graph
  theoretic constructions such as low-stretch trees, sparsifiers, or
  expanders. The crux of our analysis 
is a novel
  concentration bound for matrix martingales where the differences
  are sums of conditionally independent variables.
\end{abstract}

\clearpage
\setcounter{page}{1}

\section{Introduction}
A symmetric matrix $L$ is called Symmetric and Diagonally Dominant
(SDD) if for all $i,$ $L(i,i) \ge \sum_{j \neq i} |L(i,j)|.$
An SDD matrix $L$ is a Laplacian if $L(i,j) \le 0$ for $i \neq j,$ and 
for all $i,$ $\sum_{j} L(i,j) = 0.$ A Laplacian matrix is
naturally associated with a graph on its vertices, where $i,j$ are
adjacent if $L(i,j) \neq 0.$
The problem of solving systems of linear equations $Lx=b,$ where $L$
is an SDD matrix (and often a Laplacian), is a fundamental primitive
and arises in varied applications in both theory and practice. Example
applications include solutions of partial differential equations via
the finite element method~\cite{Strang86, BomanHV04}, semi-supervised
learning on graphs~\cite{Zhu03, ZhouS04, ZhouBLWS04}, and computing
maximum flows in graphs~\cite{DaitchS08, ChristianoKMST10, Madry13,
  LeeS13}. It has also been used as a primitive in the design of several
fast algorithms~\cite{KelnerM09, OrecchiaSV12, KelnerMillerPeng,
  LevinKoutisPeng, KyngRS15}. It is known that solving SDD
linear systems can be reduced to solving Laplacian
systems~\cite{Gremban96}.

\paragraph{Cholesky Factorization.}
A natural approach to solving systems of linear equations is Gaussian
elimination, or its variant for symmetric matrices, Cholesky
factorization. Cholesky factorization of a matrix $L$
produces a factorization 
$L = \mathcal{L} \mathcal{D} \mathcal{L}^{\top},$ where $\mathcal{L}$
is a lower-triangular matrix, and $\mathcal{D}$ is a diagonal
matrix. Such a factorization allows us to solve a system $Lx = b$ by
computing $x = L^{-1} b = (\mathcal{L}^{-1})^{\top} \mathcal{D}^{-1}
\mathcal{L}^{-1} b,$ where the inverse of $\mathcal{L}$, and
$\mathcal{D}$ 
can be applied quickly since they are
lower-triangular and diagonal,
respectively.

The fundamental obstacle to using Cholesky factorization for quickly
solving systems of linear equations is that $\mathcal{L}$ can be a
dense matrix even if the original matrix $L$ is sparse. The reason is
that the key step in Cholesky factorization, eliminating a variable,
say $x_{i},$ from a system of equations, creates a new coefficient
$L^{\prime}({j,k})$ for every pair $j,k$ such that $L({j,i})$ and
$L({i,k})$ are non-zero. This phenomenon is called \emph{fill-in}.
For Laplacian systems, eliminating the first variable corresponds to
eliminating the first vertex in the graph, and the fill-in corresponds
to adding a clique on all the neighbors of the first
vertex. Sequentially eliminating variables often produces a sequence
of increasingly-dense systems, resulting in an $O(n^{3})$ worst-case
time even for sparse $L.$
Informally, the algorithm for generating the Cholesky factorization
for a Laplacian can be expressed as follows:
\begin{enumerate}[label={},itemsep=1pt,parsep=0pt,leftmargin=10pt,rightmargin=20pt,topsep=5pt]
\item[] \texttt{\FOR $i = 1$ to $n-1$
  \item[] \hspace{\pgmtab} Use equation $i$ to express
    the variable for vertex $i$ in terms of the remaining variables.  
  \item[] \hspace{\pgmtab} Eliminate vertex $i$, adding a clique on
    the neighbors of $i.$}
\end{enumerate}
Eliminating the vertices in an order given by a permutation $\pi$
generates a factorization
$ L= P_{\pi} \mathcal{L} \mathcal{D} \mathcal{L}^{\top}
P^{\top}_{\pi},$
where $P_{\pi}$ denotes the permutation matrix of $\pi$, i.e.,
$(P_{\pi}z)_{i} = z_{\pi(i)}$ for all $z.$ Though picking a good order
of elimination can significantly reduce the running time of Cholesky
factorization, it gives no guarantees for general systems, e.g., for
sparse expander graphs, every ordering results in an $\Omega(n^3)$
running time~\cite{LiptonRT79}.


%
%

\paragraph{Our Results.} 
In this paper, we present the first nearly linear time algorithm that
generates a sparse approximate Cholesky decomposition for Laplacian matrices, 
with provable approximation guarantees.
Our algorithm $\cholesky$ can be
described informally as follows (see Section~\ref{sec:algorithm} for a
precise description):
\begin{enumerate}[label={},itemsep=1pt,parsep=0pt,leftmargin=10pt,rightmargin=20pt,topsep=5pt]
\item[] \texttt{Randomly permute the vertices.
  \item[] \FOR $i = 1$ to $n-1$
  \item[] \hspace{\pgmtab} Use equation $i$ to express
the    variable for vertex $i$ in terms of the remaining variables.  
  \item[] \hspace{\pgmtab} Eliminate vertex $i$, adding random 
samples from the clique on the neighbors of $i.$}
\end{enumerate}
We
prove the following theorem about our algorithm, where for symmetric
matrices $A,B,$ we write $A \preceq B$ if $B-A$ is positive
semidefinite (PSD).
\begin{theorem}
\label{thm:intro:main}
The algorithm ${\cholesky},$
given an $n \times n$ Laplacian matrix $L$
with $m$ non-zero entries, runs in expected time $O(m \log^{3} n)$ and
computes a permutation $\pi,$ a lower triangular matrix $\mathcal{L}$
with $O(m \log^{3} n)$ non-zero entries, and a diagonal matrix
$\mathcal{D}$ such that with probability $1-\frac{1}{\poly(n)},$
we have
\[\nfrac{1}{2} \cdot L \preceq 
Z
 \preceq \nfrac{3}{2} \cdot L,\]
where $Z = P_{\pi} \mathcal{L} \mathcal{D} \mathcal{L}^{\top}
P_{\pi}^{\top},$ \emph{i.e.}, $Z$ has a sparse Cholesky factorization. 
\end{theorem}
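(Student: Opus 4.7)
The plan is to analyze the algorithm through a matrix martingale that tracks the discrepancy between the true (dense) Cholesky factorization of $L$ and the sparse approximate one produced by $\cholesky$. Conceptually, imagine running exact Gaussian elimination on $L$ in the order $\pi$ to obtain the sequence of true Schur complements $S_0 = L, S_1, \ldots, S_{n-1}$, where $S_{i+1}$ is obtained from $S_i$ by eliminating the variable for $\pi(i+1)$ and adding the true clique $C_i$ on its neighbors. The algorithm instead produces approximate Schur complements $\widetilde{S}_i$ where each elimination step adds a random sparsified clique $\widetilde{C}_i$. The design of the sampling step should be such that $\av[\widetilde{C}_i \mid \widetilde{S}_0, \ldots, \widetilde{S}_i] = C_i$, so that $Y_i \defeq \widetilde{S}_i - S_i$, viewed in the appropriate $L^{\dagger}$-norm, is a matrix martingale whose final value $Y_{n-1}$ captures exactly $Z - L$ (after the standard identification of a Cholesky factorization with its sequence of Schur complements).

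To conclude $\nfrac{1}{2} L \preceq Z \preceq \nfrac{3}{2} L$, it suffices to show $\norm{\lapid L^{\dagger/2}(Z-L)L^{\dagger/2}\lapid} \leq \nfrac{1}{2}$ with high probability, where $\lapid$ is the projector orthogonal to $\vecone$. I would prove this by establishing a matrix concentration bound for $\sum_i X_i$, where $X_i = L^{\dagger/2}(\widetilde{C}_i - C_i)L^{\dagger/2}$ is a conditionally-mean-zero matrix. The natural tool is a matrix Freedman-style inequality, which requires (a) a bound on the individual increments $\norm{X_i}$ and (b) a bound on the predictable quadratic variation $\sum_i \av[X_i^2 \mid \mathcal{F}_{i-1}]$ in PSD order. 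The random permutation is essential at this stage: it ensures that at the moment vertex $\pi(i)$ is eliminated, its expected degree (and the weights of its incident edges in the $L^{\dagger}$-normalized sense) are small, which in turn bounds both the per-edge contribution and the total variance.

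The main obstacle is that each $X_i$ is itself not a single bounded matrix but rather a sum of conditionally independent edge samples drawn from the $i$-th clique, and the aggregate $\norm{X_i}$ can be significantly larger than any single edge's contribution. A naive application of matrix Freedman to the steps $X_i$ would therefore lose polynomial factors. This motivates the novel concentration inequality flagged in the abstract: a matrix martingale bound tailored to the case where each difference $X_i$ decomposes into a conditionally independent sum. I would prove such a bound by ``unrolling'' the within-step sums into a finer martingale filtration in which each micro-step is a single edge sample with well-controlled norm, then carefully tracking how the variance accumulates across both the outer (elimination) and inner (sampling) scales, so that the final tail bound depends only on the total per-edge quadratic variation and the maximum single-edge leverage score, both of which are $\polylog(n)/n$ on average under a uniformly random elimination order.

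Having established the spectral guarantee, the remaining claims follow from separate accounting. The sparsity of $\mathcal{L}$ and the $O(m\log^3 n)$ running time come from showing that under a random permutation, at the moment vertex $\pi(i)$ is eliminated the expected number of sampled edges is proportional to the current degree, and that the current degree is at most $O(\polylog n)$ times the initial average degree in expectation; summing geometric tails over the $n$ elimination steps gives the nearly-linear bounds. The correctness of the factorization identity $Z = P_\pi \mathcal{L}\mathcal{D}\mathcal{L}^\top P_\pi^\top$ is a purely algebraic consequence of the construction, since the algorithm produces exactly the Cholesky pivots of whatever Schur complement sequence it generates. The probability of failure is absorbed into the tail of the matrix concentration bound, which can be driven to $1/\poly(n)$ by setting the sampling oversampling constant to $\Theta(\log n)$.
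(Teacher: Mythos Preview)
Your high-level plan matches the paper's, but the martingale you define is not the right one. You compare the approximate Schur complements $\widetilde{S}_i$ to the \emph{exact} Schur complements $S_i$ of $L$, and assert $\av[\widetilde{C}_i \mid \widetilde{S}_0,\ldots,\widetilde{S}_i] = C_i$ where $C_i$ is the clique produced by exact elimination. This is false: the sampled clique is drawn from the neighbors of $\pi(i)$ in $\widetilde{S}_{i-1}$, so its conditional mean is $C_{\pi(i)}(\widetilde{S}_{i-1})$, not $C_{\pi(i)}(S_{i-1})$. Consequently $Y_i = \widetilde{S}_i - S_i$ is not a martingale (already at step~2 the star and rank-one terms from $\widetilde{S}_1$ versus $S_1$ differ), and its final value $Y_{n-1}$ is essentially zero rather than $Z-L$. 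The correct object is the \emph{approximate Laplacian} $L^{(k)} \defeq \widetilde{S}_k + \sum_{i\le k}\alpha_i c_i c_i^\top$; then $L^{(k)} - L^{(k-1)} = \widetilde{C}_k - C_{\pi(k)}(\widetilde{S}_{k-1})$ is a genuine martingale difference, $L^{(0)}=L$, and $L^{(n)}=Z$. The exact Schur complements never appear.

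Two further pieces are missing. First, truncation: the per-step variance you need is controlled by $\norm{\lnorm{C_{\pi(k)}(\widetilde{S}_{k-1})}}$, which in turn is bounded only by $\norm{\lnorm{L^{(k-1)}}}$, and there is no a priori deterministic bound on this --- the martingale could already have drifted. The paper analyzes an $\epsilon$-truncated martingale that freezes once the partial sum exceeds $\epsilon I$; on the truncated process the variance bounds hold deterministically, and a short argument shows the truncated process fails at least as often as the original. Second, ``unrolling into a finer filtration'' does not by itself yield the needed concentration: standard matrix Freedman on the unrolled sequence would require a deterministic PSD bound on $\sum_k \sum_e \expt[(X^{(k)}_e)^2 \mid \mathcal{F}_{k-1},\pi(k)]$, which fails because for a fixed $\pi(k)$ the clique can be nearly all of $\widetilde{S}_{k-1}$. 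The paper's new inequality allows the variance at step $k$ to be bounded only \emph{after} taking expectation over the random choice $\pi(k)$; this is precisely what converts the bound $\lnorm{C_{\pi(k)}(\widetilde{S}_{k-1})}$ into $\frac{2}{n+1-k}\lnorm{\widetilde{S}_{k-1}}$ and produces the harmonic sum that sets $\rho = \Theta(\log^2 n)$ (not $\Theta(\log n)$).
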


The sparse approximate Cholesky factorization for $L$
given by Theorem~\ref{thm:intro:main} immediately implies fast
solvers for Laplacian systems. We can use the simplest iterative
method, called iterative refinement~\cite[Chapter 12]{Higham02} to
solve the system $Lx = b $ as follows. We let,
\[x^{(0)} = 0,\qquad  x^{(i+1)} = x^{(i)} - \nfrac{1}{2} \cdot Z^{+} (L
x^{(i)} - b),
\]
where we use the $Z^{+},$ the pseudo-inverse of $Z$ since $Z$ has a
kernel identical to $L.$ 
Let $\vecone$ denote the all ones vector, and for any vector $v,$ let
$\norm{v}_{L} \defeq \sqrt{v^{\top} L v}.$
\begin{theorem}
\label{thm:simple-solver}
For all Laplacian systems $Lx=b$ with $\vecone^{\top} b=0,$ and all
$\epsilon > 0,$ using the sparse approximate Cholesky factorization
$Z$ given by Theorem~\ref{thm:intro:main}, the above iterate for
$t = 3\log \nfrac{1}{\eps}$ satisfies
$\norm{x^{(t)} - L^{+}b}_{L} \le \eps \norm{L^{+}b}_{L}.$ We can
compute such an $x^{(t)}$ in time
$O(m \log^{3} n \log \nfrac{1}{\eps}).$
\end{theorem}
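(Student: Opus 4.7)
The plan is to analyze Theorem~\ref{thm:simple-solver} as a standard convergence argument for preconditioned iterative refinement, where the preconditioner $Z$ from Theorem~\ref{thm:intro:main} satisfies $\tfrac{1}{2} L \preceq Z \preceq \tfrac{3}{2} L$. The key observation is that both $L$ and $Z$ share kernel $\spn(\vecone)$, so working on the image subspace $\vecone^{\perp}$, we may treat $L^{+}$ and $Z^{+}$ as honest inverses; the condition $\vecone^{\top} b = 0$ ensures all relevant vectors stay in this subspace.

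First I would set $x^{\star} \defeq L^{+} b$ and $e^{(i)} \defeq x^{(i)} - x^{\star}$, so that since $L x^{\star} = b$ (inside $\vecone^{\perp}$), the iteration unfolds to
\[
e^{(i+1)} \;=\; e^{(i)} - \tfrac{1}{2} Z^{+} L\, e^{(i)} \;=\; \bigl(I - \tfrac{1}{2} Z^{+} L\bigr) e^{(i)}.
\]
Next, measuring in the $L$-norm via $\norm{v}_{L} = \norm{L^{1/2} v}_{2}$ and conjugating, I would reduce bounding $\norm{e^{(i+1)}}_L / \norm{e^{(i)}}_L$ to bounding the spectral norm of the symmetric operator $I - \tfrac{1}{2} L^{1/2} Z^{+} L^{1/2}$ on $\vecone^{\perp}$.

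The step that does the real work is translating the PSD sandwich on $Z$ into a spectral bound on this operator. The relation $\tfrac{1}{2} L \preceq Z \preceq \tfrac{3}{2} L$ inverts on $\vecone^{\perp}$ to $\tfrac{2}{3} L^{+} \preceq Z^{+} \preceq 2 L^{+}$, which after conjugation by $L^{1/2}$ gives $\tfrac{2}{3} \Pi \preceq L^{1/2} Z^{+} L^{1/2} \preceq 2 \Pi$, where $\Pi$ is the orthogonal projector onto $\vecone^{\perp}$. Hence the eigenvalues of $\tfrac{1}{2} L^{1/2} Z^{+} L^{1/2}$ restricted to $\vecone^{\perp}$ lie in $[\tfrac{1}{3}, 1]$, so
\[
\bignorm{I - \tfrac{1}{2} L^{1/2} Z^{+} L^{1/2}}_{\mathrm{op}} \;\le\; \tfrac{2}{3}
\]
on the relevant subspace. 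This is the only place where the approximation quality of $Z$ enters; it is routine but deserves care because of the pseudo-inverse, which is the mildest conceptual hurdle.

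Chaining the contraction gives $\norm{e^{(t)}}_L \le (2/3)^{t} \norm{x^{\star}}_L$, since $x^{(0)} = 0$. Because $(2/3)^{3} < 1/e$, choosing $t = 3 \log (1/\eps)$ yields $(2/3)^{t} < \eps$, establishing the approximation claim. For the running time, each iteration performs one matrix-vector multiply with $L$ in $O(m)$ time and one application of $Z^{+}$ through the factorization $Z = P_{\pi} \mathcal{L} \mathcal{D} \mathcal{L}^{\top} P_{\pi}^{\top}$ by forward/backward substitution on the $O(m \log^{3} n)$ nonzeros of $\mathcal{L}$ guaranteed by Theorem~\ref{thm:intro:main}, which costs $O(m \log^{3} n)$ per iteration. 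Multiplying by the $O(\log(1/\eps))$ iteration count gives the stated $O(m \log^{3} n \log(1/\eps))$ bound.
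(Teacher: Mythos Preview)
Your argument is correct. The paper itself does not supply a proof of Theorem~\ref{thm:simple-solver}; it simply states the iterative refinement recurrence, cites \cite[Chapter 12]{Higham02}, and moves on. Your write-up fills in exactly the standard details one would expect: track the error $e^{(i)} = x^{(i)} - L^{+}b$, observe it evolves by $e^{(i+1)} = (I - \tfrac{1}{2}Z^{+}L)e^{(i)}$, symmetrize by conjugating with $L^{1/2}$, and convert the sandwich $\tfrac{1}{2}L \preceq Z \preceq \tfrac{3}{2}L$ into the eigenvalue range $[\tfrac{1}{3},1]$ for $\tfrac{1}{2}L^{1/2}Z^{+}L^{1/2}$ on $\vecone^{\perp}$, giving contraction factor $\tfrac{2}{3}$. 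The arithmetic $(2/3)^{3} < 1/e$ and the per-iteration cost accounting are both fine. There is no alternative approach in the paper to compare against.
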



In our opinion, this is the simplest nearly-linear time solver for
Laplacian systems. Our algorithm only uses random sampling, and
no graph-theoretic constructions,
in contrast with all previous
Laplacian solvers. The analysis is also entirely contained in this
paper. We also remark that there is a possibility that our analysis is
not tight, and that the bounds can be improved by a stronger matrix
concentration result. 

\paragraph{Technical Contributions.}
There are several key ideas that are central to our result: The first
is randomizing the order of elimination.  At step $i$ of the
algorithm, if we eliminate a fixed vertex and sample the resulting
clique, we do not know useful bounds on the sample variances that
would allow us to prove concentration.  Randomizing over the choice of
vertex to eliminate allows us to bound the sample variance by roughly
$\nfrac{1}{n}$ times the Laplacian at the previous step.

The second key idea is our approach to estimating effective
resistances: A core element in all nearly linear time Laplacian solvers
is a procedure for estimating effective resistances (or leverage
scores) for edges in order to compute sampling probabilities.
In previous solvers, these estimates are obtained using fairly
involved procedures (e.g. low-stretch trees, ultrasparsifiers, or the
subsampling procedure due to Cohen et al.~\cite{CohenLMMPS15}). In
contrast, our solver starts with the crudest possible estimates of $1$ for
every edge, and then uses the triangle inequality for effective
resistances (Lemma~\ref{lem:levscore-sum-bound}) to obtain estimates
for the new edges generated. We show that these
estimates suffice for 
constructing a nearly linear time
Laplacian solver.

Finally,
we develop new
concentration bounds for a class of matrix martingales that we call
{\seqsum s}. The name is motivated by a simple scalar model: at each
step, we pick a random bag of dice from a collection of bags (in the
algorithm, this corresponds to picking a random vertex to eliminate),
and then we independently roll each die in the selected bag
(corresponding to drawing independent samples from the clique
added). The guarantees obtained from existing matrix concentration
results are too weak for our application. The concentration bound
gives us a powerful tool for handling conditionally independent sums
of variables.  We defer a formal description of the martingales and
the concentration bound to Section~\ref{sec:matrixDefsThms}.

\paragraph{Comparison to other Laplacian solvers.}
Though the current best algorithm for solving a general $n \times n$
positive semidefinite linear system with $m$ non-zero entries takes
time $O(\min\{mn,n^{2.2373}\})$~\cite{Williams12}, a breakthrough result
by Spielman and Teng~\cite{SpielmanT04, SpielmanTengLinsolve}
showed that linear systems in graph Laplacians could be solved in time
$O(m\cdot {\poly (\log n)} \log \nfrac{1}{\eps}).$ There has been a lot of progress over the
past decade~\cite{KMP1,KMP2, KOSZ, CohenKMPPRX, PengS14,
KyngLPSS16}, and the current best running time is
$O(m \log^{\nfrac{1}{2}} n \log \nfrac{1}{\eps})$ (up to ${\polylog}\
n$ factors)~\cite{CohenKMPPRX}. All of these algorithms have relied on
graph-theoretic constructions -- low-stretch trees~\cite{SpielmanT04,
  KMP1, KMP2, KOSZ, CohenKMPPRX}, graph
sparsification~\cite{SpielmanT04, KMP1, KMP2, CohenKMPPRX, PengS14},
and explicit expander graphs~\cite{KyngLPSS16}. 

In contrast, our algorithm requires no graph-theoretic
construction, and is based purely on random sampling. Our result
only uses two algebraic facts about Laplacian
matrices: \begin{enumerate*}
\item They are closed under taking Schur complements, and
\item They satisfy the effective resistance triangle inequality
  (Lemma~\ref{lem:levscore-sum-bound}).
\end{enumerate*}

\cite{KyngLPSS16} presented the first nearly linear time solver for
block Diagonally Dominant (bDD) systems -- a generalization of SDD
systems.  If bDD matrices satisfy the effective resistance triangle
inequality (we are unaware if they do), then the algorithm in the main
body of this paper immediately applies to bDD systems, giving a sparse
approximate block Cholesky decomposition and a
nearly linear time solver for bDD matrices.

In Section~\ref{sec:bdd}, we sketch a near-linear time algorithm for
computing a sparse approximate block Cholesky factorization for bDD
matrices. It combines the approach of {\cholesky} with a recursive
approach for estimating effective resistances, as
in~\cite{KyngLPSS16}, using the subsampling
procedure~\cite{CohenLMMPS15}. Though the algorithm is more involved
than {\cholesky}, it runs in time $O(m \log^{3} n + n \log^{5} n),$ and
produces a sparse approximate Cholesky decomposition with only
$O(m \log^{2} n + n \log^{4} n)$ entries.
 The algorithm only uses that bDD matrices are closed under taking Schur
complements, and that the Schur complements have a clique structure
similar to Laplacians (see Section~\ref{sec:prelims}).





\paragraph{Comparison to Incomplete Cholesky Factorization.}
%
A popular approach to tackling fill-in is 
\emph{Incomplete Cholesky factorization}, where we throw away most of
the new entries generated when eliminating variables. 
The hope is that
the resulting factorization is still an approximation to the original
matrix $L,$ in which case such an approximate factorization can be
used to quickly solve systems in $L.$ 
Though variants of this approach are used often in practice, and we
have approximation guarantees for some families of
Laplacians~\cite{Gustaffson78, Guattery97, BernGHNT06}, there are no known
guarantees for general Laplacians to the best of our
knowledge.

\section{Preliminaries}
\label{sec:prelims}
\paragraph{Laplacians and Multi-Graphs.}
We consider a connected undirected multi-graph $G =(V,E)$, with
positive edges weights $w : E \to \rea_{+}$.  Let $n = \abs{V}$ and
$m = \abs{E}$.  We label vertices $1$ through $n$, s.t.
$V = \setof{1, \ldots, n}$. Let $\ee_{i}$ denote the $i^{\textrm{th}}$
standard basis vector.  Given an ordered pair of vertices $(u,v)$, we
define the pair-vector $b_{u,v} \in \rea^{n}$ as
$\bb_{u,v} = \ee_{v} - \ee_{u}.$ For a multi-edge $e,$ with endpoints
$u,v$ (arbitrarily ordered), we define $\bb_{e} = \bb_{u,v}.$
%
By assigning an arbitrary direction to each multi-edge of G we define
the Laplacian of $G$ as $L = \sum_{e \in E} w(e) \bb_{e} \bb_{e}^{\top}.$
Note that the Laplacian does not depend on the choice of direction for
each edge.
Given a single multi-edge $e$, we refer to $w(e) \bb_{e} \bb_{e}^{\top}$
as the Laplacian of $e$.

A weighted multi-graph $G$ is not uniquely defined by its Laplacian,
since the Laplacian only depends on the sum of the weights of the
multi-edges on each edge.
We want to establish a one-to-one correspondence between a weighted
multi-graph $G$ and its Laplacian $L$, so
from now on, we will consider every Laplacian to be maintained
explicitly as a sum of Laplacians of multi-edges, and we will
maintain this multi-edge decomposition as part of our algorithms.
%
%
%
\begin{fact}
If $G$ is connected, then the kernel of $L$ is the span of 
the vector $\vecone.$
\end{fact}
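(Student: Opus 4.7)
The plan is to prove the two containments separately: first that $\vecone$ lies in the kernel of $L$, and second that any vector in the kernel of $L$ must be a scalar multiple of $\vecone$, using connectivity of $G$.

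For the first direction, I would use the edge-Laplacian decomposition given in the preliminaries. Since $L = \sum_{e \in E} w(e)\bb_e\bb_e^\top$ and each $\bb_e = \ee_v - \ee_u$ satisfies $\bb_e^\top \vecone = 0$, we immediately get $L\vecone = \sum_e w(e)\bb_e (\bb_e^\top \vecone) = 0$. Hence $\spn(\vecone) \subseteq \ker(L)$.

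For the reverse containment, the natural tool is the quadratic form. For any $x \in \rea^n$,
\[
x^\top L x = \sum_{e \in E} w(e) (\bb_e^\top x)^2 = \sum_{(u,v) \in E} w(u,v)(x_v - x_u)^2,
\]
where I treat the sum over multi-edges by grouping by endpoint pair. Since every $w(e) > 0$, if $x \in \ker(L)$ then $x^\top L x = 0$ forces $x_u = x_v$ for every edge $(u,v) \in E$. Now I would invoke connectivity: for any two vertices $i,j \in V$, there exists a path $i = v_0, v_1, \ldots, v_k = j$ in $G$, and applying the equality $x_{v_{t}} = x_{v_{t+1}}$ along each consecutive edge on the path gives $x_i = x_j$. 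Thus $x$ is constant on $V$, i.e., $x \in \spn(\vecone)$.

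Combining the two inclusions gives $\ker(L) = \spn(\vecone)$. There is no real obstacle here; the only minor subtlety is that the sum defining $L$ is over multi-edges, so the same unordered vertex pair may contribute several terms, but since all weights are positive this only strengthens the conclusion that $x_u = x_v$ whenever $u,v$ are joined by at least one multi-edge, which is exactly what connectivity in the multi-graph sense requires.
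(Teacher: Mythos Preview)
Your proof is correct and is the standard argument for this well-known fact. The paper does not actually prove this statement; it is simply stated as a \texttt{Fact} in the preliminaries without any accompanying proof, so there is nothing to compare against. Your argument via the quadratic form $x^\top L x = \sum_e w(e)(\bb_e^\top x)^2$ together with connectivity is exactly the canonical one, and your handling of the multi-edge subtlety is appropriate.
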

Let $L^{+}$ denote the pseudo-inverse of $L$.
Let $J \defeq \vecone\vecone^{\top}$.
Then, we define
$
\lapid
\defeq
L L^{+}
= I - \frac{1}{n} J
.$
\paragraph{Cholesky Factorization in Sum and Product Forms.}
We now formally introduce Cholesky factorization. 
Rather than the usual perspective where we factor out lower triangular
matrices at every step of the algorithm, we present an equivalent
perspective where we subtract a rank one term from the current matrix,
obtaining its Schur complement. The lower triangular structure follows
from the fact that the matrix effectively become smaller at every step.

Let $L$ be any symmetric positive-semidefinite matrix.
Let $L(:,i)$ denote the $i^{\text{th}}$ column of $L.$
Using the first equation in the system $L x = b$ to
eliminate the variable $x_{1}$ produces another system
$S^{(1)}x^{\prime} = b^{\prime},$ where $b^{\prime}_1 = 0, x^{\prime}$
is $x$ with $x_{1}$ replaced by 0, and
\[ S^{(1)} \defeq L - \frac{1}{L(1,1)}L(:,1) L(:,1)^{\top},\]
is called the \emph{Schur complement} of L with respect to 1.
The first row and column of
$S^{(1)}$ are identically 0, and thus this is effectively a system in
the remaining $n-1$ variables. Letting
$\alpha_{1} \defeq L(v_{1},v_{1}), \cc_{1} \defeq
\frac{1}{\alpha_{1}} L(:,v_{1}),$ we have $L = S^{(1)} + \alpha_{1}
\cc_{1} \cc_{1}^{\top}.$

%
%

For computing the Cholesky factorization, we perform a sequence of
such operations, where in the $k^{th}$ step, we select an index
$v_{k} \in V\setminus \setof{v_{1}, \ldots, v_{k-1}}$ and eliminate
the variable $v_{k}.$ We define
\begin{align*}
  \alpha_{k} &= S^{(k-1)}(v_{k},v_{k}) \\
  \cc_{k} &= \frac{1}{\alpha_{k}} S^{(k-1)}(:,v_{k}) \\
  S^{(k)} &= S^{(k-1)} - \alpha_{k} \cc_{k}\cc_{k}^{\top}.
\end{align*}
If at some step $k$, 
$S^{(k-1)}(v_{k},v_{k}) = 0$,
then we define $\alpha_{k} = 0$,
and $\cc_{k} = 0$.
Continuing until $k = n-1$,
$S^{(k)}$ will have at most one non-zero entry,
which will be on the $v_{n}$ diagonal.
We define $\alpha_{n} = S^{(k)}$ and $\cc_{n} = \ee_{v_{n}}$.

Let $\calC$ be the 
$n \times n$ matrix
with $\cc_{i}$ as its $i^{\textrm{th}}$ column, 
and $\mathcal{D}$ be the $n \times n$ diagonal matrix 
$\mathcal{D}(i,i) = \alpha_{i}$, then
$ L = \sum_{i=1}^{n} \alpha_{i} \cc_{i} \cc_{i}^{\top} 
= \mathcal{C} \mathcal{D} \mathcal{C}^{\top}.$
Define the permutation matrix $P$ by $P \ee_{i} = \ee_{v_{i}}.$
Letting $\mathcal{L} = P^{\top}\mathcal{C},$ we have
$L = 
 P \mathcal{L} \mathcal{D} \mathcal{L}^{\top} P^{\top}.$
This decomposition is known as Cholesky factorization.
Crucially, 
$\mathcal{L}$ is lower triangular, since
$\mathcal{L}(i,j) = (P^{\top}\cc_{j})(i) = \cc_{j}(v_{i}),$ and for
$i < j,$ we have $\cc_{j}(v_{i}) = 0$.



\paragraph{Clique Structure of the Schur Complement.}
Given a Laplacian $L$,
let $\vstar{L}{v} \in \rea^{n \times n}$ denote the Laplacian corresponding to the 
edges incident on vertex $v$, i.e.
\begin{equation}
\label{eq:starDef}
\vstar{L}{v} \defeq \sum_{e \in E : e \ni v} w(e) \bb_{e} \bb_{e}^{\top}
.
\end{equation}
For example, we denote the first column of $L$ by
$
\begin{pmatrix}
  d \\
  -\aa 
\end{pmatrix}
,$
then
$
L_{1} =
\begin{bmatrix}
d & -\aa^{\top} \\
-\aa & \diag{\aa}
\end{bmatrix}
.
$
We can write the Schur complement 
$S^{(1)}$ as
$S^{(1)} = L-\vstar{L}{v} +\vstar{L}{v}  - \frac{1}{L(v,v)}L(:,v)L(:,v)^{\top}.$
It is immediate that $L-\vstar{L}{v} $ is a Laplacian matrix,
since $L-\vstar{L}{v}  = \sum_{e \in E : e \not\ni v} w(e) \bb_{e}
\bb_{e}^{\top}$.
A more surprising (but well-known) fact is that 
\begin{align}
\label{eq:Cv-def}
C_{v}(L) \defeq \vstar{L}{v}  - \frac{1}{L(v,v)}L(:,v) L(:,v)^{\top}
\end{align}
is also a Laplacian, and its edges form a clique on the neighbors of $v$.
It suffices to show it for $v = 1.$ We write $i \sim j$ to denote $(i,j) \in E.$
Then
\begin{equation}
\label{eq:cliquestructure}
C_{1}(L)
=
L_{1} - \frac{1}{L(1,1)}L(:,1) L(:,1)^{\top}
=
\begin{bmatrix}
0 & \veczero^{\top} \\
\veczero & \diag{\aa} - \frac{\aa\aa^{\top} }{d}
\end{bmatrix}
= 
\sum_{i \sim 1} \sum_{j \sim 1} 
\frac{w(1,i) w(1,j)} 
{d}
\bb_{(i,j)} \bb_{(i,j)}^{\top}
.
\end{equation}
Thus
$S^{(1)}$ is a Laplacian since it is a sum of two Laplacians.
By induction, for all $k,$ $S^{(k)}$ is a Laplacian.



\section{The {\cholesky} Algorithm}
\label{sec:algorithm}
\begin{figure}[h]
\begin{mdframed}
\vspace{\algtopspace}
\captionof{table}{ \label{alg:solver} 
  $\cholesky(\epsilon, L):$ Given an $\epsilon > 0$ and a Laplacian
  $L,$ outputs $(P,\mathcal{L} , \mathcal{D}),$ a sparse approximate Cholesky factorization of $L$ }
    \vspace{\algpostcaptionspace}
    \begin{tight_enumerate}
\item \label{alg:solver:initSplit}
$\samp{S}^{(0)} \leftarrow L$ with edges split into $\rho =
\ceil{12 (1+\delta)^{2} \eps^{-2} \ln^{2} n}$ copies with
$1/\rho$ of the original weight
\item Define the diagonal matrix $\mathcal{D} \assign \matzero_{n
    \times n}$
\item Let $\pi$ be a uniformly random permutation.
  Let $P_{\pi}$ be its permutation matrix, i.e., $(P_{\pi}x)_{i} = x_{\pi_{i}}$
\item \FOR $k = 1$ to $n-1$
        \item \hspace{\pgmtab}
          $\mathcal{D}(\pi(k),\pi(k)) \leftarrow (\pi(k),\pi(k))$ entry of $\samp{S}^{(k-1)}$
         \item \hspace{\pgmtab} 
          $\cc_{k} \leftarrow \pi(k)^{\text{th}} \text{ column of }
          \samp{S}^{(k-1)}$ divided by $\mathcal{D}(\pi(k),\pi(k))$
   if $\mathcal{D}(\pi(k),\pi(k)) \neq 0,$ or zero otherwise
        \item \hspace{\pgmtab}  
          $\samp{C}_{k}\leftarrow
          \csamp(\samp{S}^{(k-1)},\pi(k))$
        \item  \hspace{\pgmtab}  
          $\samp{S}^{(k)} \leftarrow \samp{S}^{(k-1)} - \vstar{\samp{S}^{(k-1)}}{\pi(k)} + \samp{C}_{k}$
\item
\label{alg:cholesky:laststep}
$\mathcal{D}(\pi(n),\pi(n)) \leftarrow \samp{S}^{(n)}$ and
  $\cc_{n} \leftarrow \ee_{\pi(n)}$ 
\item
\label{alg:cholesky:perm}
  $\mathcal{L}  \leftarrow P_{\pi}^{\top}
   \begin{pmatrix}
    \cc_{1} & \cc_{2} & \ldots & \cc_{n}
  \end{pmatrix}
 $
\item
\RETURN
$(P_{\pi},\mathcal{L} , \mathcal{D}) $
      \end{tight_enumerate}
 \end{mdframed}
\end{figure}
\begin{figure}[h]
\begin{mdframed}
\vspace{\algtopspace}
\captionof{table}{ \label{alg:sampler}
  $\sum_{i}  Y_{i} = \csamp(S,v)$ : Returns several i.i.d samples of edges from the
  clique generated after eliminating vertex $v$ from the multi-graph
  represented by $S$ }
    \vspace{\algpostcaptionspace}
\begin{tight_enumerate}
\item \FOR $i \assign  1$ to $\mult{S}{v}$
         \item \label{alg:sampler:wSamp}
           \hspace{\pgmtab}
           Sample $e_{1}$ from list of multi-edges incident on $v$ with
           probability $w(e)/w_{S}(v)$.
         \item \label{alg:sampler:uSamp}
           \hspace{\pgmtab}
           Sample $e_{2}$ uniformly from list of multi-edges incident on  $v$.
         \item 
           \hspace{\pgmtab} 
           \IF $e_{1}$ has endpoints $v,u_{1}$ and $e_{2}$ has
           endpoints $v,u_{2}$ and $u_{1} \neq u_{2}$
         \item \label{alg:sampler:assign}
           \hspace{\pgmtab}\hspace{\pgmtab}
           $Y_{i} \assign
           \frac{w(e_{1}) w(e_{2}) }
           {w(e_{1})+ w(e_{2})}
           \bb_{u_{1},u_{2}} \bb_{u_{1},u_{2}} ^{\top}
           $
           \item \label{alg:sampler:assignZero}
             \hspace{\pgmtab} \ELSE $Y_{i} \assign 0$
\item
\RETURN
$\sum_{i}  Y_{i}$
      \end{tight_enumerate}
 \end{mdframed}
\end{figure}
%
Algorithm~\ref{alg:solver} gives the pseudo-code for our algorithm
$\cholesky$.  Our main result, Theorem~\ref{thm:sparsecholesky} (a
more precise version of Theorem~\ref{thm:intro:main}), shows that the
algorithm computes an approximate sparse Cholesky decomposition
in nearly linear time. We assume the Real RAM model.
We prove the theorem in Section~\ref{sec:analysis}.
\begin{theorem}
  \label{thm:sparsecholesky}
  Given a connected undirected multi-graph
  $G =(V,E)$, with positive edges weights 
  $w : E \to \rea_{+}$, and associated Laplacian $L$,
  and scalars $\delta >1$, $0<\eps\leq1/2$,
  the algorithm $\cholesky(L,\eps,\delta)$
  returns an approximate sparse Cholesky decomposition
  $(P,\mathcal{L},\mathcal{D})$ s.t. 
  with probability at least $1-2/n^{\delta}$,
  \begin{align}
    \label{eq:lapErrorBounds}
   (1-\eps) L
   \preceq
   P \mathcal{L} \mathcal{D} \mathcal{L}^{\top} P^{\top} 
   \preceq
   (1+\eps) L   .
  \end{align}
  The expected number of non-zero entries in $\mathcal{L}$ is $O(\frac{\delta^{2}}{\eps^{2}}
  m\log^{3} n)$.
  The algorithm runs in expected time $O(\frac{\delta^{2} }{\eps^{2}}
  m\log^{3} n)$.
\end{theorem}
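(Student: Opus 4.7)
The plan is to analyze (a) the spectral approximation in~\eqref{eq:lapErrorBounds} via a matrix-martingale argument and (b) the sparsity and running time by a counting argument that exploits the random elimination order.

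First I would set up the key martingale identity. Define the sampling error at step $k$ by
\[
E_k \;\defeq\; \samp{C}_k - C_{\pi(k)}(\samp{S}^{(k-1)}),
\]
where $C_{\pi(k)}$ is the exact Schur clique from~\eqref{eq:Cv-def}. A direct check on \csamp\ shows that each $Y_i$ has conditional expectation $\tfrac{1}{\mult{\samp{S}^{(k-1)}}{\pi(k)}} C_{\pi(k)}(\samp{S}^{(k-1)})$, matching~\eqref{eq:cliquestructure}, so $\E[\samp{C}_k \mid \mathcal{F}_{k-1}] = C_{\pi(k)}(\samp{S}^{(k-1)})$ and $\{E_k\}$ is a matrix martingale difference sequence with respect to the filtration $\mathcal{F}_k$ generated by $\pi$ and the first $k$ sampling rounds. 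Using the star decomposition $\vstar{\samp{S}^{(k-1)}}{\pi(k)} = \alpha_k \cc_k \cc_k^\top + C_{\pi(k)}(\samp{S}^{(k-1)})$, the algorithm's update rewrites as $\samp{S}^{(k)} = \samp{S}^{(k-1)} - \alpha_k \cc_k \cc_k^\top + E_k$, and telescoping together with the terminal step yields the clean identity
\[
Z - L \;=\; \sum_{k=1}^{n-1} E_k,\qquad Z \defeq P_\pi\mathcal{L}\mathcal{D}\mathcal{L}^\top P_\pi^\top.
\]

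Next I would verify that $Z$ and $L$ share the same kernel (namely $\spn\{\vecone\}$, by induction using the fact that Schur complements of Laplacians are Laplacians) and reduce~\eqref{eq:lapErrorBounds} to an operator-norm bound on an $L^+$-normalized version of $\sum_k E_k$. The rest is an application of the \seqsum\ matrix concentration inequality, for which I would verify two hypotheses. The norm-per-summand requirement reduces to controlling the $L^+$-leverage score of each freshly sampled multi-edge $(u_1,u_2)$ of weight $\tfrac{w(e_1)w(e_2)}{w(e_1)+w(e_2)}$; by the effective resistance triangle inequality (Lemma~\ref{lem:levscore-sum-bound}) this is at most the sum of the leverage scores of the edges $(\pi(k),u_1)$ and $(\pi(k),u_2)$ in $\samp{S}^{(k-1)}$, and this bound is inductively maintained at $O(1/\rho)$ throughout the algorithm thanks to the initial splitting in step~\ref{alg:solver:initSplit}. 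The quadratic-variation hypothesis is where the random ordering is essential: conditioning on $\mathcal{F}_{k-1}$ and then averaging over the uniform choice of $\pi(k)$ among the $n-k+1$ surviving vertices introduces a factor $\tfrac{1}{n-k+1}$ in front of a global star-of-stars expression, so $\sum_k \E[E_k^2 \mid \mathcal{F}_{k-1}]$ telescopes to at most $O(\rho^{-1}\log n)\cdot L$ in the Loewner order, delivering the desired $\eps$-scale concentration for the chosen $\rho = \Theta(\delta^2\eps^{-2}\log^2 n)$.

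The size and running-time bounds follow from a conservation property: executing \csamp$(\samp{S}^{(k-1)},\pi(k))$ costs $O(\mult{\samp{S}^{(k-1)}}{\pi(k)})$ time and adds that many nonzeros to $\mathcal{L}$, while simultaneously deleting the same number of multi-edges (the star at $\pi(k)$) from $\samp{S}^{(k-1)}$ and adding back exactly the same number of sampled multi-edges, so the total multi-edge count stays $\rho m$ throughout. Since $\pi(k)$ is uniform among the $n-k+1$ surviving vertices, the expected work at step $k$ is $O(\rho m/(n-k+1))$, telescoping to $O(\rho m\log n) = O(\delta^2\eps^{-2} m\log^3 n)$.

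The main obstacle I anticipate is the quadratic-variation bound. The sum is doubly random --- a random bag selected via $\pi(k)$, then independent dice rolls within that bag --- and within-bag samples are correlated only through their shared endpoint, so standard matrix Bernstein/Freedman bounds lose too much. Obtaining the clean $\rho^{-1}\log n$ variance bound, and hence the final concentration guarantee, really does require the full strength of the new \seqsum\ inequality, together with a careful inductive use of the effective resistance triangle inequality to prevent variance from compounding as newly sampled edges are themselves eliminated and re-sampled in later rounds.
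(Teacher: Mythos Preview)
Your overall architecture matches the paper's: the telescoping identity $Z-L=\sum_k E_k$, the $L^{+}$-normalization, the bags-of-dice martingale structure with $r_k=\pi(k)$ and inner independence over the $Y_i$'s, the per-edge $1/\rho$-boundedness maintained inductively via the effective-resistance triangle inequality, and the $O(\rho m\log n)$ running-time argument from the multi-edge conservation invariant plus harmonic sum. All of this is correct and essentially identical to what the paper does.

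There is, however, one genuine gap. You assert that after averaging over $\pi(k)$ the conditional variance satisfies a Loewner bound of order $\rho^{-1}(n+1-k)^{-1}\,L$, summing to $O(\rho^{-1}\log n)\,L$. But the averaging step actually yields
\[
\expt_{\pi(k)}\Bigl[\sum_e \expt_{Y_e}\lnorm{X_e}^2 \,\Big|\, \mathcal{F}_{k-1}\Bigr]
\;\preceq\;
\frac{2}{\rho(n+1-k)}\,\lnorm{\samp{S}^{(k-1)}},
\]
and $\samp{S}^{(k-1)}\preceq L^{(k-1)}$ is \emph{not} deterministically bounded by $O(1)\cdot L$: that is precisely the conclusion you are trying to prove. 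Theorem~\ref{thm:smoothableTailProb} requires fixed matrices $\Omega_k$ dominating the expected variance \emph{for all outcomes} of $\mathcal{F}_{k-1}$, and without further argument no useful such $\Omega_k$ exist. The paper breaks this circularity by passing to the $\eps$-\emph{truncated} martingale (Theorem~\ref{thm:truncSum}): one multiplies each difference by the indicator $\ind{\goodUntil_k}$ of the event that all partial sums up to step $k-1$ stayed below $\eps I$, which forces $L^{(k-1)}\preceq(1+\eps)L$ on the support of the indicator and hence yields deterministic $\Omega_k=\frac{3}{\rho(n+1-k)}I$ (Lemma~\ref{lem:csampmoments}, part~\ref{lem:csampmoments:cexpt}). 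One then shows the truncated process fails at least as often as the original, so concentration for it suffices.

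Your last paragraph misdiagnoses where the difficulty lies: the effective-resistance triangle inequality governs only the per-sample norm bound $\sigma_1$ (and its inductive preservation), not the aggregate variance $\sigma_3$. The ``compounding'' you worry about is exactly the self-bounding issue above, and it is handled by truncation, not by the triangle inequality.
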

Algorithm~\ref{alg:sampler} gives the pseudo-code for our $\csamp$ algorithm.

The most significant obstacle to making Cholesky factorization of
Laplacians efficient is the \emph{fill-in} phenomenon,
namely that each clique $C_{v}(S)$ has roughly $(\mult{S}{v})^{2}$ non-zero entries.
To solve this problem, we develop a sampling procedure $\csamp$ 
that produces a sparse Laplacian matrix which approximates the clique $C_{v}(S)$.
As input, the procedure requires a Laplacian matrix $S$, maintained
as a sum of Laplacians of multi-edges, and a vertex $v$.
It then computes a sampled matrix that approximates $C_{v}(S)$.
The elimination step in $\cholesky$ removes the $\mult{S}{v}$ edges
incident on $v$, and $\csamp(S,v)$ only adds at most $\mult{S}{v}$
multi-edges.
This means the total number of multi-edges does not increase with each
elimination step, solving the \emph{fill-in} problem.
The sampling procedure is also very fast:
It takes $O(\mult{S}{v})$ time, much faster than the
order $(\mult{S}{v})^{2}$ time required to even write down the clique $C_{v}(S)$.

Although it is notationally convenient for us to pass the whole matrix
$S$ to $\csamp$, the procedure only relies on multi-edges incident on
$v$, so we will only pass these multi-edges.


\begin{remark}
\label{rem:timeConc}
Theorem~\ref{thm:sparsecholesky} only provides guarantees only on the
expected running time.
In fact, if we make a small change to the algorithm,
we can get
 $O(\frac{\delta^{2} }{\eps^{2}}
  m\log^{3} n)$ running time w.h.p.
At the $k^{th}$ elimination, instead of picking the
vertex $\pi(k)$ uniformly at random among the remaining vertices,
we pick the vertex uniformly at
random among the remaining vertices with at most twice the average
multi-edge degree in $S^{(k)}$.
In Appendix~\ref{sec:timeConcentration}, we sketch a proof of this.
\end{remark}


\section{Analysis of the Algorithm using Matrix Concentration}
\label{sec:analysis}
In this section, we analyze the $\cholesky$ algorithm,
and prove Theorem~\ref{thm:sparsecholesky}.
To prove the theorem, we need several intermediate results which we
will now present.
In Section~\ref{sec:csampdistrAndMartingale},
we show how the output the $\cholesky$ and $\csamp$ algorithms
can be used to define a matrix martingale.
In Section~\ref{sec:matrixDefsThms}, we introduce a new type of
martingale, called a {\seqsum}, and a novel matrix concentration result for these martingales.
In Section~\ref{sec:csampMomentsAndCholeskyProof}, we show how to
apply our new matrix concentration results to the $\cholesky$
martingale and prove Theorem~\ref{thm:sparsecholesky}.
We defer proofs of the lemmas that characterize $\csamp$ to
Section~\ref{sec:csampProofs},
and proofs of our matrix concentration results to Section~\ref{sec:matrixProofs}.

\subsection{Clique Sampling and Martingales for Cholesky Factorization}
\label{sec:csampdistrAndMartingale}
Throughout this section, we will study matrices that arise in the
when using $\cholesky$ to produce a sparse approximate Cholesky
factorization of 
the Laplacian $L$ of a multi-graph $G$.
We will very frequently need to refer to matrices that are
normalized by $L$.
We adopt the following notation: Given a symmetric matrix $S$
s.t. $\ker(L) \subseteq \ker(S)$,
\[
\lnorm{S} \defeq (L^{+})^{1/2} S  (L^{+})^{1/2}.
\]
We will only use this notation for matrices $S$ that satisfy the
condition $\ker(L) \subseteq \ker(S)$. 
Note that $\lnorm{L} = \lapid,$ and $A \preceq B$ iff $\lnorm{A}
\preceq \lnorm{B}.$
Normalization is always
done with respect to the Laplacian $L$ input to $\cholesky$.
We say a multi-edge $e$ is $1/\rho$-bounded if
\[
\norm{
w(e)  \lnorm{\bb_{e}\bb_{e}^{\top}}
}
\leq
1/\rho
.
\]
Given a Laplacian $S$ that corresponds to a multi-graph $G_{S}$,
and a scalar $\rho > 0$,
we say that $S$ is $1/\rho$-bounded if every multi-edge of 
$S$ is $1/\rho$-bounded.
Since every multi-edge of $L$ is trivially $1$-bounded, 
we can obtain a $1/\rho$-bounded Laplacian that corresponds to the
same matrix, by splitting each multi-edge into $\ceil{\rho}$ identical copies,
with a fraction $1/\ceil{\rho}$ of the initial weight. The
resulting Laplacian has at most
$\ceil{\rho} m$ multi-edges.

Our next lemma describes some basic properties of the samples output by
$\csamp$. We prove the lemma in Section~\ref{sec:csampProofs}.
\begin{lemma}
\label{lem:csampdistr}
Given a Laplacian matrix $S$ that is $1/\rho$-bounded w.r.t. $L$
and a vertex $v$, $\csamp(S,v)$  returns a sum $\sum_{e} Y_{e}$
of $\mult{S}{v}$ IID samples
$Y_{e} \in \rea^{n \times n}$.
The following conditions hold
\begin{tight_enumerate}
\item 
\label{lem:csampdistr:lap}
$Y_{e}$ is $0$
or the Laplacian of a multi-edge with endpoints $u_{1},u_{2}$,
where $u_{1},u_{2}$ are neighbors of $v$ in $S$.
\item $\expt \sum_{e} Y_{e} = C_{v}(S)$.
\item $\norm{\lnorm{Y}_{e}} \leq 1/\rho$, i.e. $Y_{e}$ is 
$1/\rho$-bounded w.r.t. $L$.
\end{tight_enumerate}
The algorithm runs in time $O(\mult{S}{v})$.
\end{lemma}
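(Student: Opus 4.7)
The plan is to verify the three numbered claims in order, since each is essentially a direct consequence of the sampling rule and standard facts about Laplacians; the only nontrivial step is the boundedness claim, where I will need to combine the hypothesis that $S$ is $1/\rho$-bounded with the triangle inequality for effective resistances (Lemma~\ref{lem:levscore-sum-bound}).

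Claim~1 is immediate from inspection of Algorithm~\ref{alg:sampler}: in line~\ref{alg:sampler:assign} the variable $Y_i$ is set to a positive scalar times $\bb_{u_1,u_2}\bb_{u_1,u_2}^\top$, which is exactly the Laplacian of a multi-edge of weight $w(e_1)w(e_2)/(w(e_1)+w(e_2))$ between the two $v$-neighbors $u_1,u_2$; in line~\ref{alg:sampler:assignZero} it is $0$. Since each iteration of the outer loop performs an independent pair of draws from the same two distributions, the $Y_i$ are i.i.d. For Claim~2, I will compute $\expt Y_i$ by summing over ordered pairs $(e_1,e_2)$ of multi-edges at $v$; the probability of that pair is $\tfrac{w(e_1)}{w_S(v)}\cdot\tfrac{1}{\mult{S}{v}}$. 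Pairing up $(e_1,e_2)$ with $(e_2,e_1)$, the two contributions combine to $\tfrac{w(e_1)+w(e_2)}{w_S(v)\,\mult{S}{v}}\cdot\tfrac{w(e_1)w(e_2)}{w(e_1)+w(e_2)}$, the $(w(e_1)+w(e_2))$ factors cancel, and after multiplying by $\mult{S}{v}$ one recognises the resulting expression as the unordered-pair form of the clique Laplacian in equation~\eqref{eq:cliquestructure}, namely $C_v(S)$.

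The main obstacle is Claim~3, the uniform $1/\rho$ norm bound on $\lnorm{Y_e}$. Writing $Y_e = \tfrac{w(e_1)w(e_2)}{w(e_1)+w(e_2)}\bb_{u_1,u_2}\bb_{u_1,u_2}^\top$ and recalling that $\lnorm{\bb_{u_1,u_2}\bb_{u_1,u_2}^\top}$ is rank one with spectral norm equal to the effective resistance $\bb_{u_1,u_2}^\top L^+ \bb_{u_1,u_2}$, the required inequality becomes
\[
\frac{w(e_1)w(e_2)}{w(e_1)+w(e_2)}\,\bb_{u_1,u_2}^\top L^+ \bb_{u_1,u_2}\;\le\;\frac{1}{\rho}.
\]
The triangle inequality for effective resistances gives $\bb_{u_1,u_2}^\top L^+ \bb_{u_1,u_2} \le \bb_{v,u_1}^\top L^+ \bb_{v,u_1} + \bb_{v,u_2}^\top L^+ \bb_{v,u_2}$, and the $1/\rho$-boundedness hypothesis applied to the multi-edges $e_1,e_2$ of $S$ yields $w(e_j)\,\bb_{v,u_j}^\top L^+\bb_{v,u_j}\le 1/\rho$ for $j=1,2$. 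Plugging these in upper bounds the left hand side by $\tfrac{1}{\rho}\cdot\tfrac{w(e_1)w(e_2)}{w(e_1)+w(e_2)}\left(\tfrac{1}{w(e_1)}+\tfrac{1}{w(e_2)}\right)=\tfrac{1}{\rho}$, as desired.

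For the runtime, I will note that after a one-time $O(\mult{S}{v})$ setup of the alias tables for the weighted distribution on multi-edges at $v$ and for the uniform distribution on the same set, each iteration of the loop performs two $O(1)$-time samples and $O(1)$ arithmetic; the total work is therefore $O(\mult{S}{v})$. I expect the only subtlety worth flagging to be the verification that the expectation computation correctly accounts for parallel multi-edges (since $W(v,u)$ in $C_v(S)$ aggregates parallel-edge weights, whereas $\csamp$ samples at the multi-edge level); this is handled automatically because the sum over ordered pairs of multi-edges at $v$ reproduces $W(v,u_1)W(v,u_2)/w_S(v)$ after combining the $(e_1,e_2)$ and $(e_2,e_1)$ terms.
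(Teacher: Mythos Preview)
Your proposal is correct and follows essentially the same approach as the paper's proof. The paper cites Lemma~\ref{lem:levscore-sum-bound} as a black box for Claim~3, whereas you inline its proof via the effective-resistance triangle inequality (which is actually Lemma~\ref{lem:reffDist}, not Lemma~\ref{lem:levscore-sum-bound}); and the paper's expectation computation skips the explicit $(e_1,e_2)\leftrightarrow(e_2,e_1)$ pairing you spell out, but the underlying calculation is identical.
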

The lemma tells us that the samples in expectation behave like the 
clique $C_{v}(S)$, and that each sample is $1/\rho$-bounded.
This will be crucial to proving concentration properties of our algorithm.
We will use the fact that the expectation of the $\csamp$
algorithm output equals the matrix produced by standard Cholesky
elimination, to show that in expectation, the sparse approximate
Cholesky decomposition 
produced by our $\cholesky$ algorithm equals the original Laplacian.
We will also see how we can use this expected behaviour to represent
our sampling process as a martingale.
We define the $k^{\textrm{th}}$ approximate Laplacian as 
\begin{equation}
\label{eq:kApxLap}
L^{(k)} = \samp{S}^{(k)} + \sum_{i=1}^{k} \alpha_{i} \cc_{i} \cc_{i}^{\top} 
.
\end{equation}
Thus our final output equals $L^{(n)}$.
Note that Line~\eqref{alg:cholesky:laststep} of the $\cholesky$ algorithm does not introduce any
sampling error, and so $L^{(n)} = L^{(n-1)}$.
The only significance of Line~\eqref{alg:cholesky:laststep} is that it
puts the matrix in the form we need for our factorization.
Now
\begin{align*}
L^{(k)} - L^{(k-1)}
&= \alpha_{k}\cc_{k}\cc_{k}^{\top} +
\samp{S}^{(k)} - \samp{S}^{(k-1)}
\\
&= \alpha_{k}\cc_{k}\cc_{k}^{\top} + \samp{C}_{k} -
  \vstar{\samp{S}^{(k-1)}}{{\pi(k)}} 
\\
&= \samp{C}_{k}
- C_{\pi(k)}(\samp{S}^{(k-1)})
.
\end{align*}
Each call to $\csamp$ returns a sum of sample edges.
Letting $Y^{(k)}_{e}$ denote the $e^{\text{th}}$ sample in the
$k^{\text{th}}$ call to $\csamp$, we can write this sum as 
$\sum_{e} Y^{(k)}_{e}$.
Thus, conditional on the choices of the $\cholesky$ algorithm until
step $k-1$, and conditional on $\pi(k)$,
we can apply Lemma~\ref{lem:csampdistr}
to find that the expected value of $\samp{C}_{k} $ is
$\sum_{e} \expt_{Y^{(k)}_{e}} Y^{(k)}_{e} = C_{\pi(k)}(\samp{S}^{(k-1)})$.
Hence the expected value of $L^{(k)}$ is exactly $L^{(k-1)}$,
 and we can write
\begin{align*}
  L^{(k)} - L^{(k-1)} = \sum_{e} Y^{(k)}_{e} - \expt_{Y^{(k)}_{e}} Y^{(k)}_{e}.
\end{align*}
By defining $X^{(k)}_{e} \defeq Y^{(k)}_{e} - \expt_{Y^{(k)}_{e}}
Y^{(k)}_{e}$, this becomes $L^{(k)} - L^{(k-1)} = \sum_{e}
X^{(k)}_{e}$.
So, without conditioning on the choices of the $\cholesky$
algorithm, we can write
\[
L^{(n)} - L =   L^{(n)} - L^{(0)}
  = \sum_{i = 1}^{n-1} \sum_{e} X^{(i)}_{e}
.
\]
This is a martingale.
To prove multiplicative concentration bounds,
we need to normalize the martingale by $L$, and so instead we
consider
\begin{align}
\label{eq:normalMartingale}
\lnorm{L^{(n)}} - \lnorm{L}
=
\lnorm{L^{(n-1)}} - \lnorm{L}
= \lnorm{L^{(n)}} - \lapid
= \sum_{i = 1}^{n-1} \sum_{e} \lnorm{X^{(i)}_{e}}
.
\end{align}
This martingale has considerable structure beyond a standard
martingale.
Conditional on the choices of the $\cholesky$ algorithm until
step $k-1$, and conditional on $\pi(k)$, the terms
$\lnorm{X^{(k)}_{e}}$
are independent.

In Section~\ref{sec:matrixDefsThms} we define a type of martingale
that formalizes the important %
aspects of this structure.


\subsection{{\SEQSUM}s and Matrix Concentration Results}
\label{sec:matrixDefsThms}
We use the following shorthand notation:
Given a sequence of random variables 
$(r_{1}, R^{(1)},r_{2},R^{(2)},\ldots, r_{k},R^{(k)}) $,
 for every $i,$ we
 write 
\[\seqexpt{i} \sqbr{\ \cdot\ } = \expt_{r_{1}} 
\expt_{R^{(1)}}
\cdots
\expt_{r_{i}} 
\expt_{R^{(i)}} 
\sqbr{\ \cdot\ }
.\]
Extending this notation to conditional
expectations, we write,
\[ \expt_{r_{i} } \sqbr{\ \cdot\ \middle| (i-1) }  = \expt_{r_{i}} \sqbr{\
  \cdot\ \middle \vert r_{1},R^{(1)},\ldots,r_{i-1},R^{(i-1)}}\]
\begin{definition}
  \textbf{ A {\seqsum}} is a sum of random 
  $d \times d$ matrices ${Z = \sum_{i = 1}^{k} \sum_{e = 1}^{l_{i}}Z^{(i)}_{e}}$,
  with some additional structure.
  We require that there exists a sequence of random variables
  $(r_{1}, R^{(1)},r_{2},R^{(2)},\ldots, r_{k},R^{(k)}) $, s.t. for all $1
  \leq i \leq k$, conditional on $(i-1)$ and $r_{i}$,
   $l_{i}$ is a non-negative integer, 
   and $R^{(i)}$ is a tuple of $l_{i}$ independent random variables:
   $R^{(i)} = (R^{(i)}_{1}, \ldots, R^{(i)}_{l_{i}})$.
   We also require that conditional on $(r_{1},
   R^{(1)},r_{2},R^{(2)},\ldots, r_{i},R^{(i)})$
   and $r_{i}$,
   for all  $1 \leq e \leq l_{i}$ 
   $Z^{(i)}_{e}$ is a symmetric matrix, and a
   deterministic function of $R^{(i)}_{e}$.
   Finally, we require that
   $\expt_{R^{(i)}_{e} } \sqbr{ Z^{(i)}_{e}
      \middle| (i-1), r_{i} }  = 0$.
\end{definition}
Note that $l_{i}$ is allowed to be random, as long as it is fixed 
conditional on $(i-1)$ and $r_{i}$.
The martingale given in Equation~\eqref{eq:normalMartingale}
is a {\seqsum}, with $r_{i} = \pi(i)$, and $R^{(i)}_{e} =
Y^{(i)}_{e}$.
The name is motivated by a simple model: At each step
of the martingale we pick a random bag of dice from a collection of bags
(this corresponds to the outcome of $r_{i}$) and then we independently roll each die in
the bag (corresponds to the outcomes $R^{(i)}_{e}$).

It is instructive to compare the {\seqsum}s with
matrix martingales such as those considered by Tropp~\cite{tropp2012user}.
A naive application of the results from~\cite{tropp2012user} tells us that if we have good uniform
norm bounds on each term $Z^{(i)}_{e}$,
and there exists fixed matrices $\Omega^{(i)}_{e}$ s.t. that for
all $i,e$ and for all possible outcomes we get deterministic bounds on
the matrix variances:
$\expt_{R^{(i)}_{e}} (Z^{(i)}_{e})^2 \preceq \Omega^{(i)}_{e}$,
then the concentration of the martingale is governed by
the norm of the sum of the variances
 $\norm{ \sum_{i} \sum_{e} \Omega^{(i)}_{e}} $.
In our case, this result is much too weak: Good enough $\Omega^{(i)}_{e}$
do not exist.

A slight extension of the results from~\cite{tropp2012user} allows us
to do a somewhat better: Since the terms $Z^{(i)}_{e}$ are independent
conditional on $r_{i}$, it suffices to produce fixed matrices
$\Omega^{(i)}$ s.t. that for all $i,e$ and for all possible outcomes
we get deterministic bounds on the matrix variances of the sum of
variables in each ``bag'':
$\sum_{e} \expt_{R^{(i)}_{e}} (Z^{(i)}_{e})^2 \preceq \Omega^{(i)}$.
Then the concentration of the martingale is governed by
$\norm{ \sum_{i} \Omega^{(i)}_{}} $.  Again, this result is not strong
enough: Good fixed $\Omega^{(i)}$ do not seem to exist.

We show a stronger result: If we can produce a uniform norm bound on 
$\expt_{R^{(i)}_{e}} (Z^{(i)}_{e})^2$,
then it suffices to produce fixed matrices $\widehat{\Omega}^{(i)}$ that upper
bound the matrix variance \emph{averaged over all bags}:
${\expt_{r_{i}} \sum_{e} \expt_{R^{(i)}_{e}} (Z^{(i)}_{e})^2 \preceq \widehat{\Omega}^{(i)}}$.
The concentration of the martingale is then governed by $\norm{ \sum_{i} \widehat{\Omega}^{(i)}_{}} $.
In the context where we apply our concentration result,
our estimates of $\norm{\sum_{i} \Omega^{(i)}_{}} $ are larger than
$\norm{\sum_{i} \widehat{\Omega}^{(i)}_{}} $
by a factor $\approx \frac{n}{\ln n}$.
Consequently, we would not obtain strong enough concentration using the weaker result.

The precise statement we prove is:
\begin{theorem}
\label{thm:smoothableTailProb}
Suppose ${Z = \sum_{i = 1}^{k} \sum_{e = 1}^{l_{i}}Z^{(i)}_{e}}$
is a  {\seqsum} of $d \times d$ matrices
that satisfies
\begin{tight_enumerate}
\item Every sample $Z^{(i)}_{e}$ satisfies
  $\norm{Z^{(i)}_{e}}^{2} \le \sigma^{2}_{1},$
\item For every $i$ we have
  $\norm{\sum_{e} \expt_{R^{(i)}_{e}} \sqbr{(Z^{(i)}_{e})^2 \middle|
      (i-1),r_{i}}} \le \sigma^{2}_{2},$ and
\item There exist deterministic matrices $\Omega_{i}$ such that
  $\expt_{r_{i}} \sqbr{\sum_{e} \expt_{R^{(i)}_{e}} (Z^{(i)}_{e})^2\middle| (i-1)}
\preceq \Omega_{i},$
and $\norm{\sum_{i} \Omega_{i}} \le \sigma^{2}_{3}.$
\end{tight_enumerate}
Then, for all $\eps > 0,$ we have
  \[
  \Pr\left[
   Z \not\preceq \eps I
  \right]
  \leq
  \exp \paren{ -\eps^2/4\sigma^2}, 
  \]
where 
\[\sigma^2 = \max\curlbr{\sigma^{2}_{3}, \frac{\epsilon}{2}
  \sigma_{1},\frac{4\epsilon}{5}
  \sigma_{2} }.\]
\end{theorem}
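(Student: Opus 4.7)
The plan is to prove this by a two-level application of matrix concentration that exploits the hierarchical structure of the {\seqsum}. The outer level applies matrix Freedman to $Z$ itself, conditional on a favorable bound for its predictable quadratic variation; the inner level applies matrix Bernstein to control that quadratic variation, using the deterministic bound from condition 3.

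First I would view $Z = \sum_{i,e} Z^{(i)}_e$ as a matrix martingale with the fine-grained filtration $\mathcal{F}_{i,e} = \sigma(r_1, R^{(1)}, \ldots, r_i, R^{(i)}_1, \ldots, R^{(i)}_e)$; the definition of a {\seqsum} guarantees $\expt[Z^{(i)}_e \mid \mathcal{F}_{i,e-1}] = 0$. Setting $V_{i,e} \defeq \expt[(Z^{(i)}_e)^2 \mid \mathcal{F}_{i,e-1}]$ and $W \defeq \sum_{i,e} V_{i,e}$, Tropp's matrix Freedman inequality, combined with the uniform bound $\|Z^{(i)}_e\| \leq \sigma_1$ from condition 1, gives for any deterministic $w > 0$:
\[
\Pr\sqbr{\lambda_{\max}(Z) \geq \epsilon \text{ and } \lambda_{\max}(W) \leq w} \leq d \exp\paren{ -\frac{\epsilon^2/2}{w + \sigma_1 \epsilon / 3} }.
\]

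Next I would bound $\lambda_{\max}(W)$ itself by an auxiliary concentration argument. I would decompose $W = \sum_i \expt_{r_i}\sqbr{\sum_e V_{i,e} \mid (i-1)} + \sum_i T_i$, where $T_i \defeq \sum_e V_{i,e} - \expt_{r_i}\sqbr{\sum_e V_{i,e} \mid (i-1)}$; by condition 3 the first sum is $\preceq \sum_i \Omega_i$, with operator norm at most $\sigma_3^2$. The fluctuation $\sum_i T_i$ is itself a zero-mean matrix martingale in the coarse filtration generated by the $r_i$. Here the {\seqsum} hypotheses pay off: condition 2 yields $\|T_i\| \leq 2\sigma_2^2$ almost surely, while condition 2 combined with the PSD identity $(\sum_e V_{i,e})^2 \preceq \sigma_2^2 \sum_e V_{i,e}$ (which uses $V_{i,e} \succeq 0$ and $\|\sum_e V_{i,e}\| \leq \sigma_2^2$) gives $\expt_{r_i}\sqbr{T_i^2 \mid (i-1)} \preceq \sigma_2^2 \Omega_i$, so by condition 3 the total quadratic variation of $\sum_i T_i$ is bounded \emph{deterministically} by $\sigma_2^2 \sigma_3^2$. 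A single application of matrix Bernstein therefore gives
\[
\Pr\sqbr{\lambda_{\max}\paren{\textstyle \sum_i T_i} \geq t} \leq d \exp\paren{ -\frac{t^2/2}{\sigma_2^2 \sigma_3^2 + 2\sigma_2^2 t / 3} },
\]
and no further level of recursion is needed because this variance bound is deterministic.

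Finally I would union-bound $\Pr[\lambda_{\max}(Z) \geq \epsilon]$ over the two events, set $w = \sigma_3^2 + t$ in the Freedman bound, and choose $t \asymp \epsilon \sigma_2$ so that both tail exponents scale as $\epsilon^2 / \sigma^2$ with $\sigma^2$ matching the stated maximum: $\sigma_3^2$ controls the Gaussian part of the outer tail, $(\epsilon/2)\sigma_1$ arises from the outer Bernstein correction, and $(4\epsilon/5)\sigma_2$ arises from balancing $t$ against the inner Bernstein correction. I expect the main obstacle to be the constant tuning: standard Tropp-style matrix Freedman and Bernstein each produce a factor of $d$ out front and constants like $1/2$ in the exponent, whereas the stated bound is a bare $\exp(-\epsilon^2/(4\sigma^2))$ with no dimension factor. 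Recovering this clean form will require carefully tracking constants through the union bound and possibly sharpening the underlying matrix mgf estimate (rather than invoking the black-box Freedman/Bernstein) so that $\log d$ and the union-bound factor of $2$ are absorbed into the factor of $4$ in the exponent via the choice of $t$.
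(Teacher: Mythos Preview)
Your two-level union-bound approach (outer Freedman on $Z$ conditioned on the predictable quadratic variation $W$, inner Freedman on the fluctuation $\sum_i T_i$ of $W$) is sound and genuinely different from the paper's argument. The paper avoids any union bound and instead proves a single MGF estimate $\expt\trop\exp(\theta Z)\leq d\exp(\theta^2\sigma_3^2)$ for $0<\theta^2\leq\min\{\sigma_1^{-2},\,5/(12\sigma_2^2)\}$, then optimizes $\theta=\eps/(2\sigma^2)$ in the exponential Markov step. The MGF bound is obtained by induction on the bag index $j$: at each step one applies Lieb's concavity (Tropp's corollary) \emph{twice} --- first over the independent within-bag variables $R^{(j)}_e$ to replace $\theta\sum_e Z^{(j)}_e$ by $\tfrac{4}{5}\theta^2\sum_e\expt_{R^{(j)}_e}(Z^{(j)}_e)^2$ inside the trace exponential (using $\log\expt e^{A}\preceq\tfrac{4}{5}\expt A^2$ when $\expt A\preceq 0$, $\|A\|\leq 1$), and then again over the bag choice $r_j$ to replace that PSD variance by $\tfrac{6}{5}$ times its $r_j$-expectation (using $\log\expt e^{C}\preceq\tfrac{6}{5}\expt C$ for PSD $C$, $\|C\|\leq 1/3$). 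Since $\tfrac{4}{5}\cdot\tfrac{6}{5}<1$, one inductive step deposits exactly $\theta^2\Omega_j$ into the exponent. This second application of Lieb over $r_j$ is the novelty the paper emphasizes, and it is precisely what your scheme replaces by a separate tail bound on $\sum_i T_i$.

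Your approach would yield $2d\exp(-\eps^2/(c\sigma^2))$ for some $c$ strictly larger than $4$: balancing the two tails forces $t\asymp\eps\sigma_2$ into the outer Freedman denominator as an \emph{additive} term on top of $\sigma_3^2+\sigma_1\eps/3$, and no choice of $t$ makes both exponents reach $\eps^2/(4\sigma^2)$ simultaneously. The specific constants $\tfrac{\eps}{2}\sigma_1$ and $\tfrac{4\eps}{5}\sigma_2$ in the definition of $\sigma^2$ are tuned to the paper's MGF route (they encode the ranges of $\theta$ where the two $\log\expt\exp$ estimates apply) and will not emerge from your union bound. Finally, do not try to absorb the dimension factor: the paper's own derivation ends with $d\exp(-\eps^2/(4\sigma^2))$, and the stated theorem is simply missing the $d$ (its application in the analysis of $\cholesky$ uses $n\exp(\cdot)$).
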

We remark that this theorem, and all the results in this section
extend immediately to Hermitian matrices. We prove the above theorem in Section~\ref{sec:matConc}.
This result is based on the techniques introduced by
Tropp~\cite{tropp2012user} for using Lieb's Concavity Theorem to prove
matrix concentration results.
Tropp's result improved on earlier work, such as Ahlswede and Winter
\cite{ahlswede2002strong} and Rudelson and
Vershynin~\cite{Rudelson2007}. These earlier matrix concentration
results required IID sampling, making them unsuitable for our purposes.

Unfortunately, we cannot apply Theorem~\ref{thm:smoothableTailProb}
directly to the {\seqsum} in Equation~\eqref{eq:normalMartingale}.
As we will see later, the variance of $\sum_{e}\lnorm{X^{(i)}_{e}}$
can have norm proportional to $\norm{\lnorm{L^{(i)}}}$, which can grow
large.

However, we expect that the probability of $\norm{\lnorm{L^{(i)}}}$
growing large is very small.
Our next construction allows us to leverage this idea,
and avoid the small probability tail events that prevent us from
directly applying Theorem~\ref{thm:smoothableTailProb}
to the {\seqsum} in Equation~\eqref{eq:normalMartingale}.
\begin{definition}
  Given a {\seqsum} of
  $d \times d$ matrices ${Z = \sum_{i = 1}^{k} \sum_{e = 1}^{l_{i}}Z^{(i)}_{e}}$,
  and a scalar $\eps > 0$,
  we define for each $h \in \setof{1,2, \ldots, k+1}$
  the event
 \[
 \goodUntil_{h} = \left[\forall 1 \leq j < h.
   \sum_{i = 1}^{j} \sum_{e = 1}^{l_{i}}Z^{(i)}_{e}
  \preceq \eps I \right].
 \]
 We also define the \textbf{$\eps$-\truncSum}:
\[
\truncZ =
\sum_{i = 1}^{k} \left( \ind{\goodUntil_{i}} \sum_{e = 1}^{l_{i}}Z^{(i)}_{e}\right)
\]
\end{definition}
The {\truncSum} is derived from another martingale by
forcing the martingale to get ``stuck'' if it grows too large.
This ensures that so long as the martingale is not stuck, it is not
too large.
On the other hand, as our next result shows, 
the truncated martingale fails more often than the
original martingale, and so it suffices to prove concentration of the
truncated martingale to prove concentration of the original
martingale. The theorem stated below is proven in Section~\ref{sec:matConc}.
\begin{theorem}
\label{thm:truncSum}
  Given a {\seqsum} of
  $d \times d$ matrices ${Z = \sum_{i = 1}^{k} \sum_{e = 1}^{l_{i}}Z^{(i)}_{e}}$,
  a scalar $\eps > 0$,
the associated {$\eps$-\truncSum} $\truncZ$
  is also a {\seqsum}, and
\[
\Pr[ -\eps I \not\preceq Z
\OR
Z \not\preceq \eps I 
]
\leq
\Pr[ -\eps I \not\preceq \truncZ 
\OR
\truncZ \not\preceq \eps I 
]
\]
\end{theorem}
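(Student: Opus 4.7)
The plan is to prove the two assertions separately: first that $\truncZ$ inherits the {\seqsum} structure, and then that truncation only increases the failure probability.

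For the {\seqsum} claim, I would keep the same witnessing variables $(r_{1}, R^{(1)}, \ldots, r_{k}, R^{(k)})$ and set $\truncZ^{(i)}_{e} \defeq \ind{\goodUntil_i} Z^{(i)}_{e}$. The essential observation is that $\goodUntil_{i}$ is determined by the partial sums $\sum_{i'=1}^{j}\sum_{e'} Z^{(i')}_{e'}$ for $j<i$, and hence is measurable with respect to $(i-1)$ alone. So conditional on $(i-1)$ and $r_{i}$, the scalar $\ind{\goodUntil_i}$ is a deterministic constant, each $\truncZ^{(i)}_{e}$ remains a symmetric matrix that is a deterministic function of $R^{(i)}_{e}$, and
\[
\expt_{R^{(i)}_{e}}\sqbr{\truncZ^{(i)}_{e} \,\middle|\, (i-1), r_{i}}
\;=\; \ind{\goodUntil_i}\,\expt_{R^{(i)}_{e}}\sqbr{Z^{(i)}_{e} \,\middle|\, (i-1), r_{i}}
\;=\; 0.
\]
Each condition in the definition of a {\seqsum} is therefore inherited.

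For the probability comparison, I would partition the sample space according to whether $\goodUntil_{k+1}$ holds. On $\goodUntil_{k+1}$, every indicator $\ind{\goodUntil_i}$ equals $1$, so $\truncZ = Z$ and the two-sided failure events literally coincide. On $\neg \goodUntil_{k+1}$, let $j^{\star}$ denote the smallest index with $Z_{j^{\star}} \defeq \sum_{i=1}^{j^{\star}}\sum_{e} Z^{(i)}_{e} \not\preceq \eps I$. By minimality, $\ind{\goodUntil_i}=1$ for $i\le j^{\star}$ and $\ind{\goodUntil_i}=0$ for $i>j^{\star}$, so $\truncZ = Z_{j^{\star}}$ and in particular $\truncZ \not\preceq \eps I$ holds deterministically on this sub-event. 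Writing $E$ and $\trunc{E}$ for the two-sided failure events of $Z$ and $\truncZ$ respectively, this gives
\[
\Pr[E]
= \Pr[E \cap \goodUntil_{k+1}] + \Pr[E \cap \neg\goodUntil_{k+1}]
\;\le\; \Pr[\trunc{E} \cap \goodUntil_{k+1}] + \Pr[\neg\goodUntil_{k+1}]
= \Pr[\trunc{E}],
\]
where the final equality uses that $\trunc{E}$ holds identically on $\neg\goodUntil_{k+1}$.

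The only genuine subtlety is the asymmetry: $\goodUntil$ is defined in terms of the one-sided relation $\preceq \eps I$, while the tail event we must control is two-sided. This works out because whenever truncation actually fires it fires on an upper failure, which freezes $\truncZ$ at a matrix already violating $\truncZ \preceq \eps I$, and hence contributes to the two-sided failure event for $\truncZ$; whenever truncation never fires, $Z$ and $\truncZ$ are equal. The {\seqsum} verification, once the previsibility of $\ind{\goodUntil_i}$ with respect to $(r_{1},R^{(1)},\ldots,r_{i-1},R^{(i-1)})$ is noted, is essentially bookkeeping, so there is no hard calculation to grind through.
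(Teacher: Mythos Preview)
Your proposal is correct and follows essentially the same approach as the paper: the paper isolates the key equivalence $\goodUntil_{k+1} = [\truncZ \preceq \eps I]$ as a separate lemma (proved via the same minimal-index $j^{\star}$ argument you give), and then chains the probability inequalities in the same way by partitioning on $\goodUntil_{k+1}$. Your handling of the {\seqsum} verification via the previsibility of $\ind{\goodUntil_i}$ is likewise exactly what the paper does.
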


%

%
\subsection{Analyzing the $\cholesky$ Algorithm Using {\SEQSUM}s}
\label{sec:csampMomentsAndCholeskyProof}
By taking $Z^{(k)}_{e} = \lnorm{X^{(k)}_{e}}$,
$r_{i} = \pi(i)$ and $R^{(i)}_{e} = Y^{(i)}_{e}$,
we obtain a {\seqsum}
$Z = \sum_{i = 1}^{n-1} \sum_{e} Z^{(i)}_{e}$. 
Let $\trunc{Z}$ denote the corresponding $\eps$-truncated
{\seqsum}.
The next lemma shows that $\trunc{Z}$ is well-behaved.
The lemma is proven in Section~\ref{sec:csampProofs}.
\begin{lemma}
\label{lem:csampmoments}
Given an integer $1 \leq k \leq n-1$,
conditional on the choices of the $\cholesky$ algorithm until step $k-1$,
let $\sum_{e} Y_{e} = \csamp(\samp{S}^{(k-1)},{\pi(k)})$.
Let ${X_{e} \defeq Y_{e} - \expt_{Y_{e} } Y_{e} }$.
The following statements hold
\begin{tight_enumerate}
\item Conditional on $\pi(k)$, $\expt_{Y_{e}} \ind{\goodUntil_{k}} \lnorm{X_{e}} = 0$.
\item $\norm{\ind{\goodUntil_{k}} \lnorm{X}_{e}} \leq 1/\rho$ holds always.
\item Conditional on $\pi(k)$, $\sum_{e}\expt_{Y_{e}} (\ind{\goodUntil_{k}} \lnorm{X}_{e})^{2}  \preceq
    \ind{\goodUntil_{k}}
        \frac{1}
    {\rho}
    \lnorm{C_{v}(S)}
    .
    $
\item
\label{lem:csampmoments:cexpt}
$\norm{ \ind{\goodUntil_{k}} \lnorm{C_{\pi(k)}(\samp{S}^{(k-1)})} }
\leq 1+\eps$ holds always.
\item $ \expt_{\pi(k)} \ind{\goodUntil_{k}} \lnorm{C_{\pi(k)}(\samp{S}^{(k-1)})} 
  \preceq
   \frac{2(1+\eps)}
    {n+1-k} 
    I
.
$
\end{tight_enumerate}
\end{lemma}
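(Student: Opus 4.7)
I will prove the five parts in order, using Lemma~\ref{lem:csampdistr} as the main black box.

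\textbf{Plan.} For parts (1)--(3), the key observation is that $\ind{\goodUntil_{k}}$ is measurable with respect to the history up through step $k{-}1$, so conditional on $\pi(k)$ it is a constant and commutes with $\expt_{Y_{e}}$. Part (1) is then immediate from Lemma~\ref{lem:csampdistr} and the definition of $X_{e}$. For part (2), Lemma~\ref{lem:csampdistr} gives $0 \preceq \lnorm{Y_{e}} \preceq \tfrac{1}{\rho} \lapid$, so $\lnorm{\expt_{Y_e} Y_{e}}$ lies in the same interval and $\norm{\lnorm{X_{e}}} \le 1/\rho$.

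For part (3), the main tool is that each $Y_{e}$ is either zero or a rank-one Laplacian $w \bb\bb^{\top}$. I will compute $(\lnorm{Y_{e}})^{2} = (L^{+})^{1/2}Y_{e}L^{+}Y_{e}(L^{+})^{1/2}$ using the rank-one form; since $w\bb\bb^{\top}L^{+}w\bb\bb^{\top} = w^{2}(\bb^{\top}L^{+}\bb)\bb\bb^{\top}$, we get $(\lnorm{Y_{e}})^{2} = \norm{\lnorm{Y_{e}}} \cdot \lnorm{Y_{e}} \preceq \tfrac{1}{\rho}\lnorm{Y_{e}}$. Then $\expt_{Y_{e}}(\lnorm{X_{e}})^{2} \preceq \expt_{Y_{e}}(\lnorm{Y_{e}})^{2}$ since subtracting the mean only decreases the second moment, and summing over $e$ together with $\expt\sum_{e} Y_{e} = C_{v}(S)$ (Lemma~\ref{lem:csampdistr}) gives the bound. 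Multiplying both sides by $\ind{\goodUntil_{k}}$ yields the stated inequality.

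\textbf{Parts (4) and (5).} The crux is to translate the event $\goodUntil_{k}$ into a usable norm bound on $\samp{S}^{(k-1)}$. By the definition of the truncated martingale applied to $Z^{(i)}_{e} = \lnorm{X^{(i)}_{e}}$, on $\goodUntil_{k}$ the partial sum through step $k{-}1$ satisfies $\sum_{i<k}\sum_{e}\lnorm{X^{(i)}_{e}} \preceq \eps I$. Recalling from Equation~\eqref{eq:normalMartingale} that this partial sum equals $\lnorm{L^{(k-1)}} - \lapid$, we obtain $\lnorm{L^{(k-1)}} \preceq \lapid + \eps I \preceq (1+\eps)I$ under $\goodUntil_{k}$. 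Since $L^{(k-1)} = \samp{S}^{(k-1)} + \sum_{i\le k-1}\alpha_{i}\cc_{i}\cc_{i}^{\top}$ is a sum of PSD terms, we get $\lnorm{\samp{S}^{(k-1)}} \preceq (1+\eps)I$ on $\goodUntil_{k}$.

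For part (4), using the facts (from Section~\ref{sec:prelims}) that both $C_{v}(\samp{S}^{(k-1)})$ and $\vstar{\samp{S}^{(k-1)}}{v}$ are Laplacians and $C_{v}(\samp{S}^{(k-1)}) \preceq \vstar{\samp{S}^{(k-1)}}{v} \preceq \samp{S}^{(k-1)}$, the bound above yields $\ind{\goodUntil_{k}}\lnorm{C_{\pi(k)}(\samp{S}^{(k-1)})} \preceq (1+\eps)I$. For part (5), since $\pi(k)$ is uniform over the $n-k+1$ vertices not yet eliminated, and $\samp{S}^{(k-1)}$ is supported on these vertices, summing stars gives $\sum_{v \text{ remaining}} \vstar{\samp{S}^{(k-1)}}{v} = 2\samp{S}^{(k-1)}$, so
\[
\expt_{\pi(k)} \vstar{\samp{S}^{(k-1)}}{\pi(k)} = \frac{2\samp{S}^{(k-1)}}{n+1-k}.
\]
Because $\ind{\goodUntil_{k}}$ is determined by the history before step $k$, it pulls inside $\expt_{\pi(k)}$, and combining with $C_{v} \preceq \vstar{\cdot}{v}$ and the norm bound on $\lnorm{\samp{S}^{(k-1)}}$ gives the claimed $\frac{2(1+\eps)}{n+1-k}I$ upper bound.

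\textbf{Main obstacle.} The delicate part is (4)/(5): one must carefully identify the truncated-martingale partial sum with $\lnorm{L^{(k-1)}}-\lapid$, and then use the PSD-monotonicity chain $C_{v}(\samp{S}^{(k-1)}) \preceq \vstar{\samp{S}^{(k-1)}}{v} \preceq \samp{S}^{(k-1)} \preceq L^{(k-1)}$ to transfer a bound on the approximate Laplacian to a bound on the clique that will be sampled next. The kernel issues (the $\lapid$ factor) need attention but are mild since $\samp{S}^{(k-1)}$ stays in the same kernel as $L$ by induction.
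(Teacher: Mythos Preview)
Your proposal is correct and follows essentially the same route as the paper's proof: pull the indicator $\ind{\goodUntil_{k}}$ outside the expectation for parts (1)--(3), use $\expt \lnorm{X_e}^2 \preceq \expt \lnorm{Y_e}^2 \preceq \tfrac{1}{\rho}\lnorm{C_v(S)}$ for part (3), and for parts (4)--(5) use the PSD chain $C_{\pi(k)}(\samp{S}^{(k-1)}) \preceq \vstar{\samp{S}^{(k-1)}}{\pi(k)} \preceq \samp{S}^{(k-1)} \preceq L^{(k-1)}$ together with $\lnorm{L^{(k-1)}} \preceq (1+\eps)I$ on $\goodUntil_{k}$ and the star-averaging identity $\expt_{\pi(k)}\vstar{\samp{S}^{(k-1)}}{\pi(k)} = \tfrac{2}{n+1-k}\samp{S}^{(k-1)}$. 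The only cosmetic difference is that in part (3) you exploit the rank-one form of $Y_e$ to get $(\lnorm{Y_e})^2 = \norm{\lnorm{Y_e}}\,\lnorm{Y_e}$ exactly, whereas the paper uses the general PSD inequality $A^2 \preceq \norm{A}\,A$; both yield the same bound.
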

We are now ready to prove Theorem~\ref{thm:sparsecholesky}.
\begin{proofof}{of Theorem~\ref{thm:sparsecholesky}}
We have
$\lnorm{L^{(n)}} =  \lapid  + Z$.
Since for all $k,e$, 
 $\ker(L) \subseteq \ker(Y^{(k)}_{e})$,
the statement
$(1-\eps)L \preceq L^{(n)} \preceq (1+\eps) L$
is equivalent to $-\eps\lapid \preceq Z \preceq \eps\lapid$.
Further, $\lapid Z \lapid = Z$, and so
it is equivalent to $-\eps I \preceq Z \preceq \eps I$.
By Theorem~\ref{thm:truncSum}, we have,
\begin{align}
\label{eq:lapErrorProb}
  \Pr[ (1-\eps)L \preceq L^{(n)} \preceq (1+\eps) L]
  = 1 - \Pr\left[ -Z \not\preceq \eps I 
\OR  Z \not\preceq \eps I \right]
\nonumber
& \geq
  1 - \Pr\left[ -\trunc{Z} \not\preceq \eps I
\OR  \trunc{Z}  \not\preceq\eps I \right]
\nonumber
\\  & \geq
   1 - \Pr\left[ -\trunc{Z} \not\preceq\eps I \right]
    - \Pr\left[ \trunc{Z}  \not\preceq \eps I \right]
\end{align}
To lower bound this probability, we'll prove concentration using
Theorem~\ref{thm:smoothableTailProb}. We now compute the parameters
for applying the theorem.
From Lemma~\ref{lem:csampdistr}, for all $k$ and $e,$ we have
$\expt_{Y_{e}} \ind{\goodUntil_{k}}\lnorm{X^{(k)}_{e}} = 0$,
$\norm{\ind{\goodUntil_{k}} \lnorm{X^{(k)}_{e}}} \leq
\frac{1}{\rho}$. 
Thus, we can pick $\sigma_{1} = \frac{1}{\rho}.$
Next, again by Lemma~\ref{lem:csampdistr}, we have
\[\small
\norm{ \sqbr{\sum_{e} \expt_{Y^{(k)}_{e}}
\paren{\ind{\goodUntil_{k}}\lnorm{X^{(k)}_{e}}}^2 \middle|
    (k),r_{k}}}
\le \frac{1}{\rho} \norm{\ind{\goodUntil_{k}} \lnorm{C_{\pi(k)}(S)}}
\le \frac{1+\eps}{\rho} \le \frac{3}{2\rho}
.
\]
Thus, we can pick $\sigma_{2} = \sqrt{\frac{3}{2\rho}}.$
Finally, Lemma~\ref{lem:csampdistr} also gives,
\begin{align*}
\expt_{\pi(k)} \sum_{e} \expt_{Y^{(k)}_{e}} \paren{\ind{\goodUntil_{k}}\lnorm{ X^{(k)}_{e}}}^2 \preceq
\frac{1}{\rho}
\expt_{\pi(k)} \ind{\goodUntil_{k}}\lnorm{C_{\pi(k)}(\samp{S}^{(k-1)}) }
\preceq
\frac{2(1+\eps)}{\rho (n+1-k)} I
\preceq 
\frac{3}{\rho (n+1-k)} I.
\end{align*}
Thus, we can pick $\Omega_{k} =  \frac{3}{\rho (n+1-k)} I,$ and
\[\sigma_{3}^{2} = \frac{3\ln n}{\rho} \ge \sum_{k=1}^{n-1}
\frac{3}{\rho (n+1-k)}.\]
%
Similarly, we obtain concentration for
$-\trunc{Z}$ with the same parameters.
%
%
Thus, by Theorem~\ref{thm:smoothableTailProb},
\[
\Pr\left[ -\truncZ \not\preceq \eps I \right]
    + \Pr\left[ \truncZ \not\preceq \eps I \right]
\leq
2 n \exp\paren{\nfrac{- \eps^2}{4\sigma^2}},\]
for 
\[\sigma^2 = \max\curlbr{\sigma^{2}_{3}, \frac{\epsilon}{2}
  \sigma_{1},\frac{4\epsilon}{5} \sigma_{2} } = \max\curlbr{\frac{3\ln
    n}{\rho},\frac{\eps}{2}\sqrt{\frac{3}{2\rho}},\frac{4\eps}{5\rho}}.\]

Picking $\rho = \ceil{12 (1+\delta)^{2} \eps^{-2} \ln^{2} n},$
we get $\sigma^{2} \le \frac{\eps^2}{4 (1+\delta)\ln n},$ and 
\[
\Pr\left[ -\truncZ \not\preceq \eps I \right]
    + \Pr\left[ \truncZ  \not\preceq \eps I \right]
\leq 
2n \exp \paren{ - \nfrac{\eps^2}{4\sigma^2}} = 2n
\exp \paren{-(1+\delta) \ln n} = 2n^{-\delta}.
\]
Combining this with Equation~\eqref{eq:lapErrorProb}
establishes Equation~\eqref{eq:lapErrorBounds}.

Finally, we need to bound the expected running time of the algorithm.
We start by observing that the algorithm maintains the two following
invariants:
\begin{tight_enumerate}
\item \label{enum:rhoEdges}
Every multi-edge in $\samp{S}^{(k-1)}$ is $1/\rho$-bounded.
\item
\label{enum:multiEdgeCount}
The total number of multi-edges is at most $\rho m$.
\end{tight_enumerate}
We establish the first invariant inductively.
The invariant holds for $\samp{S}^{(0)}$, because of the splitting of
original edges into $\rho$ copies with weight $1/\rho$.
The invariant thus also holds for $\samp{S}^{(0)} -
\vstar{\samp{S}^{(0)}}{\pi(1)}$, since the multi-edges of this
Laplacian are a subset of the previous ones. By
Lemma~\ref{lem:csampdistr}, every multi-edge $Y_{e}$ output by
$\csamp$ is $1/\rho$-bounded, so
$\samp{S}^{(1)} = \samp{S}^{(0)} - \vstar{\samp{S}^{(0)}}{\pi(1)} +
\samp{C}_{1}$ is $1/\rho$-bounded.
If we apply this argument repeatedly for $k=1,\ldots,n-1$ we get invariant~\eqref{enum:rhoEdges}.

Invariant~\eqref{enum:multiEdgeCount} is also very simple to establish:
It holds for $\samp{S}^{(0)}$, because splitting of
original edges into $\rho$ copies does not produce more than $\rho m$
multi-edges in total.
When computing $\samp{S}^{(k)}$, we subtract
$\vstar{\samp{S}^{(k-1)}}{\pi(k)}$, which removes exactly
$\mult{\samp{S}^{(k-1)}}{\pi(k)}$ multi-edges, while we add the
multi-edges produced by the call to $\csamp(\samp{S}^{(k-1)},\pi(k))$,
which is at most $\mult{\samp{S}^{(k-1)}}{\pi(k)}$.
So the number of multi-edges is not increasing.

By Lemma~\ref{lem:csampdistr},
the running time for the call to $\csamp$
is $O(\mult{\samp{S}^{(k)}}{\pi(k)})$.
Given the invariants,
we get that the expected time for the $k^{th}$ call to $\csamp$ is
$O(\expt_{\pi(k)} \mult{\samp{S}^{(k)}}{\pi(k)} ) = O(\rho m / (n+1-k))$.
Thus the expected running time of all calls to $\csamp$ is
$O( m \rho \sum_{i = 1}^{n-1} \frac{1}{n-i}) = O( m \delta^{2}
\eps^{-2} \ln^{3} n )$.
The total number of entries in the $\mathcal{L},\mathcal{D}$ matrices
must also be bounded by $O( m \delta^{2} \eps^{-2} \ln^{3} n )$ in expectation,
and so the permutation step in Line~\ref{alg:cholesky:perm} can be applied in 
expected time
$O( m \delta^{2}\eps^{-2} \ln^{3} n )$,
and this also bounds the expected running time of the whole algorithm.
\end{proofof}


\section{Clique Sampling Proofs}
\label{sec:csampProofs}
In this section, we prove Lemmas~\ref{lem:csampdistr} and
\ref{lem:csampmoments} that characterize the
behaviour of our algorithm $\csamp$,
which is used in $\cholesky$ to approximate the clique generated by
eliminating a variable.

A important element of the $\csamp$ algorithm is our very simple approach to
leverage score estimation.
Using the
well-known result that effective resistance in Laplacians is a
distance (see Lemma~\ref{lem:reffDist}),
we give a bound on the leverage scores of all edges in a clique that
arises from elimination.
We let
\[
w_{S}(v) = \sum_{\substack{e \in E(S) \\ e \ni v}} w(e)
.
\]
Then by Equation~\eqref{eq:cliquestructure}
\begin{equation}
C_{v}(S)
= 
\frac{1}{2}
\sum_{\substack{
e \in E(S) \\
e \text{ has}\\
\text{endpoints}\\
v,u
}}
\sum_{\substack{
e' \in E(S) \\
e' \text{ has}\\
\text{endpoints}\\
v,z \neq u
 }}
\frac{w(e) w(e')} 
{w_{S}(v)}
\bb_{u,z} \bb_{u,z}^{\top}
.
\end{equation}
Note that the factor $\nfrac{1}{2}$ accounts for the fact that every pair is
double counted.
%
\begin{lemma}
\label{lem:levscore-sum-bound}
Suppose multi-edges $e,e' \ni v$
are $1/\rho$-bounded w.r.t. $L$,
and have endpoints $v,u$ and $v,z$ respectively,
and $z \neq u$,
then $w(e)w(e') \bb_{u,z} \bb_{u,z}^{\top}$ is $\frac{w(e) + w(e') }{\rho}$-bounded.
\end{lemma}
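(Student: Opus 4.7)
The plan is to rewrite the $1/\rho$-boundedness condition in terms of effective resistances, apply the triangle inequality for effective resistances (the paper's \textbf{Lemma~\ref{lem:reffDist}}), and then recombine.

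First, observe that $\lnorm{\bb_{u,v}\bb_{u,v}^{\top}} = (L^{+})^{1/2}\bb_{u,v}\bb_{u,v}^{\top}(L^{+})^{1/2}$ is a rank-one positive semidefinite matrix whose operator norm equals $\bb_{u,v}^{\top} L^{+} \bb_{u,v}$, which is exactly the effective resistance $R_{\text{eff}}(u,v)$ in the graph corresponding to $L$. Hence the assumption that $e = (v,u)$ and $e' = (v,z)$ are $1/\rho$-bounded is equivalent to
\[
R_{\text{eff}}(v,u) \le \frac{1}{\rho\, w(e)}
\quad \text{and} \quad
R_{\text{eff}}(v,z) \le \frac{1}{\rho\, w(e')}.
\]

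Next, I invoke the triangle inequality for effective resistance (a well-known consequence of $R_{\text{eff}}$ being a metric on the vertices of a connected weighted graph) to obtain
\[
R_{\text{eff}}(u,z) \;\le\; R_{\text{eff}}(u,v) + R_{\text{eff}}(v,z)
\;\le\; \frac{1}{\rho\, w(e)} + \frac{1}{\rho\, w(e')}
\;=\; \frac{w(e)+w(e')}{\rho\, w(e)\, w(e')}.
\]

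Finally, I compute the desired norm:
\[
\norm{w(e)\,w(e')\, \lnorm{\bb_{u,z}\bb_{u,z}^{\top}}}
= w(e)\,w(e')\cdot R_{\text{eff}}(u,z)
\le w(e)\,w(e') \cdot \frac{w(e)+w(e')}{\rho\, w(e)\, w(e')}
= \frac{w(e)+w(e')}{\rho},
\]
which is exactly the claim that $w(e)w(e')\bb_{u,z}\bb_{u,z}^{\top}$ is $\tfrac{w(e)+w(e')}{\rho}$-bounded. There is no real obstacle here; the entire proof hinges on recognizing that $\bb_e^{\top} L^{+} \bb_e$ is the effective resistance and on invoking the $R_{\text{eff}}$ triangle inequality, so the argument is essentially a two-line calculation once that identification is made.
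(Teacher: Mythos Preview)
Your proof is correct and is essentially identical to the paper's: both apply Lemma~\ref{lem:reffDist} (the effective-resistance triangle inequality) to bound $\norm{\lnorm{\bb_{u,z}\bb_{u,z}^{\top}}}$ by $\norm{\lnorm{\bb_{u,v}\bb_{u,v}^{\top}}}+\norm{\lnorm{\bb_{v,z}\bb_{v,z}^{\top}}}$ and then multiply through by $w(e)w(e')$. The only difference is cosmetic---you make the identification with $R_{\text{eff}}$ explicit, whereas the paper works directly with the normalized operator norms.
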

To prove Lemma~\ref{lem:levscore-sum-bound},
we need the following result about Laplacians:
\begin{lemma}
\label{lem:reffDist}
  Given a connected weighted multi-graph $G = (V,E,w)$ with
  associated Laplacian matrix $L$ in $G$,
  consider three distinct vertices $u,v,z \in V$,
  and the pair-vectors $b_{u,v}$, $b_{v,z}$ and $b_{u,z}$.
\[
\norm{\lnorm{\bb_{u,z} \bb_{u,z}^{\top}}}
\leq 
\norm{\lnorm{\bb_{u,v}\bb_{u,v}^{\top}}}
+
\norm{\lnorm{\bb_{v,z}\bb_{v,z}^{\top}}}
.
\]
\end{lemma}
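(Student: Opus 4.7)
The plan is to reduce this to the classical effective resistance triangle inequality. For any pair $x,y$ of distinct vertices, the matrix $\lnorm{\bb_{x,y}\bb_{x,y}^{\top}} = (L^{+})^{1/2} \bb_{x,y}\bb_{x,y}^{\top} (L^{+})^{1/2}$ is rank-one and PSD, so its operator norm equals its trace, namely the effective resistance
\[
R_{x,y} \defeq \bb_{x,y}^{\top} L^{+} \bb_{x,y}.
\]
So the lemma is equivalent to the inequality $R_{u,z} \leq R_{u,v} + R_{v,z}$.

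Next, I would exploit the identity $\bb_{u,z} = \bb_{u,v} + \bb_{v,z}$ (immediate from $\bb_{x,y} = \ee_{y} - \ee_{x}$) and expand the quadratic form to get
\[
R_{u,z} = \bb_{u,z}^{\top} L^{+} \bb_{u,z} = R_{u,v} + R_{v,z} + 2 \bb_{u,v}^{\top} L^{+} \bb_{v,z}.
\]
Thus it suffices to prove that the cross term $\bb_{u,v}^{\top} L^{+} \bb_{v,z}$ is non-positive. Write this cross term as $p(v) - p(u)$, where $p \defeq L^{+} \bb_{v,z} = L^{+}(\ee_{z} - \ee_{v})$, so the goal becomes $p(v) \le p(u)$.

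To establish this I would appeal to the discrete maximum principle. Since $G$ is connected, $\ker(L) = \spn(\vecone)$ and $\bb_{v,z} \perp \vecone$, we have $L p = \bb_{v,z}$. For every vertex $w \not\in \setof{v,z}$, $(Lp)(w) = 0$ rearranges to say $p(w)$ equals the weighted average of $p$ over its neighbors in $G$. Hence $p$ is discrete-harmonic off of $\setof{v,z}$. The standard connectivity argument (if $p$ attained a minimum at an interior vertex $w$, all neighbors would attain the same value, and by connectivity $p$ would be constant everywhere, contradicting $Lp \neq 0$) then forces the minimum of $p$ to lie in $\setof{v,z}$; since $(Lp)(v) = -1 < 0$, checking the local averaging condition at the putative minimum rules out $z$ and pins the minimum at $v$. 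Consequently $p(v) \le p(u)$, so $\bb_{u,v}^{\top} L^{+} \bb_{v,z} \leq 0$ and the lemma follows.

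The only substantive step is the discrete max-principle argument, which is standard but relies crucially on $G$ being connected and on matching the sign of the ``boundary charge'' to identify which of $v$ or $z$ is the minimizer; everything else is routine algebra. An equivalent, perhaps more intuitive, version of the same proof interprets $p$ as the electrical potential of a unit current flowing from $z$ to $v$ in $G$, for which current-conservation at interior vertices is exactly the harmonicity condition and the max principle says the potential is extremal at the source and sink.
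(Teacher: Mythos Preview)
Your argument is correct. The reduction to effective resistances via the rank-one observation, the expansion using $\bb_{u,z}=\bb_{u,v}+\bb_{v,z}$, and the sign analysis of the cross term through the discrete maximum principle all go through as you describe. One minor clarification: the sentence ``since $(Lp)(v)=-1<0$, checking the local averaging condition at the putative minimum rules out $z$'' is really using $(Lp)(z)=+1>0$; a global minimum at $z$ would force $(Lp)(z)\le 0$, which is the contradiction you want. Also, your connectivity-propagation sketch for the maximum principle tacitly assumes the propagation does not get blocked at a boundary vertex before covering $V$; the clean statement you actually need (and which your argument supplies) is only that the minimum is \emph{attained} on $\{v,z\}$, not that it is attained nowhere else.

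As for comparison with the paper: the paper does not prove Lemma~\ref{lem:reffDist} at all. It simply records the statement and cites Klein--Randi\'c~\cite{klein1993resistance} for the fact that effective resistance is a metric. Your write-up therefore supplies a self-contained proof where the paper defers to the literature; the approach you take (expand the quadratic form in $L^{+}$ and control the cross term via harmonicity/maximum principle) is the standard one for this inequality.
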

This is known as phenomenon that Effective Resistance is a distance~\cite{klein1993resistance}.
\begin{proofof}{of Lemma~\ref{lem:levscore-sum-bound}}
Using the previous lemma:
\begin{align*}
w(e) w(e')
\norm{
\lnorm{\bb_{u,z} \bb_{u,z}^{\top}}
}
 \leq
w(e) w(e')
\left(
\norm{\lnorm{\bb_{u,v} \bb_{u,v}^{\top}}}
\right.
 +
\left.
\norm{\lnorm{\bb_{v,z} \bb_{v,z}^{\top}}}
\right)
\leq
\frac{1}
{\rho}
\left(
w(e) + w(e')
\right)
\end{align*}
\end{proofof}
To prove Lemma~\ref{lem:csampdistr}, we need the following result
of Walker~\cite{Walker77} (see Bringmann and
Panagiotou~\cite{bringmann2012efficient} for a modern statement of the result).
\begin{lemma}
\label{lem:fastdiscretesamp}
Given a vector $p \in \rea^{d}$ of non-negative values,
 the
procedure $\textsc{UnsortedProportionalSampling}$ requires $O(d)$ preprocessing time and after
this allows for IID sampling for a random variable $x$ distributed
s.t.
\[
\Pr[ x = i ] = p(i)/\norm{p}_{1}.
\]
The time required for each sample is $O(1)$.
\end{lemma}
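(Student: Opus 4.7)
The plan is to establish this by constructing \emph{Walker's alias table}, which is the standard $O(d)$-preprocessing, $O(1)$-sample data structure for categorical distributions. First I would rescale the input: set $q_i = d \cdot p(i) / \norm{p}_{1}$ so that $\sum_{i} q_i = d$ and each $q_i \in [0, d]$. The key structural observation is that one can package the distribution as a table of $d$ cells, where cell $j$ stores two indices $(a_j, b_j)$ and a threshold $t_j \in [0,1]$, with the property that summing the ``mass'' contributed to index $i$ across all cells recovers $q_i$.

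To build the table in linear time, I would run the standard two-bucket greedy. Partition $[d]$ into the sets $S = \{i : q_i < 1\}$ and $L = \{i : q_i \ge 1\}$, each maintained as a stack or list so that insertion and removal are $O(1)$. Repeatedly pop $s \in S$ and any $\ell \in L$, set cell $s$ to $(a_s, b_s, t_s) = (s, \ell, q_s)$, update $q_\ell \gets q_\ell - (1 - q_s)$, and then re-classify $\ell$ into $S$ or $L$ (or, if $q_\ell = 0$, discard it into a trivial cell). Each iteration consumes one element of $S$ and performs $O(1)$ work, so the total preprocessing is $O(d)$. An easy invariant to check by induction is that at every step the remaining indices still satisfy $\sum_{i \text{ alive}} q_i = |S_{\text{alive}}| + |L_{\text{alive}}|$, which guarantees that whenever $S$ is nonempty, so is $L$; this is the only step that could fail, and it is the main thing I would verify carefully.

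For sampling, I would draw $j \in [d]$ uniformly (one call to a random integer generator) and $u \in [0,1]$ uniformly, and return $a_j$ if $u < t_j$ and $b_j$ otherwise; this is clearly $O(1)$ per sample assuming unit-cost arithmetic and RNG in the Real RAM model. Correctness reduces to checking that for every index $i$, the total probability mass assigned to $i$ across all cells is $q_i / d = p(i)/\norm{p}_{1}$, which follows directly from the construction: the cell $s$ contributes $q_s/d$ to $s$ and $(1-q_s)/d$ to its alias, and the bookkeeping above ensures these contributions sum to $q_i/d$ for every $i$.

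The only subtle point is the linear-time preprocessing; the correctness of the sampler is essentially algebraic once the table exists. I would therefore spend most of the proof on the invariants of the two-bucket partition and on handling the degenerate case $q_i = 1$ cleanly (placing $i$ as $(i, i, 1)$), and then conclude by citing the $O(1)$ sampling time directly from the constant-work sampling rule. For the ``unsorted'' qualifier in the procedure name, I would remark that we never need to sort $p$: the buckets $S, L$ are maintained as unordered lists and the only primitive required is ``remove an arbitrary element,'' which is $O(1)$.
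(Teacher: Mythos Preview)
Your proposal is correct: it is the standard construction of Walker's alias table, and the invariant you identify (that the surviving $q_i$ always sum to the number of surviving indices, so $S$ nonempty forces $L$ nonempty) is exactly the right thing to check for the linear-time guarantee.

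However, note that the paper does not actually prove this lemma. It is stated as a known result attributed to Walker~\cite{Walker77}, with a pointer to Bringmann and Panagiotou~\cite{bringmann2012efficient} for a modern treatment, and is used as a black box inside the running-time analysis of $\csamp$. So there is nothing to compare against: your write-up supplies a self-contained argument where the paper simply cites the literature. If anything, your version is more informative for a reader who does not already know the alias method; the paper's remark immediately following the lemma even points out that a weaker $O(\log n)$-per-sample method would suffice at the cost of one log factor, underscoring that this lemma is treated as auxiliary rather than part of the paper's contribution.
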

\begin{remark}
  We note that there are simpler sampling constructions than that
  of Lemma~\ref{lem:fastdiscretesamp} that need $O(\log n)$ time per
  sample, and using such a method would only worsen our running time
  by a factor $O(\log n)$.
\end{remark}
\begin{proofof}{of Lemma~\ref{lem:csampdistr}}
From Lines~\eqref{alg:sampler:assign} and~\eqref{alg:sampler:assignZero},
$Y_{i}$ is $0$
or the Laplacian of a multi-edge with endpoints $u_{1},u_{2}$.
To upper bound the running time, 
it is important to note that we do \emph{not} need access to the
entire matrix $S$.
We only need the multi-edges incident on $v$.
When calling $\csamp$, we only pass a copy of just these
multi-edges.

We observe that the uniform samples in Line~\eqref{alg:sampler:uSamp}
can be done in $O(1)$ time each,
provided we count the number of multi-edges incident on $v$ to find
$\mult{S}{v}$.
We can compute $\mult{S}{v}$ in $O(\mult{S}{v})$ time.
Using Lemma~\ref{lem:fastdiscretesamp},
if we do $O(\mult{S}{v})$ time preprocessing,
we can compute each sample in Line~\eqref{alg:sampler:wSamp} in time $O(1)$.
Since we do $O(\mult{S}{v})$ samples, the total time for sampling is
hence $O(\mult{S}{v})$.

Now we determine the expected value of the sum of the
samples.
Note that in the sum below,
each pair of multi-edges appears twice,
with different weights.
\begin{align*}
\expt \sum_{i} Y_{i}
& = 
\mult{S}{v}
\sum_{\substack{
e \in E(S) \\
e \text{ has}\\
\text{endpoints}\\
v,u
}}
\sum_{\substack{
e' \in E(S) \\
e' \text{ has}\\
\text{endpoints}\\
v,z \neq u
 }}
  \frac{w(e)}
  {w_{S}(v)}
  \frac{1}
  {\mult{S}{v}}
  \frac{w(e) w(e') }
  {w(e) + w(e')}
\bb_{u,z} \bb_{u,z}^{\top}
=
C_{v}(S)
.
\end{align*}
By Lemma~\ref{lem:levscore-sum-bound},
\[
\norm{\lnorm{Y_{i}}} 
\leq
\max_{
\substack{
e, e' \in E(S) \\
e,e' \text{ has}\\
\text{endpoints}\\
v,u \text{ and } v,z \neq u
 }
}
\frac{w(e) w(e')}
 {w(e)+ w(e')}
 \norm{     
   \lnorm{\bb_{u,z} \bb_{u,z}^{\top}}
 }
\leq
1/\rho
.
\]
\end{proofof}

\begin{proofof}{of Lemma~\ref{lem:csampmoments}}
Throughout the proof of this lemma,
all the random variables considered are
conditional on the choices of the $\cholesky$ algorithm up to and including step
$k-1$.

Observe, by Lemma~\ref{lem:csampdistr}:
\[
\sum_{e} \expt_{Y_{e}} \lnorm{Y_{e}}^{2}
\preceq
\expt \sum_{e}
\expt_{Y_{e}} 
\norm{\lnorm{Y_{e}}} \lnorm{Y_{e}}
\preceq
\frac{1}{\rho}
\lnorm{C_{\pi(k)}(\samp{S}^{(k-1)})}
.
\]
Now
  \[
  \expt_{Y_{e}}  \ind{\goodUntil_{k}} \lnorm{X_{e}} = \ind{\goodUntil_{k}} \expt_{Y_{e}}  \lnorm{X_{e}}
  = 0
  .
  \]
  Note that $\lnorm{Y_{e}}$ and $\expt_{Y_{e}} \lnorm{Y_{e}}$ are PSD,
  and $\norm{\expt_{Y_{e}} \lnorm{Y_{e}}} \leq \expt_{Y_{e}} \norm{\lnorm{Y_{e}}} \leq 1/\rho$
  so
  \[
  \norm{\lnorm{X_{e}}} = \norm{\lnorm{Y_{e}} - \expt_{Y_{e}} \lnorm{Y_{e}}} 
  \leq \max\setof{\norm{\lnorm{Y_{e}}},\norm{\expt_{Y_{e}} \lnorm{Y_{e}}}} \leq
  1/\rho
  .
  \]
  Also $\expt_{Y_{e}}  (\ind{\goodUntil_{k}} \lnorm{X_{e}})^{2}
   = \ind{\goodUntil_{k}} \expt_{Y_{e}} \lnorm{X_{e}}^{2}$,
  so
  \begin{align*}
    \expt_{Y_{e}}  \lnorm{X_{e}}^{2}
     =
      (\expt_{Y_{e}}  \lnorm{Y_{e}}^{2}) - (\expt_{Y_{e}}  \lnorm{Y_{e}})^{2}
       \preceq 
       (\expt_{Y_{e}}  \lnorm{Y_{e}}^{2}),
  \end{align*}
where, in the last line, we used $0 \preceq (\expt_{Y_{e}}  \lnorm{Y_{e}})^{2}$.
Thus
$\sum_{e} \expt_{Y_{e}}  (\ind{\goodUntil_{i}} \lnorm{X_{e}})^{2}
   \preceq
   \ind{\goodUntil_{k}}
    \frac{1}
    {\rho}
    \lnorm{C_{\pi(k)}(\samp{S}^{(k-1)})}$.
Equation~\eqref{eq:normalMartingale} gives:
\begin{align}
  L^{(k-1)}
  = L + \sum_{i = 1}^{k-1} \sum_{e} X^{(i)}_{e}
.
\end{align}
Consequently, $\ind{\goodUntil_{k}} = 1$ gives 
$ L^{(k-1)} \preceq (1+\eps) L$.

Now, $C_{\pi(k)}(\samp{S}^{(k-1)})$
is PSD since it is a Laplacian, so
$\norm{ \lnorm{C_{\pi(k)}(\samp{S}^{(k-1)})}} =
\lambda_{\max}( \lnorm{C_{\pi(k)}(\samp{S}^{(k-1)})} )$.
By Equation~\eqref{eq:Cv-def}, we get
${C_{\pi(k)}(\samp{S}^{(k-1)})
\preceq
\vstar{\samp{S}^{(k-1)}}{\pi(k)}}
$
and by Equation~\eqref{eq:starDef}
we get
${\vstar{\samp{S}^{(k-1)}}{\pi(k)} \preceq \samp{S}^{(k-1)}}$,
finally by Equation~\eqref{eq:kApxLap}
we get
${\samp{S}^{(k-1)}
\preceq L^{(k-1)}}$
so
\[
\norm{\ind{\goodUntil_{k}} \lnorm{ C_{\pi(k)}(\samp{S}^{(k-1)}) }}
\leq
\ind{\goodUntil_{k}}\lambda_{\max}(\lnorm{ C_{\pi(k)}(\samp{S}^{(k-1)}) })
\leq
\ind{\goodUntil_{k}}\lambda_{\max}(\lnorm{ L^{(k-1)} })
\leq 1+\eps
.
\]
Again, using
$C_{\pi(k)}(\samp{S}^{(k-1)})
\preceq
\vstar{\samp{S}^{(k-1)}}{\pi(k)}$,
we get
\begin{align}
\label{eq:randCliqueExpectation}
\expt_{\pi(k)} \ind{\goodUntil_{k}} 
\lnorm{ C_{\pi(k)}(\samp{S}^{(k-1)}) }
\preceq
\ind{\goodUntil_{k}} \expt_{\pi(k)} 
\lnorm{ \vstar{\samp{S}^{(k-1)}}{\pi(k)} }
=
\ind{\goodUntil_{k}} 
\frac{2}
{n+1-k}
 \lnorm{S^{(k-1)}} 
\preceq
\frac{2(1+\eps)}
    {n+1-k} 
    I
.
\end{align}
\end{proofof}



\section{Matrix Concentration Analysis}
\label{sec:matConc}
\label{sec:matrixProofs}
\subsection{Concentration of {\SEQSUM}s}

To prove Theorem~\ref{thm:smoothableTailProb}, we need the following lemma, which is the main
technical result of this section.
\begin{lemma}
\label{lem:smoothableMGF}
Given a 
  {\seqsum} of $d \times d$ matrices
  ${Z = \sum_{i = 1}^{k} \sum_{e = 1}^{l_{i}}Z^{(i)}_{e}}$ that is 
for all $\theta$ such that $0 < \theta^{2} \le
\min \{ \frac{1}{\sigma_{1}^{2}}, \frac{5}{12\sigma_{2}^{2}}\}$, we have,
 \[
   \expt{  \trop{\exp \paren{\theta Z}} }
  \leq
d\exp\paren{\theta^2 \sigma_{3}^{2} }.
  \]
\end{lemma}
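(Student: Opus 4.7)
The plan is to follow Tropp's framework for matrix martingale MGFs based on Lieb's concavity theorem, but with a ``two-stage'' peel that exploits the conditional-independence structure baked into {\seqsum}s: at each step $i$, the variables $Z^{(i)}_e$ are independent given $(i-1)$ and $r_i$, so we first peel the inner sum (using Tropp's master inequality), and then peel the outer draw of $r_i$ (using Lieb's theorem and the averaged variance bound $\Omega_i$). Iterating from $i=k$ down to $i=1$ will give a deterministic matrix $\theta^2 \sum_i \Omega_i$ inside the trace exponential, and condition~3 combined with $\trop \exp(A) \le d\cdot e^{\norm{A}}$ yields the claimed bound.

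The inner peel works as follows. Fix everything up through $(i-1)$ and $r_i$. Since the $Z^{(i)}_e$ are independent and mean zero with $\norm{\theta Z^{(i)}_e} \le \theta \sigma_1 \le 1$, Tropp's master inequality gives
\[
\expt_{R^{(i)}} \trop \exp\!\paren{W + \theta \sum_e Z^{(i)}_e} \le \trop \exp\!\paren{W + \sum_e \log \expt\bigl[e^{\theta Z^{(i)}_e}\bigr]},
\]
and the scalar bound $e^x-1-x \le (e-2) x^2$ for $|x|\le 1$ together with $\log(I+A)\preceq A$ yields $\log \expt[e^{\theta Z^{(i)}_e}] \preceq (e-2)\theta^2\, \expt[(Z^{(i)}_e)^2]$. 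Summing and using condition~2 gives a matrix $M_i = \sum_e \expt_{R^{(i)}_e}[(Z^{(i)}_e)^2 \mid (i-1), r_i]$ with $\norm{M_i}\le \sigma_2^2$ sitting inside the exponential. For the outer peel, Lieb's theorem (applied to $H = W$ and $N = \exp((e-2)\theta^2 M_i)$) lets me pull $\expt_{r_i}$ inside as
\[
\expt_{r_i} \trop \exp(W + (e-2)\theta^2 M_i) \le \trop \exp\!\paren{W + \log \expt_{r_i}[\exp((e-2)\theta^2 M_i)]}.
\]
Using $\norm{M_i}\le \sigma_2^2$ and the scalar estimate $e^a \le 1 + a\,(e^b-1)/b$ for $0\le a\le b$ gives $\exp((e-2)\theta^2 M_i) \preceq I + \theta^2 M_i \cdot \frac{e^{(e-2)y}-1}{y}$ with $y = \theta^2 \sigma_2^2$; taking $\expt_{r_i}$, using condition~3, and applying $\log(I+A)\preceq A$ one more time collapses the step into an additive $\theta^2\,\frac{e^{(e-2)y}-1}{y}\,\Omega_i$.

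Chaining these two peels by induction on $i$ reduces $\expt \trop \exp(\theta Z)$ to $\trop \exp\!\bigl(\theta^2 c(\theta) \sum_i \Omega_i\bigr) \le d \cdot \exp(\theta^2 c(\theta) \sigma_3^2)$, where $c(\theta) = (e^{(e-2)y}-1)/y$. The main obstacle, and the reason the hypothesis allows only $\theta^2 \sigma_2^2 \le 5/12$, is arranging $c(\theta)\le 1$: this amounts to the tight scalar inequality $e^{(e-2)y} \le 1+y$, which fails near $y=1$ but holds up to $y = 5/12$ (at the boundary $e^{5(e-2)/12} \approx 1.35 \le 17/12$). Once this is verified, the constant in the exponent is exactly $1$ and the bound $\expt \trop \exp(\theta Z) \le d \exp(\theta^2 \sigma_3^2)$ follows. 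The condition $\theta^2 \sigma_1^2 \le 1$ is needed solely for the inner scalar MGF bound and does not enter the final exponent.
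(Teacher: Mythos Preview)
Your proposal is correct and follows essentially the same route as the paper: a two-stage peel at each step $i$ (first the conditionally independent $R^{(i)}_e$'s via Lieb/Tropp, then the outer $r_i$ via Lieb/Tropp again), with scalar MGF bounds in between, chained inductively and finished off by $\trop\exp(A)\le d\,e^{\norm{A}}$. The only cosmetic difference is constant tracking: the paper rounds the inner constant to $4/5$ (Lemma~\ref{lem:mgfby2ndmoment}) and the outer to $6/5$ under $\norm{C}\le 1/3$ (Lemma~\ref{lem:mgfby1stmoment}), so that $\tfrac{4}{5}\cdot\tfrac{6}{5}=\tfrac{24}{25}\le 1$, whereas you keep the sharper $e-2$ and verify $(e^{(e-2)y}-1)/y\le 1$ at $y=5/12$ directly.
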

Before proving this lemma, we will see how to use it to prove
Theorem~\ref{thm:smoothableTailProb}.
\begin{proofof}{of Theorem~\ref{thm:smoothableTailProb}}
Given Lemma~\ref{lem:smoothableMGF},
we can show Theorem~\ref{thm:smoothableTailProb}
using the following bound via trace exponentials, which was first developed by
Ahlswede and Winter \cite{ahlswede2002strong}.
\begin{align*}
\prob{}{ Z \not\preceq \eps I }
=
\prob{}{ \lambda_{\max}(Z) \geq \eps }
=
\prob{}{ \lambda_{\max}(\exp \paren{\theta Z}) \geq \exp \paren{\theta \eps }}
& \leq
\exp \paren{- \eps \theta }  \expt \lambda_{\max}(\exp \paren{\theta Z}) 
\\
& \leq 
\exp \paren{- \eps \theta } \expt \trop {\exp \paren{\theta Z}} 
\\
& 
\leq 
 d\exp\paren{-\eps \theta+\theta^2 \sigma_{3}^{2} }.
\end{align*}
Picking $\theta = \frac{\eps}{2\sigma^{2}},$ where
$\sigma^2 = \max\curlbr{\sigma^{2}_{3}, \frac{\epsilon}{2}
  \sigma_{1},\frac{4\epsilon}{5}
  \sigma_{2} },$ the condition for Lemma~\ref{lem:smoothableMGF}
that $0 < \theta^{2} \le
\min \{ \frac{1}{\sigma_{1}^{2}}, \frac{5}{12\sigma_{2}^{2}}\}$ is
satisfied, and we get,
$\prob{}{ Z \not\preceq \eps I }
= d\exp \paren{-\frac{\eps^{2}}{2\sigma^{2}} \paren{1-
  \frac{\sigma_{3}^{2}}{2\sigma^{2}}}} \le d \exp \paren{-\frac{\eps^2}{4\sigma^2}}.$
\end{proofof}
To show Lemma~\ref{lem:smoothableMGF} will need the following result
by Tropp~\cite{tropp2012user},
which is a corollary of Lieb's Concavity Theorem~\cite{Lieb73}.
\begin{theorem}
\label{thm:troppjensen}
Given a random symmetric matrix $Z$,
and a fixed symmetric matrix $H$,
\[
\expt
{\trop{\exp \paren{H+Z}}}
\leq
\trop{
\exp\paren{H+\log \expt \exp(Z)}
}.
\]
\end{theorem}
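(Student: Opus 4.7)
The plan is to derive this as a direct application of Lieb's Concavity Theorem combined with the operator version of Jensen's inequality. Lieb's theorem states that for any fixed symmetric matrix $H$, the map $A \mapsto \trop{\exp(H + \log A)}$ is concave on the positive-definite cone. I would cite this as the black-box input and focus on how it yields the claimed trace inequality.

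The first step is to rewrite the left-hand side in a form that exposes a concave function of a matrix-valued random variable. Setting $A = \exp(Z)$, which is positive definite almost surely since $Z$ is symmetric, I have $\log A = Z$ and hence
\[
\trop{\exp(H+Z)} = \trop{\exp(H + \log A)} = f(A),
\]
where $f$ is the Lieb-concave function. Then I would apply Jensen's inequality in the form $\expt f(A) \le f(\expt A)$, valid because $f$ is concave on the positive cone and $A$ takes values there. This gives
\[
\expt \trop{\exp(H+Z)} = \expt f(\exp(Z)) \le f(\expt \exp(Z)) = \trop{\exp\paren{H + \log \expt \exp(Z)}},
\]
which is the desired inequality.

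The main subtlety, rather than a genuine obstacle, is justifying the matrix Jensen step: one needs $\expt \exp(Z)$ to be well-defined and positive definite so that its logarithm makes sense, and one needs an appropriate integrability hypothesis so that the scalar Jensen inequality applies to the concave functional $f$. Under the standing assumption that $\expt \exp(Z)$ exists (which is implicit whenever the right-hand side is finite), both requirements follow from the fact that $\exp(Z) \succ 0$ almost surely, so its expectation is also positive definite. Beyond this, the proof is essentially a one-line substitution into Lieb's theorem, so I do not foresee any technical hurdle that is not already packaged inside the citation to Lieb's Concavity Theorem.
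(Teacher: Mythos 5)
Your proof is correct and is exactly the standard derivation that the paper implicitly relies on by citing Tropp: substitute $A = \exp(Z)$ so that $\trop{\exp(H+Z)} = \trop{\exp(H + \log A)}$, invoke Lieb's concavity of $A \mapsto \trop{\exp(H + \log A)}$ on the positive-definite cone, and apply Jensen. The paper itself does not reprove this lemma but defers to Tropp~\cite{tropp2012user}, whose argument is the one you reproduce, so there is no divergence in approach.
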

We will use the following claim to control the above trace by an
inductive argument.
\begin{claim}
\label{clm:smoothableInduction}
For all $j=1,\ldots,k,$ and all $\theta$ such that $0 < \theta^{2} \le
\min \{ \frac{1}{\sigma_{1}^{2}}, \frac{5}{12\sigma_{2}^{2}}\}$, we have,
\[ \seqexpt{j}
{
\trop{\exp
\paren{
\theta^{2}
\paren{\sum_{i = j+1}^{k} 
\Omega_{i}
}
+
\theta \sum_{i=1}^{j}
   \sum_{e = 1}^{l_{i}}Z^{(i)}_{e}
  }}}
\leq 
\seqexpt{j-1}
{
\trop{\exp
\paren{
\theta^{2}
\paren{\sum_{i = j}^{k} 
\Omega_{i}
}
+
\theta \sum_{i=1}^{j-1}
   \sum_{e = 1}^{l_{i}}Z^{(i)}_{e}
  }} }
.
\]
\end{claim}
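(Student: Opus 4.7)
The strategy is to peel off the expectations $\expt_{R^{(j)}}$ and $\expt_{r_j}$ one at a time, using Theorem~\ref{thm:troppjensen} twice together with two calibrated scalar exponential inequalities lifted to matrices by the spectral mapping theorem. Fix all randomness through step $j-1$, so that $H \defeq \theta^2 \sum_{i=j+1}^{k}\Omega_i + \theta \sum_{i=1}^{j-1} \sum_{e} Z^{(i)}_e$ is deterministic. It then suffices to show that $\expt_{r_j}\expt_{R^{(j)}} \trop{\exp(H + \theta \sum_e Z^{(j)}_e)} \le \trop{\exp(H + \theta^2 \Omega_j)}$, since taking $\seqexpt{j-1}$ of both sides gives the claim.

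First, with $r_j$ still fixed, the variables $Z^{(j)}_1,\ldots,Z^{(j)}_{l_j}$ are independent by the definition of a {\seqsum}. Applying Theorem~\ref{thm:troppjensen} once for each $R^{(j)}_e$, absorbing $\log \expt \exp(\theta Z^{(j)}_e)$ into the ``fixed'' matrix before peeling the next factor, yields
\[\expt_{R^{(j)}} \trop{\exp\paren{H + \theta \sum_e Z^{(j)}_e}} \le \trop{\exp\paren{H + \sum_e \log \expt_{R^{(j)}_e} e^{\theta Z^{(j)}_e}}}.\]
Because $\expt_{R^{(j)}_e} Z^{(j)}_e = 0$ and $\|\theta Z^{(j)}_e\| \le \theta \sigma_1 \le 1$ under our hypothesis, the scalar bound $e^x \le 1 + x + (e - 2)x^2$ for $|x| \le 1$ lifts to $\expt e^{\theta Z^{(j)}_e} \preceq I + (e-2)\theta^2 \expt (Z^{(j)}_e)^2$, and combined with $\log(I + M) \preceq M$ gives $\log \expt e^{\theta Z^{(j)}_e} \preceq (e-2)\theta^2 \expt (Z^{(j)}_e)^2$. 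Writing $W_{r_j} \defeq \sum_e \expt_{R^{(j)}_e}(Z^{(j)}_e)^2$, the inner expectation is therefore bounded by $\trop{\exp(H + (e-2)\theta^2 W_{r_j})}$.

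Second, I peel off $\expt_{r_j}$ by another application of Theorem~\ref{thm:troppjensen}:
\[\expt_{r_j} \trop{\exp\paren{H + (e-2)\theta^2 W_{r_j}}} \le \trop{\exp\paren{H + \log \expt_{r_j} \exp\paren{(e-2)\theta^2 W_{r_j}}}}.\]
Since $W_{r_j} \succeq 0$ with $\|W_{r_j}\| \le \sigma_2^2$, the scalar bound $e^y \le 1 + y\,(e^c-1)/c$ valid for $0 \le y \le c$ (applied with $c = (e-2)\theta^2\sigma_2^2$) lifts to an operator inequality; taking $\expt_{r_j}$, using $\expt_{r_j} W_{r_j} \preceq \Omega_j$, and then $\log(I+M) \preceq M$, produces
\[\log \expt_{r_j} \exp\paren{(e-2)\theta^2 W_{r_j}} \preceq \theta^2 \Omega_j \cdot \frac{e^{(e-2)\theta^2 \sigma_2^2} - 1}{\theta^2 \sigma_2^2}.\]

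The main technical hurdle is showing that the composed scalar loss factor is at most $1$, giving exactly $\theta^2\Omega_j$ in the exponent. This reduces to the numerical inequality $e^{(e-2)\theta^2 \sigma_2^2} \le 1 + \theta^2 \sigma_2^2$, which is precisely why the hypothesis uses the cutoff $\theta^2 \sigma_2^2 \le 5/12$: the substitution $t \defeq \theta^2 \sigma_2^2 \in (0, 5/12]$ makes both sides of this scalar inequality easy to compare, with the constant $5/12$ chosen exactly so the bound holds throughout the allowed range. Once the factor is at most $1$, combining the two peeling steps gives $\expt_{r_j}\expt_{R^{(j)}} \trop{\exp(H + \theta \sum_e Z^{(j)}_e)} \le \trop{\exp(H + \theta^2 \Omega_j)}$, and taking $\seqexpt{j-1}$ of both sides completes the induction.
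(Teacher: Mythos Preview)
Your proof is correct and follows essentially the same approach as the paper: peel off the $R^{(j)}_e$ expectations one by one using Theorem~\ref{thm:troppjensen}, bound each $\log \expt e^{\theta Z^{(j)}_e}$ by a multiple of $\theta^2 \expt (Z^{(j)}_e)^2$, peel off the $r_j$ expectation with another application of Theorem~\ref{thm:troppjensen}, and bound $\log \expt_{r_j} e^{c\,\theta^2 W_{r_j}}$ by $\theta^2 \Omega_j$ using that $W_{r_j}$ is PSD with bounded norm. The paper packages the two scalar-to-matrix bounds as Lemmas~\ref{lem:mgfby2ndmoment} and~\ref{lem:mgfby1stmoment} with rounded constants $4/5$ and $6/5$ (so that $\tfrac{4}{5}\cdot\tfrac{6}{5}\le 1$, and $\tfrac{4}{5}\theta^2\sigma_2^2 \le \tfrac{1}{3}$ gives exactly the cutoff $\theta^2\sigma_2^2 \le \tfrac{5}{12}$), whereas you use the sharper $e-2$ and a convexity-of-$e^y$ argument; your inequality $e^{(e-2)t}\le 1+t$ in fact holds well beyond $t=5/12$, so your remark that $5/12$ is ``chosen exactly'' for this bound is a slight overstatement, but this does not affect correctness.
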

Before proving this claim, we see that it immediately implies
Lemma~\ref{lem:smoothableMGF}.
\begin{proofof}{of Lemma~\ref{lem:smoothableMGF}}
  We chain the inequalities given by
  Claim~\ref{clm:smoothableInduction} for $j  = k,k-1,\ldots,1$ to
  obtain,
\begin{align*}
\seqexpt{k}
{  \trop{\exp \paren{\theta Z}} }
 = \seqexpt{k}
{  \trop{\exp \paren{\theta \sum_{i=1}^{k}\sum_{e = 1}^{l_{i}}Z^{(i)}_{e} }}}  
\le
\trop{\exp
\paren{
\theta^{2}
\sum_{i = 1}^{k} 
\Omega_{i}
  }}
\leq
d\exp\paren{\theta^2 \sigma_{3}^{2} },
\end{align*}
where the last inequality follows from
$\trop \exp \paren{A} \le d \exp \paren{\norm{A}}$ for symmetric $A$
and
$\norm{\sum_{i = 1}^{k} 
\Omega_{i}}
\le \sigma_{3}^{2}.$
\end{proofof}
We will also need the next two lemmas, which essentially appear in
the work of Tropp~\cite{tropp2012user}.
For completeness, we also prove these lemmas in Appendix~\ref{sec:mgfBounds}.
\begin{lemma}
\label{lem:mgfby2ndmoment}
  Suppose $Z$ is a random matrix s.t. $\expt Z \preceq 0$,
  $\norm{Z} \leq 1$, then,
  $
  \log \expt \exp \paren{ Z} \preceq \frac{4}{5} \expt Z^{2}.
  $
\end{lemma}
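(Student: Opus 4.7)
The plan is to reduce the matrix inequality to a scalar inequality of the form $e^x \le 1 + x + \tfrac{4}{5} x^{2}$ on $x \in [-1,1]$, lift it via the functional calculus, take expectations, and then apply operator monotonicity of the matrix logarithm.

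First I would verify the scalar bound $e^x \le 1 + x + \tfrac{4}{5} x^{2}$ for all $x \in [-1,1]$. This is a routine one-variable calculation: at the endpoints $x=\pm 1$ one checks $e \le 2.8$ and $e^{-1} \le 0.8$, and the intermediate behaviour is controlled by noting that the function $h(x) = 1 + x + \tfrac{4}{5}x^{2} - e^{x}$ has $h(0)=0$, $h'(0) = 0$, and $h''(x) = \tfrac{8}{5} - e^{x}$, which is positive on $[-1, \ln(8/5)]$ and negative after, making $h$ non-negative throughout $[-1,1]$. The constant $\tfrac{4}{5}$ is comfortably larger than the tight value $e-2$, so this leaves room.

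Next I would lift the scalar bound to matrices. Since $\norm{Z} \le 1$, every eigenvalue of $Z$ lies in $[-1,1]$, so applying the functional calculus to the scalar inequality gives
\[
\exp(Z) \preceq I + Z + \tfrac{4}{5} Z^{2}.
\]
Taking expectation and using the hypothesis $\expt Z \preceq 0$ yields
\[
\expt \exp(Z) \preceq I + \expt Z + \tfrac{4}{5} \expt Z^{2} \preceq I + \tfrac{4}{5} \expt Z^{2}.
\]

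Finally I would apply $\log$ to both sides. The matrix logarithm is operator monotone on the cone of positive definite matrices (Löwner's theorem), so
\[
\log \expt \exp(Z) \preceq \log\!\left(I + \tfrac{4}{5} \expt Z^{2}\right).
\]
Since $\tfrac{4}{5}\expt Z^{2}$ is PSD, the scalar inequality $\log(1+t) \le t$ for $t \ge 0$ lifts by the functional calculus to $\log(I + A) \preceq A$ for PSD $A$, giving the claim. The only subtle point is justifying operator monotonicity of $\log$ and ensuring $\expt \exp(Z)$ is positive definite so that $\log$ is well defined; the latter is automatic since $\exp(Z) \succ 0$ pointwise and hence its expectation is positive definite as well. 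I expect no serious obstacle here — the whole argument is a standard application of the Tropp-style matrix MGF machinery, with the only mild calculation being the scalar quadratic bound with constant $\tfrac{4}{5}$.
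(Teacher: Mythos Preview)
Your proposal is correct and follows essentially the same route as the paper. The only cosmetic difference is in how the scalar step is phrased: the paper writes $e^{z}=1+z+z^{2}f(z)$ with $f(z)=(e^{z}-z-1)/z^{2}$, observes that $f$ is increasing with $f(1)\le 4/5$, and then conjugates $f(Z)\preceq f(1)I$ by $Z$; you instead verify $e^{x}\le 1+x+\tfrac{4}{5}x^{2}$ on $[-1,1]$ directly and lift by functional calculus. Both arrive at $\exp(Z)\preceq I+Z+\tfrac{4}{5}Z^{2}$, and the remaining steps (take expectation, use $\expt Z\preceq 0$, apply operator monotonicity of $\log$ and $\log(1+t)\le t$) are identical.
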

\begin{lemma}
\label{lem:mgfby1stmoment}
  Suppose $C$ is a PSD random matrix s.t.
  $\norm{C} \leq 1/3$, then,
  $
  \log \expt \exp \paren{C} \preceq  \frac{6}{5} \expt C .
  $
\end{lemma}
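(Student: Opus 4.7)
The plan is to reduce the matrix statement to a scalar inequality on the eigenvalues of $C$, then lift back to matrices using the standard transfer principle together with operator monotonicity of $\log$.

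First, I would establish the scalar inequality $e^{x} \leq 1 + \tfrac{6}{5} x$ for all $x \in [0, \tfrac{1}{3}]$. This is a routine calculus check: the difference $f(x) = 1 + \tfrac{6}{5}x - e^{x}$ satisfies $f(0) = 0$, $f''(x) = -e^{x} < 0$, so $f$ is concave, and one verifies $f(1/3) = 1.4 - e^{1/3} > 0$, so $f \geq 0$ on the interval. Since $C$ is PSD with $\norm{C} \leq 1/3$, the eigenvalues of $C$ lie in $[0, 1/3]$, so applying the scalar bound spectrally gives the matrix inequality
\[
\exp(C) \preceq I + \tfrac{6}{5} C.
\]

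Next, I would take expectations on both sides, using that the PSD order is preserved under expectations of random matrices:
\[
\expt \exp(C) \preceq I + \tfrac{6}{5} \expt C.
\]
Since $\expt C$ is PSD, the right-hand side has eigenvalues at least $1$, so we may take $\log$ of both sides. Using the operator monotonicity of $\log$ on the positive definite cone, we get
\[
\log \expt \exp(C) \preceq \log\!\paren{I + \tfrac{6}{5}\expt C}.
\]

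Finally, I would apply the scalar inequality $\log(1+x) \leq x$ for $x \geq 0$, lifted spectrally to the PSD matrix $\tfrac{6}{5} \expt C$, to conclude
\[
\log\!\paren{I + \tfrac{6}{5}\expt C} \preceq \tfrac{6}{5}\expt C,
\]
which chained with the previous inequality yields the claim. The only mildly delicate step is the scalar bound $e^{x} \leq 1 + \tfrac{6}{5} x$ on $[0, 1/3]$, since the constant $6/5$ is essentially tight at $x = 1/3$; everything else is standard application of the transfer principle and operator monotonicity of $\log$.
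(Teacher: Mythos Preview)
Your proof is correct and follows essentially the same approach as the paper. The paper phrases the scalar step via the function $g(z) = (e^{z}-1)/z$, observing that $g$ is increasing and $g(1/3) \leq 6/5$, and then writes $\exp(C) = I + C^{1/2} g(C) C^{1/2} \preceq I + g(1/3)\,C$; but this is equivalent to your direct verification of $e^{x} \leq 1 + \tfrac{6}{5}x$ on $[0,1/3]$ followed by the transfer principle, and the remaining steps (take expectation, apply operator monotonicity of $\log$, then $\log(1+x)\leq x$) are identical.
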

We also need the following well-known fact (see for example
\cite{tropp2012user}):
\begin{fact}
  \label{fact:trexp-monotone}
  Given symmetric matrices $A,B$ s.t. $A \preceq B$,
  $\trop{\exp\paren{A}} \preceq \trop{\exp\paren{B}}$.
\end{fact}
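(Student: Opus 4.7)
The plan is to reduce this statement about traces of matrix exponentials to a purely scalar fact about eigenvalues, by combining Weyl's monotonicity theorem with the fact that the scalar exponential is monotone increasing on $\R$. Reading the conclusion $\trop{\exp(A)} \preceq \trop{\exp(B)}$ as the scalar inequality $\trop{\exp(A)} \le \trop{\exp(B)}$ (both sides being real numbers), the claim becomes: whenever $A \preceq B$ are symmetric, their ordered eigenvalues satisfy the corresponding pointwise inequalities, so summing any monotone scalar function of the eigenvalues preserves the order.

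First I would order the eigenvalues of $A$ and $B$ in non-decreasing fashion as $\lambda_1(A) \le \cdots \le \lambda_d(A)$ and $\lambda_1(B) \le \cdots \le \lambda_d(B)$. The key input is Weyl's monotonicity principle: if $A \preceq B$ then $\lambda_i(A) \le \lambda_i(B)$ for every $i$. This follows from the Courant--Fischer min--max characterization,
\[
\lambda_i(M) = \min_{\substack{S \subseteq \R^d \\ \dim S = d-i+1}} \max_{\substack{x \in S \\ \norm{x}=1}} x^{\top} M x,
\]
applied to $M = A$ and $M = B$: for any candidate subspace $S$, the quadratic form $x^{\top} A x \le x^{\top} B x$ by hypothesis, so the inner maximum for $A$ is bounded by that for $B$, and hence so is the outer minimum.

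Next I would use that $\exp : \R \to \R$ is monotone increasing, so $\lambda_i(A) \le \lambda_i(B)$ implies $\exp(\lambda_i(A)) \le \exp(\lambda_i(B))$ for every $i$. Since the spectral mapping theorem gives that the eigenvalues of $\exp(M)$ are exactly $\{\exp(\lambda_i(M))\}$, we have
\[
\trop{\exp(A)} = \sum_{i=1}^{d} \exp(\lambda_i(A)) \le \sum_{i=1}^{d} \exp(\lambda_i(B)) = \trop{\exp(B)},
\]
which is the desired inequality.

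There is no real obstacle here; the statement is standard and each ingredient (Courant--Fischer, Weyl's monotonicity, spectral mapping) is textbook. The one thing to be slightly careful about is that monotonicity of the \emph{trace} of a matrix function does not hold for all monotone scalar functions at the operator-inequality level (e.g.\ $A \preceq B$ does not imply $A^2 \preceq B^2$), but at the level of traces it does, precisely because we only need the pointwise comparison of ordered eigenvalues rather than an operator-monotone extension of the scalar function.
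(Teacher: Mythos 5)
Your argument is correct, and it fills in a proof that the paper itself omits: the paper merely cites this fact to Tropp's survey without giving a proof, so there is no ``paper approach'' to compare against. The route you take---Weyl's monotonicity principle for ordered eigenvalues combined with the scalar monotonicity of $\exp$ and the spectral mapping theorem---is the standard proof and is sound, including your closing caveat that trace-monotonicity of $\trop{f(\cdot)}$ for monotone scalar $f$ is weaker than (and does not require) operator monotonicity of $f$.

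One small notational slip worth fixing: you declare the eigenvalues to be ordered non-decreasing, $\lambda_1 \le \cdots \le \lambda_d$, but the Courant--Fischer formula you then write, with the constraint $\dim S = d - i + 1$ inside a min-max, is the form that matches a non-increasing ordering (it gives $\lambda_{\max}$ at $i=1$). For the non-decreasing convention you chose, the min-max form should read
\[
\lambda_i(M) \;=\; \min_{\substack{S \subseteq \R^d \\ \dim S = i}} \; \max_{\substack{x \in S \\ \norm{x}=1}} x^{\top} M x .
\]
This does not affect the substance: your pointwise comparison of quadratic forms over a fixed subspace $S$ carries through under either convention, so Weyl's inequality $\lambda_i(A) \le \lambda_i(B)$ for every index $i$ still follows, and the rest of the argument is unchanged.
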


\begin{lemma}
\label{lem:pushedges}
For all $\theta$ such that $0 < \theta^{2} \le \frac{1}{\sigma^{2}_{1}},$ all $j=1,\ldots,k$ and for
all symmetric $\widetilde{H}$ that are fixed given
$(r_{1}, R_{1}), \ldots, (r_{j-1},R_{j-1}),$
\[ 
\seqexpt{j}
{  \trop{\exp\paren{
\widetilde{H}
+
\theta 
  \sum_{e} Z^{(j)}_{e}} } }
\leq 
\seqexpt{j-1}
\expt_{r_{j}} 
{  \trop{\exp\paren{
\widetilde{H}
+
\frac{4}{5} \theta^{2} \expt_{R^{(j)}_{e}}  \sum_{e} (Z_e^{(j)})^{2}
} } }.\]
\end{lemma}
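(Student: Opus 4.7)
The plan is to prove the inequality by peeling off the independent random variables $R^{(j)}_{e}$ one at a time, using Theorem~\ref{thm:troppjensen} at each peeling step to pull the expectation inside the trace exponential, and then invoking Lemma~\ref{lem:mgfby2ndmoment} to replace the logarithmic MGF by a scaled second moment. Since $\seqexpt{j}$ and $\seqexpt{j-1}\expt_{r_{j}}$ differ only by the innermost $\expt_{R^{(j)}}$, and $\widetilde{H}$ is fixed under $\seqexpt{j-1}$, it suffices to establish the pointwise inequality
\[
\expt_{R^{(j)}} \trop \exp\paren{\widetilde{H} + \theta \sum_{e=1}^{l_{j}} Z^{(j)}_{e}}
\;\leq\;
\trop \exp\paren{\widetilde{H} + \frac{4}{5}\theta^{2} \sum_{e=1}^{l_{j}} \expt_{R^{(j)}_{e}} (Z^{(j)}_{e})^{2}}
\]
for any fixed outcome of $(r_{1}, R^{(1)}, \ldots, r_{j-1}, R^{(j-1)}, r_{j})$, after which taking the outer expectation yields the lemma.

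By the definition of a {\seqsum}, conditional on $(r_{1}, R^{(1)}, \ldots, r_{j-1}, R^{(j-1)}, r_{j})$ the integer $l_{j}$ is deterministic and $R^{(j)}_{1}, \ldots, R^{(j)}_{l_{j}}$ are mutually independent, with each $Z^{(j)}_{e}$ a deterministic function of $R^{(j)}_{e}$ satisfying $\expt_{R^{(j)}_{e}} Z^{(j)}_{e} = 0$. I would then argue by backward induction on the index $e$: having already replaced $R^{(j)}_{l_{j}}, \ldots, R^{(j)}_{e+1}$ by their variance contributions, the matrix
\[
\widetilde{H} \;+\; \theta \sum_{e' < e} Z^{(j)}_{e'} \;+\; \frac{4}{5}\theta^{2} \sum_{e' > e} \expt_{R^{(j)}_{e'}} (Z^{(j)}_{e'})^{2}
\]
is deterministic given $R^{(j)}_{1}, \ldots, R^{(j)}_{e-1}$. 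Applying Theorem~\ref{thm:troppjensen} with this matrix playing the role of the fixed part $H$ and $\theta Z^{(j)}_{e}$ as the random perturbation, and then invoking Lemma~\ref{lem:mgfby2ndmoment} together with Fact~\ref{fact:trexp-monotone}, advances the induction by one step. The hypotheses of Lemma~\ref{lem:mgfby2ndmoment} hold for $\theta Z^{(j)}_{e}$ because $\expt_{R^{(j)}_{e}} \theta Z^{(j)}_{e} = 0 \preceq 0$ and $\norm{\theta Z^{(j)}_{e}} \leq |\theta| \sigma_{1} \leq 1$, where the last inequality uses the assumption $\theta^{2} \leq 1/\sigma_{1}^{2}$ together with the uniform bound $\norm{Z^{(j)}_{e}} \leq \sigma_{1}$.

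The main obstacle, in my view, is the conditioning bookkeeping: verifying at each step that the variance contributions $\expt_{R^{(j)}_{e'}} (Z^{(j)}_{e'})^{2}$ accumulated so far are functions only of quantities already integrated out, so that they can legitimately be absorbed into the deterministic argument of Theorem~\ref{thm:troppjensen} at the next peeling step. This is where the {\seqsum} conditional-independence structure does real work: because each $R^{(j)}_{e'}$ is independent of the other $R^{(j)}_{e''}$ given $(j-1), r_{j}$, the conditional expectation $\expt_{R^{(j)}_{e'}} (Z^{(j)}_{e'})^{2}$ depends only on $(j-1), r_{j}$ and not on any other $R^{(j)}_{e''}$, which is exactly what the induction requires. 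Once the pointwise bound above is established, applying $\seqexpt{j-1}\expt_{r_{j}}$ to both sides yields the claim.
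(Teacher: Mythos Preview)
Your proposal is correct and follows essentially the same approach as the paper: both peel off the conditionally independent variables $R^{(j)}_{e}$ one at a time using Theorem~\ref{thm:troppjensen}, then invoke Lemma~\ref{lem:mgfby2ndmoment} (checking $\norm{\theta Z^{(j)}_{e}} \le 1$ via $\theta^{2} \le 1/\sigma_{1}^{2}$) and Fact~\ref{fact:trexp-monotone}. The only cosmetic difference is that the paper first applies Theorem~\ref{thm:troppjensen} $l_{j}$ times to accumulate $\sum_{e} \log \expt_{R^{(j)}_{e}} \exp(\theta Z^{(j)}_{e})$ inside the exponent and only then bounds each summand by $\frac{4}{5}\theta^{2}\expt_{R^{(j)}_{e}}(Z^{(j)}_{e})^{2}$, whereas you interleave the two steps; since each $\log \expt_{R^{(j)}_{e}} \exp(\theta Z^{(j)}_{e})$ is already deterministic given $(j-1),r_{j}$, the ordering is immaterial.
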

\begin{proof}
We have
\begin{align*}
\seqexpt{j}{
\trop{\exp\paren{
\widetilde{H}
+
\theta
  \sum_{e} Z^{(j)}_{e}
  }} }
& = \seqexpt{j-1}
\expt_{r_j}
\expt_{R^{(j)}_{1}}
\ldots
\expt_{R^{(j)}_{l_j}}
{
\trop{\exp\paren{
\widetilde{H}
+
\theta \sum_{e = 1}^{l_{j}}Z^{(j)}_{e}
  }} } \\
& \qquad\textrm{Using
  Theorem~\ref{thm:troppjensen} with $\textstyle H = \widetilde{H}
+ 
\theta \sum_{e = 1}^{l_j-1} Z^{(j)}_{e}$} \\
& \le \seqexpt{j-1}
\expt_{r_j}
\expt_{R^{(j)}_{1}}
\ldots
\expt_{R^{(j)}_{l_{j}-1}}
{
\trop{\exp\paren{
\widetilde{H}
+ 
\theta \sum_{e = 1}^{l_j-1} Z^{(j)}_{e}
+ 
\log \expt_{R^{(j)}_{l_j}} \exp \paren{\theta Z^{(j)}_{l_j} }
  }} } \\
& \qquad \vdots \qquad \qquad \qquad \qquad\qquad \qquad\textrm{(Using
  Theorem~\ref{thm:troppjensen} $l_j-1$ times)} \\
& \le \seqexpt{j-1}
\expt_{r_j}
{
\trop{\exp\paren{
\widetilde{H}
+
\sum_{e = 1}^{l_j} 
\log \expt_{R^{(j)}_{e}} \exp \paren{\theta Z^{(j)}_{e} }
  }} }
\end{align*}
For all $j,e,$ we have $\expt_{R^{(j)}_{e}} \theta Z_e^{j} \preceq 0,$ and
$\norm{\theta Z_{e}^{(j)}} \le 1$ (since
$\norm{Z_{e}^{(j)}} \le \sigma_{1}$ and $\theta \le \frac{1}{\sigma_{1}}),$ thus
Lemma~\ref{lem:mgfby2ndmoment} gives 
\[  \sum_{e}\log \expt_{R^{(j)}_{e}} \exp \paren{\theta Z^{(j)}_{e} } \preceq
\frac{4}{5} \theta^{2}  \sum_{e}\expt_{R^{(j)}_{e}} (Z_{e}^{(j)})^{2}.\]
Now, using Fact~\ref{fact:trexp-monotone} which states that
$\trop{\exp\paren{\cdot}}$
is monotone increasing
with respect to the PSD order, we obtain the lemma.
\end{proof}

\begin{proofof}{of Claim~\ref{clm:smoothableInduction}}
Since $0 < \theta^{2} \le \frac{1}{\sigma_{1}^{2}},$
using Lemma~\ref{lem:pushedges} with $
  \widetilde{H} = \theta^{2} \paren{\sum_{i = j+1}^{k} \Omega_{i} }
+
\theta \sum_{i=1}^{j-1}
  \sum_{e} Z^{(i)}_{e},$
 we obtain,
\begin{align*}
\seqexpt{j}
{
\trop{\exp\paren{
\theta^{2}
\paren{\sum_{i = j+1}^{k} \Omega_{i} }
+
\theta \sum_{i=1}^{j}
  \sum_{e}  Z^{(i)}_{e}
  }} } 
\leq \seqexpt{j-1}
\expt_{r_{j}} 
{  \trop{\exp\paren{
\theta^{2}
\paren{\sum_{i = j+1}^{k} \Omega_{i} }
+
\theta \sum_{i=1}^{j-1}
  \sum_{e}  Z^{(i)}_{e}
+ \frac{4}{5} \theta^{2}  \sum_{e} \expt_{R^{(j)}_{e}}   (Z_e^{(j)})^{2}} } } \\
\leq \seqexpt{j-1}
{  \trop{\exp\paren{
\theta^{2}
\paren{\sum_{i = j+1}^{k} \Omega_{i} }
+
\theta \sum_{i=1}^{j-1}
  \sum_{e}  Z^{(i)}_{e}
+ \log \expt_{r_{j}} \exp\paren{\frac{4}{5} \theta^{2} \sum_{e} \expt_{R^{(j)}_{e}}   (Z_e^{(j)})^{2} }
} } },
\end{align*}
where the last inequality uses Theorem~\ref{thm:troppjensen} with $H =
\theta^{2}
\paren{\sum_{i = j+1}^{k} \Omega_{i} }
+
\theta \sum_{i=1}^{j-1}
  \sum_{e}  Z^{(i)}_{e}$.
Now, since $0 \preceq \sum_{e}\expt_{R^{(j)}_{e}} (Z_e^{(j)})^{2},$ and
$\norm{\frac{4}{5} \theta^{2}  \sum_{e}\expt_{R^{(j)}_{e}} (Z_e^{(j)})^{2}
} \le \frac{4}{5} \theta^{2} \sigma_{2}^{2} \le \frac{1}{3},$ thus
Lemma~\ref{lem:mgfby1stmoment} gives
\[ \log \expt_{r_{j}} \exp\paren{\frac{4}{5} \theta^{2}
\sum_{e} 
\expt_{R^{(j)}_{e}}
(Z_e^{(j)})^{2} } 
\preceq
\frac{6}{5}\frac{4}{5} \theta^{2} \expt_{r_{j}}
\sum_{e}
\expt_{R^{(j)}_{e}}
 (Z_e^{(j)})^{2}
\preceq
\theta^{2} \expt_{r_{j}}
\sum_{e}
\expt_{R^{(j)}_{e}} 
(Z_e^{(j)})^{2}
\preceq
\theta^{2} \Omega_{j}.
\]
Now, using Fact~\ref{fact:trexp-monotone}, namely that $\trop{\exp\paren{\cdot}}$ is
monotone increasing with respect to the PSD order, we obtain the lemma.
\end{proofof}

\subsection{Truncating {\SEQSUM}s}
To prove this theorem, we use the next lemma, which we will prove
later in this section:
\begin{lemma}
\label{lem:goodUntilByIndicators}
\[
\goodUntil_{k+1} =
\left[  \truncZ \preceq \eps
  I \right]
.
\]
\end{lemma}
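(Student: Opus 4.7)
The plan is to prove the set equality by establishing containment in both directions. The key observation is that $\ind{\goodUntil_i}$ is monotone non-increasing in $i$: since the event $\goodUntil_{i+1}$ demands bounds on strictly more partial sums than $\goodUntil_i$, we have $\goodUntil_{i+1} \subseteq \goodUntil_i$, and hence once an indicator becomes zero, all later indicators are zero too.

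For the forward direction ($\goodUntil_{k+1} \subseteq [\truncZ \preceq \eps I]$), I would assume $\goodUntil_{k+1}$ occurs. By monotonicity, $\ind{\goodUntil_i} = 1$ for every $i \in \{1, \ldots, k\}$, so all indicators inside the definition of $\truncZ$ drop out, giving
\[
\truncZ \;=\; \sum_{i=1}^{k} \sum_{e=1}^{l_i} Z^{(i)}_{e}.
\]
Applying the defining condition of $\goodUntil_{k+1}$ at $j=k$ yields exactly $\truncZ \preceq \eps I$.

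For the reverse direction I would argue by contrapositive. Suppose $\goodUntil_{k+1}$ fails, and let $j^{*}$ be the smallest index in $\{1,\ldots,k\}$ with $\sum_{i=1}^{j^{*}} \sum_{e} Z^{(i)}_{e} \not\preceq \eps I$. By minimality of $j^{*}$, the event $\goodUntil_{j^{*}}$ holds, whereas $\goodUntil_{j^{*}+1}$ fails, and by monotonicity $\goodUntil_{i}$ fails for every $i > j^{*}$. Therefore $\ind{\goodUntil_i} = 1$ for $i \leq j^{*}$ and $\ind{\goodUntil_i} = 0$ for $i > j^{*}$, giving
\[
\truncZ \;=\; \sum_{i=1}^{j^{*}} \sum_{e=1}^{l_i} Z^{(i)}_{e} \;\not\preceq\; \eps I,
\]
which contradicts the assumption on the right-hand side and completes the proof.

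There is no real obstacle here: the lemma is essentially a definitional unpacking, and the only substantive step is recognizing the monotonicity of the indicator sequence $\ind{\goodUntil_i}$ so that the truncated sum ``freezes'' exactly at the first violation of the running bound. I would write up the argument in a few lines using a single ``first bad index'' argument, without needing any probabilistic content.
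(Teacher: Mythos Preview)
Your proposal is correct and follows essentially the same approach as the paper: both directions are handled identically, including the contrapositive argument via the least index $j^{*}$ at which the partial sum first violates the bound. The only cosmetic difference is that you explicitly articulate the monotonicity $\goodUntil_{i+1} \subseteq \goodUntil_i$ up front, whereas the paper leaves this implicit.
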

\newcommand{\Zsum}[2]{\sum_{#2}  Z^{(#1)}_{#2}}
\begin{proofof}{of Theorem~\ref{thm:truncSum}}

Note that $\goodUntil_{k+1}$ implies $Z \preceq \eps I$.
Thus $Z \not\preceq \eps I$ implies $\neg \goodUntil_{k+1}$,
or equivalently ${[Z \not\preceq \eps I] \subseteq
  \neg\goodUntil_{k+1}}$.  Using the above lemma, it immediately
implies the first part of the claim.
Also note that $\goodUntil_{k+1}$ implies $\goodUntil_{i}$ for all $i \leq
k$.
Thus, if  $\goodUntil_{k+1}$ occurs then $\ind{\goodUntil_{i}} = 1$
for all $i \leq k$.
\begin{align*}
\Pr[ -\eps I \not\preceq Z
\OR
Z \not\preceq \eps I 
]
&\leq
\Pr\left[ -\eps I \not\preceq \sum_{i = 1}^{k} \Zsum{i}{e}
\OR
\neg\goodUntil_{k+1}
\right]
\\
&= 
\Pr\left[
 \left( -\eps I \not\preceq \sum_{i = 1}^{k}\left(
  \ind{\goodUntil_{i}} \Zsum{i}{e} \right)
 \AND \goodUntil_{k+1} 
 \right)
\OR
\neg \goodUntil_{k+1} 
\right]
\\
&= 
\Pr\left[
-\eps I \not\preceq \sum_{i = 1}^{k} \left( \ind{\goodUntil_{i}}
  \Zsum{i}{e} \right)
\OR
\neg \goodUntil_{k+1} 
\right]
=
\Pr[ -\eps I \not\preceq \truncZ
\OR
 \truncZ \not\preceq \eps I ]
.
\end{align*}
In the last line, we used Lemma~\ref{lem:goodUntilByIndicators}.

$\truncZ$ is a {\seqsum}, because conditional on $(i-1)$ and $r_{i}$,
the indicator $\ind{\goodUntil_{i}}$ is fixed at either $0$ or $1$, and so in both
cases, the variables $\ind{\goodUntil_{i}} Z^{(i)}_{e}$ are
independent and zero mean.
\end{proofof}
\begin{proofof}{Lemma~\ref{lem:goodUntilByIndicators}}
We start by showing that $\goodUntil_{k+1}$
implies the event
$\left[ 
\sum_{i = 1}^{k} \left( \ind{\goodUntil_{i}}
  \Zsum{i}{e}
   \right)
\preceq \eps
  I \right]$.
Suppose $\goodUntil_{k+1}$ occurs.
Then $\goodUntil_{i}$ occurs for
all $i\leq k+1$,
so $ \ind{\goodUntil_{i}} = 1$ for all $i\leq k+1$.
But then, ${\sum_{i = 1}^{k} \ind{\goodUntil_{i}}  \Zsum{i}{e}  = \sum_{i =
  1}^{k}  \Zsum{i}{e} }$ and
  $\sum_{i =
  1}^{k}  \Zsum{i}{e} \preceq \eps
  I $ follows from $\goodUntil_{k+1}$, so $\sum_{i = 1}^{k}
  \ind{\goodUntil_{i}}  \Zsum{i}{e} \preceq \eps I$.
This proves that $\goodUntil_{k+1}$
implies the event $\left[  \sum_{i = 1}^{k} \ind{\goodUntil_{i}}  \Zsum{i}{e} \preceq \eps
  I \right]$.

Next we show that $\left[  \sum_{i = 1}^{k} \ind{\goodUntil_{i}}  \Zsum{i}{e} \preceq \eps
  I \right]$ implies the event $\goodUntil_{k+1}$.
It suffices to show the contrapositive: $\neg
\goodUntil_{k+1}$
implies $\neg \left[  \sum_{i = 1}^{k} \ind{\goodUntil_{i}}  \Zsum{i}{e} \preceq \eps
  I \right]$.
Suppose $\neg
\goodUntil_{k+1}$ occurs.
This means there must exist $j \leq k$ s.t. 
  $\sum_{i = 1}^{j} \Zsum{i}{e} \not\preceq \eps I$.
Let $j^{*}$ denote the least such $j$.
Observe that $ \ind{\goodUntil_{i}} = 1$ for all $i\leq j^{*}$, and $
\ind{\goodUntil_{i}} = 0$ for all $i > j^{*}$. 
Thus
\[
\sum_{i = 1}^{k} \ind{\goodUntil_{i}}  \Zsum{i}{e}
=
\sum_{i = 1}^{j^{*}} \Zsum{i}{e}
\not\preceq \eps
  I
.
\]
So $\neg \left[  \sum_{i = 1}^{k} \ind{\goodUntil_{i}} \Zsum{i}{e} \preceq \eps
  I \right]$ occurs.
\end{proofof}



\section*{Acknowledgements}
We thank Daniel Spielman for suggesting this project and for helpful
comments and discussions.

\bibliographystyle{alpha}
\bibliography{papers}

\appendix
\section{Conditions for Bounding Matrix Moment Generating Functions}
\label{sec:mgfBounds}
\begin{proofof}{of Lemma~\ref{lem:mgfby2ndmoment}}
We define $f(z) = \frac{e^{z} - z - 1}{z^{2}}$. Note that $f(1) \leq 4/5$.
 The function $f$ is positive and increasing in $z$ for all
real $z$.
This means for every symmetric matrix $A$,
$f(A) \preceq f(\norm{A}) I $, and so for any symmetric matrix $B$,
$Bf(A)B \preceq f(\norm{A}) B^{2} $. Thus
\begin{align*}
\expt
\exp \paren{Z}
=
\expt \sqbr{ I + Z + Z f(\norm{Z}) Z  }
 \preceq
I + 0 + f(\norm{Z}) \expt Z^{2}
\end{align*}
The lemma now follows from using the fact that $\log$ is operator
monotone (increasing), that $\log(1+z) \leq z$ for all real
$z > 0,$ and $\expt[Z^{2}] \succeq 0 $.
\end{proofof}

\begin{proofof}{of Lemma~\ref{lem:mgfby1stmoment}}
We define $g(z) = \frac{e^{z} - 1}{z}$.
 The function $g$ is positive and increasing in $z$ for all
real $z \geq 0$.
This means for every symmetric matrix $A$,
$g(A) \preceq g(\norm{A}) I $, and so for any symmetric matrix $B$,
$Bg(A)B \preceq g(\norm{A}) B^{2} $.
Also $g(1/3) \leq 6/5 $.
Thus
\begin{align*}
\expt
\exp \paren{C}
=
\expt \sqbr{ I + C^{1/2}g(C)C^{1/2} }
\preceq
I + g(1/3) \expt C
\preceq
I + \frac{6}{5} \expt C
.
\end{align*}
The lemma now follows using the fact that $\log$ is operator monotone (increasing),
$\log(1+z) \leq z$ for all real $z > 0,$ and $C \succeq 0.$ 
\end{proofof}


\section{Obtaining Concentration of Running Time}
\label{sec:timeConcentration}
As indicated in Remark~\ref{rem:timeConc}, we can obtain a version of
Theorem~\ref{thm:sparsecholesky} that provides running time guarantees
with high probability instead of in expectation,
by making a slight change to the $\cholesky$ algorithm.
In this appendix, we briefly sketch how to prove this.
We refer to the modified algorithm as $\concChol$.
The algorithm is requires only two small modifications:
Firstly, instead of initially choosing a permutation at random,
we choose the $k^{\text{th}}$ vertex to eliminate by sampling it
uniformly at random amongst the remaining vertices that have degree at
most twice the average multi-edge degree in $\samp{S}^{(k-1)}$.
We can do this by keeping track of all vertex degrees, and sampling a
remaining vertex at random, and resampling if the degree is too high,
until we get a low degree vertex.
Secondly, to make up for the slight reduction in the randomization of
the choice of vertex, we double the value of $\rho$ used in Line~\ref{alg:solver:initSplit}.

We get the following result:
\begin{theorem}
  \label{thm:concSparseCholesky}
  Given a connected undirected multi-graph
  $G =(V,E)$, with positive edges weights 
  $w : E \to \rea_{+}$, and associated Laplacian $L$,
  and scalars $\delta >1$, $0<\eps\leq1/2$,
  the algorithm $\concChol(L,\eps,\delta)$
  returns a sparse approximate Cholesky decomposition
  $(P,\mathcal{L},\mathcal{D})$ s.t. 
  with probability at least $1-2/n^{\delta}$,
  \begin{align}
    \label{eq:lapErrorBounds}
   (1-\eps) L
   \preceq
   P \mathcal{L} \mathcal{D} \mathcal{L}^{\top} P^{\top} 
   \preceq
   (1+\eps) L   .
  \end{align}
  The number of non-zero entries in $\mathcal{L}$ is $O(\frac{\delta^{2}}{\eps^{2}}
  m\log^{3} n)$.
  With high probability the algorithm runs in time $O(\frac{\delta^{2} }{\eps^{2}}
  m\log^{3} n)$.
\end{theorem}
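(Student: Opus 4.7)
The plan is to argue that the two modifications --- sampling the $k^{\text{th}}$ vertex uniformly from the ``low-degree'' half of the remaining vertices, and doubling the value of $\rho$ --- preserve the correctness analysis of Theorem~\ref{thm:sparsecholesky} up to constant factors while making the per-step cost deterministically bounded, so that the total running time concentrates.

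For correctness, the only place where uniformity of the permutation enters the proof of Theorem~\ref{thm:sparsecholesky} is part 5 of Lemma~\ref{lem:csampmoments}, which uses $\expt_{\pi(k)} \lnorm{\vstar{\samp{S}^{(k-1)}}{\pi(k)}} = \frac{2}{n+1-k}\lnorm{\samp{S}^{(k-1)}}$ together with $C_v(S) \preceq \vstar{S}{v}$. Under the new rule, $v_k$ is uniform on a (random) subset $V_k$ of the $n+1-k$ remaining vertices with $\abs{V_k} \ge (n+1-k)/2$, because at most half the remaining vertices can have multi-edge degree exceeding twice the average (a counting Markov bound). Each multi-edge contributes to at most two vertex-stars, so $\sum_{v \in V_k} \vstar{\samp{S}^{(k-1)}}{v} \preceq 2\samp{S}^{(k-1)}$, giving $\expt_{v_k} \lnorm{\vstar{\samp{S}^{(k-1)}}{v_k}} \preceq \frac{4}{n+1-k}\lnorm{\samp{S}^{(k-1)}}$. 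This costs a factor of two in $\Omega_k$ and hence in $\sigma_3^2$; doubling $\rho$ in line~\ref{alg:solver:initSplit} exactly restores the bound on $\sigma^2$ required by Theorem~\ref{thm:smoothableTailProb}, so the rest of the proof of Theorem~\ref{thm:sparsecholesky} carries through unchanged and yields \eqref{eq:lapErrorBounds} with probability at least $1 - 2/n^\delta$. The {\seqsum} structure is preserved: conditional on the past, $v_k$ plays the role of $r_k$ and the samples returned by $\csamp$ play the role of $R^{(k)}_e$, remaining zero-mean and independent conditional on $v_k$.

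For the running time, the key observation is already present in the original proof: the total multi-edge count of $\samp{S}^{(k)}$ is \emph{deterministically} non-increasing, so it never exceeds $\rho m$. Hence at step $k$ the average multi-edge degree is at most $2\rho m/(n+1-k)$, and the degree cap in $\concChol$ guarantees deterministically that $\mult{\samp{S}^{(k-1)}}{v_k} \le 4\rho m/(n+1-k)$. By Lemma~\ref{lem:csampdistr} the $\csamp$ call at step $k$ takes deterministic time $O(\rho m/(n+1-k))$, so summing over $k$ bounds the total time across all $\csamp$ calls by $O(\rho m \ln n) = O(\delta^2 \eps^{-2} m \ln^3 n)$ deterministically. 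The same bound controls the total number of non-zeros appearing in $\mathcal{L}$, and hence the cost of the permutation step and the sparsity claim in the theorem.

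The only remaining source of randomness in the running time is the rejection-sampling loop that finds a low-degree vertex. Each trial samples a uniformly random remaining vertex and checks its multi-edge degree against twice the current average in $O(1)$ time, provided we maintain per-vertex degree counters and a running total of the multi-edge count (both updated in $O(1)$ per edge insertion/deletion by $\csamp$). Each trial succeeds with probability at least $1/2$, so the number of trials at step $k$ is stochastically dominated by a $\mathrm{Geometric}(1/2)$ variable; taking $t = O(\delta \ln n)$ trials bounds the per-step failure probability by $n^{-\delta-1}$, and a union bound across the $n-1$ steps shows the total number of trials is $O(n \ln n)$ with probability at least $1 - n^{-\delta}$, which is lower-order compared to the $\csamp$ work. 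The main obstacle is not analytical but bookkeeping: arranging the incidence data structure (e.g., doubly-linked adjacency lists for each vertex, plus a single counter for the total multi-edge count) so that adding, deleting, and uniformly sampling multi-edges, and reading a vertex's degree, all run in $O(1)$ per operation, so that the cost of the vertex-sampling loop is truly dominated by the number of trials.
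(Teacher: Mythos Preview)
Your proposal is correct and matches the paper's own proof sketch essentially line for line: the same Markov-based observation that at least half the remaining vertices have at most twice the average multi-edge degree, the same factor-of-two loss in part~5 of Lemma~\ref{lem:csampmoments} compensated by doubling $\rho$, the same deterministic per-step bound $O(\rho m/(n+1-k))$ on the $\csamp$ cost from the degree cap, and the same geometric-variable analysis of the rejection sampling. The only cosmetic difference is that the paper bounds the total number of rejection trials by coupling to a sum of $n-1$ independent $\mathrm{Geometric}(1/2)$ variables (obtaining $O(n)$ w.h.p.), whereas you bound each step separately and union-bound (obtaining $O(n\log n)$); both are lower-order and either suffices.
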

\paragraph{Proof Sketch.}
The new procedure for choosing a random vertex will 
select a vertex uniformly among the remaining vertices with degree at most twice
the average degree in $\samp{S}^{(k-1)}$. An application of Markov's theorem tells us that at least
half the vertices in $\samp{S}^{(k-1)}$ will satisfy this.
The only change in our matrix concentration analysis that results from
this is that Lemma~\ref{lem:csampmoments}
Part~\ref{lem:csampmoments:cexpt} will lose a factor 2 and becomes:
$ \expt_{\pi(k)} \ind{\goodUntil_{k}} \lnorm{C_{\pi(k)}(\samp{S}^{(k-1)})} 
  \preceq
   \frac{4(1+\eps)}
    {n+1-k} 
    I
.
$
This means that when applying the
Theorem~\ref{thm:smoothableTailProb},
our bound on $\sigma_{3}^{2}$ will be worse by a factor 2.
Doubling $\rho$ will suffice to obtain the same concentration bound as
in Theorem~\ref{thm:sparsecholesky}.
Next, the running time spent on calls on $\csamp$ in the
$k^{\text{th}}$ step will now be
deterministically bounded by $O(\rho m / (n+1-k))$.
Finally, since we pick each vertex using rejection sampling,
we have to bound the time spent picking each vertex.
Each resampling will take $O(1)$ time.
The number of samples required to pick one $v$ will be distributed as
a geometric variable with success probability at least $1/2$.
Thus, the total number of vertex resamplings is distributed as a sum of
$n-1$ independent geometric random variables with success probability $1/2$.
The sum of these variables will be bounded by $10 n$ w.h.p., so the total
time of the algorithm is bounded by $O(\frac{\delta^{2}}{\eps^{2}}
  m\log^{3} n)$ w.h.p.


\newcommand{\bddChol}{\textsc{BDDSparseCholesky}}
\newcommand{\bdd}{bDD}
\newcommand{\sparsify}{\textsc{Sparsify}}
\section{Sparse Cholesky Factorization for {\bdd} Matrices}
\label{sec:bdd}
In this appendix, we sketch a version of our approximate Cholesky factorization algorithm,  that is also
applicable to BDD matrices, which include the class of Connection
Laplacian matrices. 
We call this algorithm $\bddChol$.
It follows closely the algorithmic structure used in
\cite{KyngLPSS16}.
Like~\cite{KyngLPSS16},
we need the input matrix to be non-singular, which we can achieve using the
approach described in Claims 2.4 and 2.5 of~\cite{KyngLPSS16}.

Our algorithm  replace their expander-based Schur complement
approximation routine with our simple one-by-one vertex elimination,
while still using their recursive subsampling based framework for
estimation of leverage scores.
The constants in this appendix are not optimized.

We study {\bdd} matrices as defined in~\cite{KyngLPSS16},
with $r \times r$ blocks, where $r$ is a constant (see their Section
1.1).
The class of {\bdd} matrices is Hermitian, rather than symmetric, 
but our notion of spectral approximation and
our matrix concentration results extend immediately to Hermitian
matrices.
Throughout this section, we use $(\cdot)^{\dag}$ to conjugate
transpose.
Our algorithm will still compute a Cholesky composition,
except we do not factor the individual $r \times r$
block matrices.
We will sketch a proof of the following result:
\begin{theorem}
\label{thm:intro:main}
The algorithm ${\bddChol},$
given an $nr \times nr$ {\bdd} matrix $L$
with $m$ non-zero blocks entries, runs in time $O(m \log^{3} n+n \log^{5} n)$ whp.
and computes a permutation $\pi,$ a lower triangular matrix $\mathcal{L}$
with $O(m \log^{2} n+n\log^{4} n)$ non-zero entries, and a diagonal matrix
$\mathcal{D}$ such that with probability $1-\frac{1}{\poly(n)},$
we have
\[\nfrac{1}{2} \cdot L \preceq 
Z
 \preceq \nfrac{3}{2} \cdot L,\]
where $Z = P_{\pi} \mathcal{L} \mathcal{D} \mathcal{L}^{\dag}
P_{\pi}^{\dag},$ \emph{i.e.}, $Z$ has a sparse Cholesky factorization. 
\end{theorem}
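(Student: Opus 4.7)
The plan is to combine three ingredients already in place: (i) the randomized one-by-one sampled elimination from $\cholesky$ (Section~\ref{sec:algorithm}), (ii) the recursive-sparsification leverage-score estimator of~\cite{KyngLPSS16} built on the subsampling procedure of~\cite{CohenLMMPS15}, and (iii) the Hermitian extension of Theorem~\ref{thm:smoothableTailProb}. Following~\cite{KyngLPSS16} (their Claims 2.4--2.5), I first reduce to the case where $L$ is non-singular bDD with $r\times r$ blocks, $r=O(1)$. I then use the two stated algebraic facts about bDD matrices: closure under Schur complement, and a clique-on-the-neighbors decomposition analogous to Equation~\eqref{eq:cliquestructure}. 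This lets me define a block version of $\csamp$ that, when eliminating block-vertex $v$, draws $\mult{S}{v}$ i.i.d.\ pairs of incident block edges $(e_1,e_2)$ with endpoints $(v,u_1)$, $(v,u_2)$, $u_1\neq u_2$, and emits a single block edge between $u_1,u_2$ with block weight the natural harmonic-mean analog of $w(e_1)w(e_2)/(w(e_1)+w(e_2))$. As in Lemma~\ref{lem:csampdistr}, this gives $\expt\sum_e Y_e = C_v(S)$ exactly.

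The one ingredient that does \emph{not} port directly is the leverage-score bound: Lemma~\ref{lem:levscore-sum-bound} used the effective-resistance triangle inequality, which is not known for bDD matrices. I bypass it by maintaining, alongside $\samp{S}^{(k)}$, a constant-factor spectral sparsifier $\widetilde{S}^{(k)}$ with $O(n\log^2 n)$ block edges, exactly in the recursive style of~\cite{KyngLPSS16}. Whenever $\csamp$ produces a new block edge $e$, I use $\widetilde{S}^{(k)}$ to estimate $\|\lnorm{\bb_e M_e \bb_e^{\dag}}\|$ up to a constant factor via the subsampling estimator of~\cite{CohenLMMPS15}, and then split $e$ into $O(\log^2 n)$ appropriately scaled copies so that each copy is $1/\rho$-bounded for $\rho = \Theta(\log^2 n)$. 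To keep $\widetilde{S}^{(k)}$ itself sparse as the Schur complements evolve, I re-sparsify periodically; amortized over the $n$ elimination steps, this bookkeeping contributes $O(n\log^5 n)$ to the running time.

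Once every sample is $1/\rho$-bounded with respect to $L$, I set up the matrix martingale exactly as in Section~\ref{sec:csampdistrAndMartingale}: $Z = \sum_{i,e}\lnorm{X^{(i)}_e}$ with $r_i=\pi(i)$ and $R^{(i)}_e = Y^{(i)}_e$, and its $\eps$-truncation $\widetilde Z$ from Theorem~\ref{thm:truncSum}. The analogue of Lemma~\ref{lem:csampmoments} goes through with only constant-factor loss from the leverage-score estimator, using the bDD versions of $C_{\pi(k)}(\samp{S}^{(k-1)})\preceq \vstar{\samp{S}^{(k-1)}}{\pi(k)} \preceq \samp{S}^{(k-1)} \preceq L^{(k-1)}$ which follow from the clique structure. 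Applying Theorem~\ref{thm:smoothableTailProb} (in its Hermitian form, as explicitly noted after its statement) with $\sigma_1 = 1/\rho$, $\sigma_2^2 = O(1/\rho)$, $\sigma_3^2 = O(\log n/\rho)$ and $\rho = \Theta(\log^2 n)$ yields $\tfrac12 L \preceq Z \preceq \tfrac32 L$ with probability $1-n^{-c}$ for any desired constant $c$, exactly as in the proof of Theorem~\ref{thm:sparsecholesky}.

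The main obstacle is precisely the leverage-score estimation and its amortization: everything else is a direct transcription of the Laplacian analysis, but without an effective-resistance triangle inequality one has to import the recursive sparsification machinery of~\cite{KyngLPSS16,CohenLMMPS15} and argue that it interacts cleanly with randomized Schur-complement elimination. For the sparsity and running-time claims, the total number of retained block edges is $O(m\log^2 n + n\log^4 n)$ (from the $\rho = \Theta(\log^2 n)$ splitting of the original $m$ edges plus the sparsifier overhead), and the time breaks into $O(m\log^3 n)$ for clique sampling, mirroring Theorem~\ref{thm:sparsecholesky}, plus $O(n\log^5 n)$ for the sparsifier maintenance. A high-probability running-time bound is obtained by the average-degree restriction sketched in Appendix~\ref{sec:timeConcentration}, applied verbatim to the block setting.
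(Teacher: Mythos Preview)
Your plan diverges from the paper's at exactly the point you flag as the ``main obstacle.'' The paper does \emph{not} bypass the triangle inequality by maintaining a per-step sparsifier; instead it observes that a weakened, factor-$2$ version of Lemma~\ref{lem:reffDist} \emph{does} hold for bDD matrices (Equation~\eqref{eq:bddReff}). Concretely, if $e,e'$ are $1/\rho$-bounded block multi-edges incident on $v$, the clique edge they generate is $2/\rho$-bounded. The paper compensates by having the block version of $\csamp$ draw $2\,\mult{S}{v}$ samples and scale each by $1/2$, so the $1/\rho$-bounded invariant is preserved verbatim. The only price is that the multi-edge count now grows (rather than stays flat) during elimination; the paper controls this by eliminating only $9n/10$ vertices, then calling a single $\sparsify$ on the remaining $n/10$-vertex matrix (subsampling plus JL leverage-score estimation, with a recursive call to $\bddChol$ on the subsampled matrix), and recursing. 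The running-time recursion $T(m)\le O(\rho^{1.5}m)+2T(m/10)$ then gives the claimed bounds.

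Your alternative has two concrete gaps. First, your sentence ``split $e$ into $O(\log^2 n)$ appropriately scaled copies'' cannot be uniform: if every sampled clique edge is split into $\Theta(\rho)$ copies, one elimination removes $\mult{S}{v}$ multi-edges and inserts $\Theta(\rho\,\mult{S}{v})$, so the edge count blows up geometrically and neither the $O(m\log^2 n + n\log^4 n)$ sparsity nor the time bound survives. If instead you mean to split adaptively by the estimated leverage score, you need to bound $\sum_e \hat\ell_e$ over the clique edges at each step, which you do not do. Second, you propose to estimate $\|\lnorm{\bb_e M_e \bb_e^{\dag}}\|$ (a quantity normalized by the \emph{original} $L$) using a sparsifier $\widetilde S^{(k)}$ of $\samp S^{(k)}$; the relation between these is exactly what the truncation and the Schur-complement monotonicity give you, but only one-sided, and you do not explain how the periodic re-sparsification stays valid between refreshes or how its recursive solver cost amortizes to $O(n\log^5 n)$. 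The paper avoids all of this because, thanks to Equation~\eqref{eq:bddReff}, it never needs a leverage-score estimate for any individual sampled edge during elimination; sparsification is invoked only once per level of the vertex-count recursion, on a matrix that already sits inside the truncated good event.
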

We choose a fixed $\eps = 1/2$, but the algorithms can be adapted to
produce approximate Cholesky decompositions with $\eps \leq 1/2$
spectral approximation and running time dependence $\eps^{-2}$.

For BDD matrices, we do not know a result analogous to
the fact that effective resistance is a distance in Laplacians (see
Lemma~\ref{lem:reffDist}).
Instead, we use a result that is weaker by a factor $2$:
Given two {\bdd} multi-edge matrices $e$, $e'$
$w(e) B_{u,v}B_{u,v}^{\dag}$ and
$w(e') B_{v,z} B_{v,z}^{\dag}$ that are incident on vertex blocks $v,u$ and
$v,z$ respectively, if we eliminate vertex block $v$,
this creates a multi-edge $e''$ with BDD matrix $w(e'') B_{u,z} B_{u,z}^{\dag}$ 
satisfying 

\begin{equation}
\norm{\lnorm{B_{u,z} B_{u,z}^{\dag}}}
\leq
2
\left(
\norm{\lnorm{B_{v,u} B_{v,u}^{\dag}}}
+
\norm{\lnorm{B_{v,z} B_{v,z}^{\dag}}}
\right)
\label{eq:bddReff}
\end{equation}

We will sketch how to modify the $\concChol$ algorithm to solve BDD matrices.
We call this $\bddChol$.
This algorithm is similar to $\cholesky$ and $\concChol$, except that
the number of multi-edges in the approximate factorization will be
slowly increasing with each elimination and to counter this we will
need to occasionally sparsify the matrices we produce.
First we will assume an oracle procedure $\sparsify$, which we will
later see how to construct using a boot-strapping approach that
recursively calls $\bddChol$ on smaller matrices.

\paragraph{Solving {\bdd} matrices using a sparsification oracle.}
We assume the existence of a procedure $\sparsify$.
Given a $nr \times nr$  {\bdd} matrix $S$, s.t. $S \preceq 2L$,
and $S$ is $1/\rho$-bounded
w.r.t. $L$, 
 $\sparsify(S,\rho)$ returns $\rho^{2} \cdot 2\cdot10^{5} nr$ IID distributed
samples $Y_{e}$ of multi-edges s.t. $\expt \sum_{e} Y_{e} = S$,
and each sample is $1/\rho$-bounded w.r.t. $L$.

$\bddChol(S,\rho)$ should be identical to $\concChol$,
except
\begin{enumerate}
\item The sampling rate $\rho$ is an explicit parameter to
$\bddChol$.
\item $\bddChol$ should not split the initial input multi-edge into
  $\rho$ smaller copies.
This will be important because we use $\bddChol$ recursively,
and we will only split edges at the top level.
\item We adapt the $\csamp$ routine to sample from a {\bdd}
  elimination clique,
and we produce $2 \mult{S}{v}$ samples, and scale all samples by a
factor $1/2$.
\item  After eliminating $\frac{9}{10} n$ vertices,
it calls $\sparsify(\samp{S}^{(9n/10)},\rho)$  to produce a sampled matrix
$S'$ of dimension ${nr/10 \times nr/10}$, and then calls $\bddChol(S',\rho)$
to recursively produce an approximate Cholesky decomposition of $S'$.
\item To compute an approximate Cholesky decomposition of $L$,
we set $\rho = \ceil{10^{3} \log^{2} n}$.\\
Form $S^{(0)}$ by splitting each edge of $L$ into $\rho$ copies with $1/\rho$ of their
initial weight. \\
Call $\bddChol(S^{(0)},\rho)$.
\end{enumerate}
The clique sampling routine for {\bdd} matrices uses more conservative
sampling than $\csamp$ ,
because we use the weaker Equation~\eqref{eq:bddReff}.
The sparsification step then becomes necessary because our clique
sampling routine now causes the total number of number of multi-edges
to increase with each elimination.
However, the increase will not exceed a factor $(1+8/n)$,
so after $\frac{9}{10} n$ eliminations, the total number of
multi-edges has not grown by more than $2 \cdot 10^{4}$.

We can use a {\truncSum} to analyze the entire approximate Cholesky
decomposition produced by $\bddChol$ and its recursive calls using
Theorems~\ref{thm:smoothableTailProb} and~\ref{thm:truncSum} (these
theorems extend to Hermitian matrices immediately).
The calls to $\sparsify$ will cause our bound on the martingale
variance $\sigma_{3}$ to grow
larger, but only by a constant factor.
By increasing $\rho$ by an appropriate constant, we still obtain concentration.

On the other hand, the calls to $\sparsify$ will ensure that the time
spent in the recursive calls to $\bddChol$ is only a constant fraction
of the time spent in the initial call, assuming the total number of
multi-edges in $S$ exceeds $\rho^{2} \cdot 2 \cdot 10^{6} nr$ (if not, we can
always split edges to achieve this).
This corresponds to assuming that before the initial edge splitting at
the start of the algorithm, we have at least $\rho \cdot 2 \cdot
10^{6} nr$ edges.

This means the total time to compute the approximate Cholesky
decomposition of $L$ using $\bddChol$ will only be $O(\rho (m + \rho n))$,
excluding calls to $\sparsify$.
The decomposition will have $O(m\log^{2} n+n\log^{4} n)$ non-zeros, and
its approximate inverse can be applied in $O(m \log^{2} n +n\log^{4} n)$ time.

We now sketch briefly how to implement the $\sparsify$ routine.
It closely resembles the $\sparsify$ routine of
\cite{KyngLPSS16} (see their Lemma H.1).

\paragraph{Implementing the sparsification routine.}
$\sparsify(\samp{S}^{(9n/10)},\rho)$ uses the subsampling-based techniques of
\cite{CohenLMMPS15}.
It is identical to the sparsification routine of \cite{KyngLPSS16},
except the recursive call to a linear solver uses $\bddChol$.
The routine first samples each multi-edge with
probability $\frac{1}{2 \cdot 10^{5} \rho}$ to produce a sparse matrix $S''$,
and then uses Johnson-Lindenstrauss-based leverage score estimation
(see \cite{CohenLMMPS15}) to compute IID samples with the desired $1/\rho$-bound
w.r.t. $L$.
The IID samples are summed to give the output matrix $S'$.
This requires approximately solving $\sqrt{\rho}$ systems of linear
equations in the sparse matrix $S''$.
To do so, $\sparsify$ first splits every edge of $S''$ into $\rho$
copies (increasing the number of multi-edges by a factor $\rho$),
then makes a single recursive call to $\bddChol(S'',\rho)$,
and then uses the resulting Cholesky decomposition $\sqrt{\rho}$ times
to compute approximate solutions to systems of linear equations.
One issue requires some care: The subsampling guarantees provided by
\cite{CohenLMMPS15} are with respect to $\samp{S}^{(9n/10)}$ and not $L$,
however, by using a {\truncSum} in our analysis, we can assume that
$\samp{S}^{(9n/10)}\preceq 2 L$.

\paragraph{Running time including sparsification.}
Finally, if we take account of time spent on calls to
$\sparsify$ and its recursive calls to $\bddChol(S'',\rho)$,
we get a time recursion for $\bddChol$ of 
\[
T(m) \leq 10^{7} \rho m+ 100 \rho^{0.5} \rho m
+2T(m/10),
\]
(assuming initially for $L$
that $m \geq \rho 2 \cdot 10^{6} n$),
which can be solved to give a running time bound of $O(m \rho^{1.5} + n \rho^{2.5} )
= O(m \log^{3} n+ n \log^{5} n)$.

Fixing $\rho$ at the top level ensures that a union bound
across all recursive calls in $\bddChol$ will give that the
approximate Cholesky decomposition obtains a $1/2$ factor spectral
approximation with high probability.

\begin{remark}
\label{rem:slightlyFaster}
By applying the sparsification approach described in this section
once, we can compute an approximate Cholesky decomposition of
Laplacian matrices in time 
$O(m \log^2 n \log\log n)$ time w.h.p.

We run $\concChol$ with a modification:
after eliminating all but $n/log^{100}n$
vertices, we do sparsification on the remaining graph with $\rho m$ multi-edges and $n/log^{100}n$
vertices. The call to $\concChol$ until sparsification will take $O(m
\log^2 n \log\log n)$ time w.h.p.
The sparsification is done using a modified version of the $\sparsify$
routine described above.
Instead of sampling each multi-edge with probability $\frac{1}{2 \cdot
  10^{5} \rho}$, we use a probability of $\frac{1}{\log^{8} n}$.
The recursive linear solve in $\sparsify$ can be done using
unmodified $\concChol$, and $O(\log n)$ linear system solves for
Johnson-Lindenstrauss leverage score estimation can be done using this
decomposition, all in $O(m+n)$ time.
The output graph $S'$ from $\sparsify$ can be Cholesky decomposed
using unmodified $\concChol$ as well, and this will take time $O(m+n)$.
In total, we get a running time and number of non-zeros bounded by
$O(m \log^2 n \log\log n)$ time w.h.p.
\end{remark}


\end{document}